\documentclass[a4paper,onecolumn,superscriptaddress,10pt,shorttitle=papers]{compositionalityarticle}
 \pdfoutput=1
\usepackage[utf8]{inputenc}
\usepackage[english]{babel}
\usepackage[T1]{fontenc}
\usepackage{etex}
\usepackage{hyperref}
\usepackage{float}

\usepackage{subfigure}
\usepackage{graphicx,subfigure}

\usepackage{comment}

\usepackage{tikz-network}
\usepackage{tikz}
	\usetikzlibrary{arrows}
	\usetikzlibrary{arrows.meta} 
 \usetikzlibrary{sbgn}
	\usetikzlibrary{calc}
	\usetikzlibrary{cd}
	\usetikzlibrary{positioning}


\usepackage[all]{xy}
\usepackage{lineno} 


\usepackage{enumerate}

\usepackage[thinc]{esdiff}

\usepackage[normalem]{ulem}
\usepackage{pdfpages}
\usepackage{graphicx}

\usepackage{url}

\usepackage[numbers,sort&compress]{natbib}
\usepackage[nameinlink]{cleveref}

\usepackage{xcolor}

\usepackage{mathtools}
\usepackage{mathrsfs}

\usepackage{amssymb}
\usepackage[mathscr]{eucal}
\usepackage{amsmath}
\usepackage{amscd}
\usepackage{amsthm}
\theoremstyle{definition}
\newtheorem{theorem}{Theorem}[section]
\newtheorem{definition}[theorem]{Definition}
\newtheorem{lemma}[theorem]{Lemma}
\newtheorem{proposition}[theorem]{Proposition}

\newtheorem{remark}[theorem]{Remark}
\newtheorem{example}[theorem]{Example}

\renewcommand{\arraystretch}{-1.7}

\numberwithin{equation}{section}

\newcommand{\lbr}{\lbrace}
\newcommand{\rbr}{\rbrace}
\newcommand{\mc}{\mathcal}
\newcommand{\mb}{\mathbb}

\newcommand{\ra}{\rightarrow}

\def\og{\leavevmode\raise.3ex\hbox{$\scriptscriptstyle\langle\!\langle$~}}
\def\fg{\leavevmode\raise.3ex\hbox{~$\!\scriptscriptstyle\,\rangle\!\rangle$}}

\title{A mathematical framework to study organising principles in graphical representations of biochemical processes}

\author[1]{Adittya Chaudhuri}
\email{adittya.chaudhuri@uni-rostock.de, chaudhuriadittya@gmail.com}
\orcid{0000-0002-1703-5889}
\affiliation[1]{University of Rostock, Institute of Computer Science, Rostock, Germany.}

\author[2,3]{Ralf Köhl}
\email{koehl@math.uni-kiel.de}
\orcid{0000-0003-0105-0029}
\affiliation[2]{Christian-Albrechts-University Mathematics Seminar, Kiel, Germany.}

\affiliation[3]{Kiel Nano, Surface and Interface Science, Christian-Albrechts-University, Kiel, Germany}

\author[4,5,6]{Olaf Wolkenhauer}
\email{olaf.wolkenhauer@uni-rostock.de}
\orcid{0000-0001-6105-2937}
\affiliation[4]{University of Rostock, Department of Systems Biology \& Bioinformatics, Rostock, Germany.}
\affiliation[5]{Leibniz-Institute for Food Systems Biology; Technical University of Munich, Freising; Germany}
\affiliation[6]{Stellenbosch Institute for Advanced Study, South Africa}


\begin{document}

\maketitle

\begin{abstract}
The complexity of molecular and cellular processes forces experimental studies to focus on subsystems. To study the functioning of biological systems across levels of structural and functional organisation, we require tools to compose and organise networks with different levels of detail and abstraction. Systems Biology Graphical Notation (SBGN) is a standardised notational system that visualises biochemical processes as networks. Despite their widespread adoption, SBGN languages remain purely visual and lack an underlying mathematical framework, limiting their compositional analysis, abstraction, and integration with formal modelling approaches. SBGN comprises three complementary visual languages—Process Description (SBGN-PD), Activity Flow (SBGN-AF), and Entity Relationship (SBGN-ER)-each operating at a different level of abstraction. 

In this manuscript, we introduce a category-theoretic formalism for SBGN-PD,  a visual language to describe biochemical processes as biochemical reaction networks. Using the theory of structured cospans, we construct a symmetric monoidal double category whose horizontal 1-morphisms correspond to SBGN-PD diagrams. We also analyse how a designated subnetwork influences the surrounding network and how external entities, in turn, affect the internal reactions of the subnetwork. Our work addresses a key gap between biological visualisation and mathematical structure. It provides precise organising principles for SBGN-PD, including compositionality, enabling the construction of large biochemical reaction networks from smaller ones, and zooming out, allowing the abstraction of detailed biochemical mechanisms while preserving their functional interfaces.  Throughout the paper, the proposed framework is illustrated using standard SBGN-PD examples, demonstrating its applicability to large-scale biochemical reaction networks.

\end{abstract}

\section{Introduction}\label{Sec:1}

Information about biological processes is available in databases, for which KEGG \cite{kanehisa_kegg_2024} is a widely known example. Various markup languages, like KEGG ML \cite{kyoto1995} or BioPax \cite{demir_biopax_2010}, have been developed to represent biological processes graphically. Encoding of biological processes as networks allows linking a graphical representation with molecular information, including references to the literature, links to gene and disease ontologies, and links to databases containing chemical and structural information. For quantitative analyses of biological processes, we need to translate the network representation to a mathematical model. To this end, markup languages like CellML \cite{CellMLproject}, PharmML \cite{https://doi.org/10.1002/psp4.57}, and SBML \cite{hucka_systems_2003, finney_systems_2003, keating_sbml_2020} allow the encoding of mathematical models in a standardized computational format. The BioModels database \cite{malik-sheriff_biomodels15_2019} is a repository providing over 1000 models of biological processes encoded in SBML. 

\textit{Systems Biology Graphical Notation} (SBGN) \cite{le_novere_systems_2009} has been developed
to promote an efficient, unambiguous exchange and reuse of biological information related to signalling pathways, metabolic networks, and gene regulatory networks within the scientific community . Over the years, SBGN has become a widely used standardised graphical notational system to visualise biological processes at different level of detail. 
SBGN offers three different but complementary visual languages, namely \textit{Process Description} (SBGN-PD) \cite{rougny_systems_2019}, \textit{Activity Flow} (SBGN-AF) \cite{mi_systems_2015}  and \textit{Entity Relationship} (SBGN-ER) \cite{sorokin_systems_2015}, each focussing on a different level of abstraction. SBGN-PD works at the detailed biochemical reaction level,
SBGN-AF represents the connections and interactions between biochemical entities in terms of information flow, and SBGN-ER shows how entities influence each other's actions and behaviours. However, SBGN is only a visual tool and is not a mathematical representation of biological processes.  There have been a few attempts to fill the gap between the SBGN visualisations and mathematical representations, including \cite{rougny2016qualitative}, which formalised SBGN-PD diagrams using asynchronous automata networks, or \cite{Loewe2011} using Hybrid Functional Petri Net (HFPN) and \cite{cherdal2018sbgn2hfpn} using textual representations (SBGNtext). However, all of these attempts fail to support the study of \textit{organising principles}, specifically,

\begin{itemize}
\item[(a)] formal ways to compose a collection of biochemical molecular/cellular networks into a composite network and decompose a large network into smaller subcomponents,
\item[(b)] formal ways to zoom in and zoom out details in a biochemical molecular/cellular network,
\item[(c)] formal compatibility features between the said compositionality and the said process of zooming-out and zooming-in details,
\item[(d)] formal ways to study how a particular portion of a network depends on the entities produced outside it, and conversely, how the network outside this particular portion depends on the entities produced inside the said portion.
\end{itemize}
To this day, most experimental studies will only be able to address subsystems, parts of a larger whole. There is, thus, a need to compose networks, and ideally, we must have tools available to study the organisation of large networks independent of the simulation framework chosen. Taking human diseases as an example, virtually all processes linked to a disease phenotype involve various cell types and many molecule types. The Atlas of Inflammation Resolution (AIR) \cite{SERHAN2020100894}  is an example where information for over twenty thousand reactions involved in the resolution of acute inflammation is gathered. Processes of this size are never studied as a whole. Experimental studies are usually focusing on, and are practically limited to, networks of relatively small sizes. Thus, especially when the target network size is very large, formal organisational principles, as stated above, and a formal framework of graphical representations, which abstracts from concrete modelling and simulation formalisms, are expected to be helpful for quantitative and qualitative analyses. 

  Our present manuscript uses Applied Category Theory (ACT), especially Baez et al.'s theory of structured cospans in the framework of symmetric monoidal double categories \cite{baez2020structuredcospans}  to develop the previously mentioned organising principles for biochemical molecular and cellular networks admitting SBGN-PD visualisations. The framework of symmetric monoidal double categories has been previously successfully used to study the composition of networks, where horizontal 1-morphisms represent subsystems with interfaces. Examples close to our goal are reaction networks modeled with Petri nets \cite{MR4085076,MR4483767,compositionality:13637}. For example, in \cite{compositionality:13637}, Aduddell et al. introduced a compositional framework for Petri nets with signed links to model regulatory biochemical networks. These works and the availability of large numbers of SBGN-encoded networks motivated our present effort. Since SBGN also allows features like compartments and submaps, we hope our formal compositional framework would provide us with a new perspective to study multilevelness in biological systems. With the increasing availability of SBGN-PD visualisations in biological databases, formal organisational principles for a generic SBGN-PD would provide biologists with generic methods to analyse the behaviour of a generic biochemical reaction network at multi scales, irrespective of the network size.
 
Before we move on to the paper's organisation, we say a few words about our choice to use Applied Category Theory for our goal. For the last decade or so, ACT has established itself as a Category Theory-based discipline in mathematics for studying the behaviour of large-scale systems by composing the behaviour of its subsystems. It has been successfully applied to a wide range of areas, including biochemical regulatory networks \cite{compositionality:13637}, chemical reaction networks \cite{MR3694082, MR4483767}, Markov processes \cite{MR3478745}, epidemiological modelling \cite{MR4489365, Baez_2023}, data structures \cite{MR4528606, MR4610076, althaus2023compositionalalgorithmscompositionaldata}, game theory \cite{MR3883754}, deterministic dynamical system \cite{Libkind_2022} etc., to name a few. In fact, ACT-based frameworks allow a level of abstraction or generalisation that encompasses a range of concrete modelling approaches like Petri Nets \cite{MR4085076,MR4483767, MR3694082}, ODEs \cite{MR4085076, MR3694082}, stochastic processes \cite{MR3478745}, graphs \cite{master2021composingbehaviorsnetworks, compositionality:13637}, to name a few, and their \textit{functorial interrelationships} (interrelationships which respect compositionality in a suitable way). Many of the ACT-based formalisations have also been successfully translated into user-friendly software for the purpose of computational studies via platforms like Algebraic Julia \cite{noauthor_algebraicjulia_nodate}. The framework presented in this paper contributes to these efforts, by focusing on graphical visualizations of biochemical processes.

\subsection{Structure of the Paper}
The paper is organised as follows. In Section \ref{Sec:2}, we illustrate our main results informally through some standard examples of biochemical reaction networks visualised in SBGN Process Description. We begin formally developing our theory in Section \ref{sec:3}. Subsection \ref{subsection:Process-network-and-process-species} introduces the notion of \textit{process networks} and \textit{process species}, which model biochemical reaction networks and biochemical reactions,  respectively, and illustrate these notions through some standard SBGN Process Descriptions. In Subsection \ref{Subsection:Transforming a process network into another: Morphisms of process networks}, we introduce the notion of a \textit{morphism of process networks} and derive some of their properties. Using SBGN-PD examples, we illustrate their interpretations, such as \textit{zoom-out} and \textit{zoom-in} details within a biochemical reaction network, and distinguishing networks of different types. Section \ref{Section4} forms the heart of our formal development. In Subsection \ref{subsection:category-of-process-networks}, we construct a category of process networks, and prove that it contains all finite colimits. We also compare our framework with the notion of a Petri net with link, a notion recently introduced by Aduddell et al in \cite{compositionality:13637} for modelling regulatory networks. We start Subsection \ref{Subsection:Process networks with interfaces as structured cospans} by introducing the notion of an \textit{open process network} or a \textit{process network with an interface} using the Baez et al.'s theory of structured cospans \cite{baez2020structuredcospans}. Then, using the results from Subsection \ref{subsection:category-of-process-networks},  we construct a symmetric monoidal double category whose horizontal 1-morphisms are open process networks. Furthermore, we make a pair of observations that enable us to illustrate various elements in the above symmetric monoidal double category with concrete examples visualised in SBGN-PD. In Subsection \ref{Subsection:Macroscope of a process network: A tool to study the influence of a process network on its environment and vice versa}, we introduce the notion of a \textit{process subnetwork of a process network} and its \textit{environment}. Next, we introduce a tool that we call a \textit{macroscope of a process subnetwork with respect to a network}. This notion allows us to study the influence of the process subnetwork on its environment and vice versa. We end the Section \ref{Section4} by applying our macroscope on a standard SBGN-PD. In Section \ref{Section5}, we illustrate via examples the  translation of concrete SBGN-PD digrams into process networks. Finally, in Section \ref{Sec:discussion-and-conclusion}, we summarize our achievements and discuss some future directions and limitations of the current state of our work.

\section{An illustration of our results via examples}\label{Sec:2}
This section attempts to illustrate the main achievements of this manuscript through some standard examples of biochemical reaction networks visualised in  SBGN Process Description. We postpone our concrete theoretical development until Section \ref{sec:3}.

We begin our treatment with an example of a biochemical reaction visualised as an SBGN Process Description (Figure \ref{fig:process-species} (top)), and the visualisation of its formal abstraction (Figure \ref{fig:process-species} (bottom)), as done in our framework in the later sections. The reaction in Figure \ref{fig:process-species} (top) describes the activation of the molecule ERK (\textit{Extracellular Signal-Regulated Kinase}) through phosphorylation (attaching a \textit{phosphate group} P). The process requires energy generated by breaking down ATP (\textit{Adenosine Triphosphate}), thereby releasing ADP (\textit{Adenosine Diphosphate}). The activation of ERK is facilitated by the phosphorylated MEK (\textit{Methyl Ethyl Ketone}). In SBGN Process Descriptions, \textit{entity pool nodes} like  macromolecules and simple chemicals  are visualised by rectangular glyphs with rounded corners and circles respectively. The small inserted circles visualize covalent modifications (like the state of phosphorylation), and the small squares represent process nodes describing biochemical processes. A connecting arc between a biochemical entity and the process node denotes \textit{consumption}, a connecting arc (from the process node to a biochemical entity) with a black arrowhead represents\textit{ production} and a connecting arc (from a biochemical entity to the process node) with a small white circular head denotes \textit{catalysis}. 

\begin{figure}[htb]
\centering
\includegraphics[width=14cm]{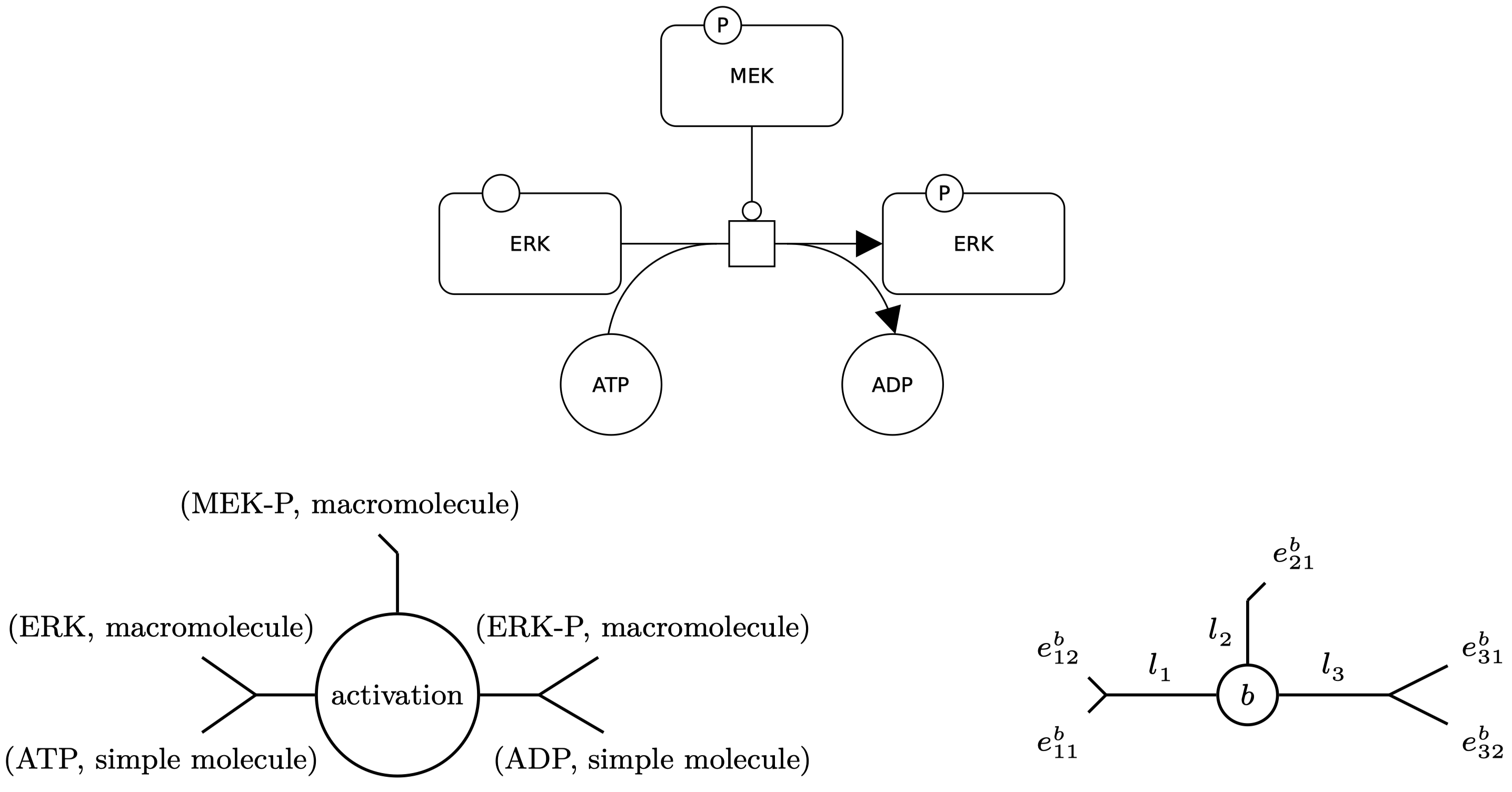}
\caption{Example of a SBGN Process Description  describing a biochemical reaction (top), and the visualization of its formal abstraction (bottom right) as done in our framework (Definition \ref{ARDefinition: Biochemical process networks}). In the bottom left, we illustrate a  translation of the SBGN-PD visualization to its abstraction, whose detail is explained  Example \ref{Example:MAPK}.}
\label{fig:process-species}
\end{figure}

The formal abstraction of SBGN-PD diagram, which we will define mathematically as a \textit{process species} in Definition \ref{ARDefinition: Biochemical process networks}, is visualised in the bottom right of Figure \ref{fig:process-species}. The process is drawn as a circle, labelled $b$, and the arcs labelled $l_1,l_2$ and $l_3$ model consumption, catalysis and production, respectively, of the process $b$. Symbols $e_{11}^{b}, e_{12}^{b}$ model respectively the simple chemical ATP and the macromolecule ERK, and are attached to the arc labelled $l_1$. The symbol $e_{21}^{b}$ models the macromolecule phosphorylated MEK which is attached to $l_2$. Symbols $e_{31}^{b}$ and  $e_{32}^{b}$ model the macromolecule phosphorylated ERK and the simple chemical ADP, respectively, and are attached to $l_3$. Furthermore, observe that the attachment of biochemical entities to arcs like $l_1$,$l_2$ and $l_3$ are shown with additional arcs. In the bottom left of Figure \ref{fig:process-species}, we illustrate the translation of the SBGN-PD (Figure \ref{fig:process-species} top) to our formal abstraction (bottom right of Figure \ref{fig:process-species}). Section \ref{Section5} discuss such translations in details.

Our theoretical framework (Section \ref{sec:3} and Section \ref{Section4}) provides us with organisational principles (as discussed in Section \ref{Sec:1}) for biochemical reaction networks admitting SBGN-PD visualisations.  We achieve such organisational principles through our two main results in this manuscript. Precisely, through Theorem \ref{ARMain Theorem1} we get (i) formal ways to compose a collection of biochemical molecular/cellular networks into a composite network and decompose a large network into smaller subcomponents, (ii) formal ways to zoom in and zoom out details in a biochemical molecular/cellular network, (iii) formal compatibility features between the said compositionality and the said process of zooming-out and zooming-in details. The formal study of how a particular portion of a network depends on the entities produced outside it, and conversely, how the network outside this particular portion depends on the entities produced inside the said portion, we obtain through Definition \ref{Defn:macroscope}.

\subsection{An illustration of Theorem \ref{ARMain Theorem1}}

In technical terms, Theorem \ref{ARMain Theorem1} produces a symmetric monoidal double category whose horizontal 1-morphisms can be interpreted as SBGN Process Descriptions,  and 2-morphisms can be interpreted as zoom-out or zoom-in operations on the details of a biochemical network visualised using a SBGN-PD. The composition laws and the monoidal product laws in the constructed symmetric monoidal double category provide us with a compositional framework for SBGN-PD  which remain compatible with our zoom-in and zoom-out operations. 

MAPK (\textit{Mitogen-Activated Protein Kinase}) is an intracellular signaling pathway activated by potentially multiple phosphorylations. It transduces external signals into specific cellular responses such as gene expression, proliferation, differentiation, or stress responses. In Figure \ref{fig:construction-of-cascade}, we illustrate how to build an SBGN-PD visualisation of the MAPK cascade in two different ways, viz.  by composing three reaction networks  ((a), (b) and (c)) and composing two reaction networks ((d) and (e)) using our Theorem \ref{ARMain Theorem1}. The dotted black lines denote compositions (interconnections). We can also formally compose (using the Theorem \ref{ARMain Theorem1}) the SBGN visualisation of the whole MAPK cascade with other biochemical networks to form a larger network, as visualized in Figure \ref{fig:construction-of-insulinsignalling}, where we demonstrate how to build insulin-like Growth Factor (IGF) signalling by composing the SBGN-PD visualisation of the MAPK cascade with two other biochemical molecular networks (marked with blue and grey) using our Theorem \ref{ARMain Theorem1}.

\begin{figure}[H]
\centering
\includegraphics[width=10cm]{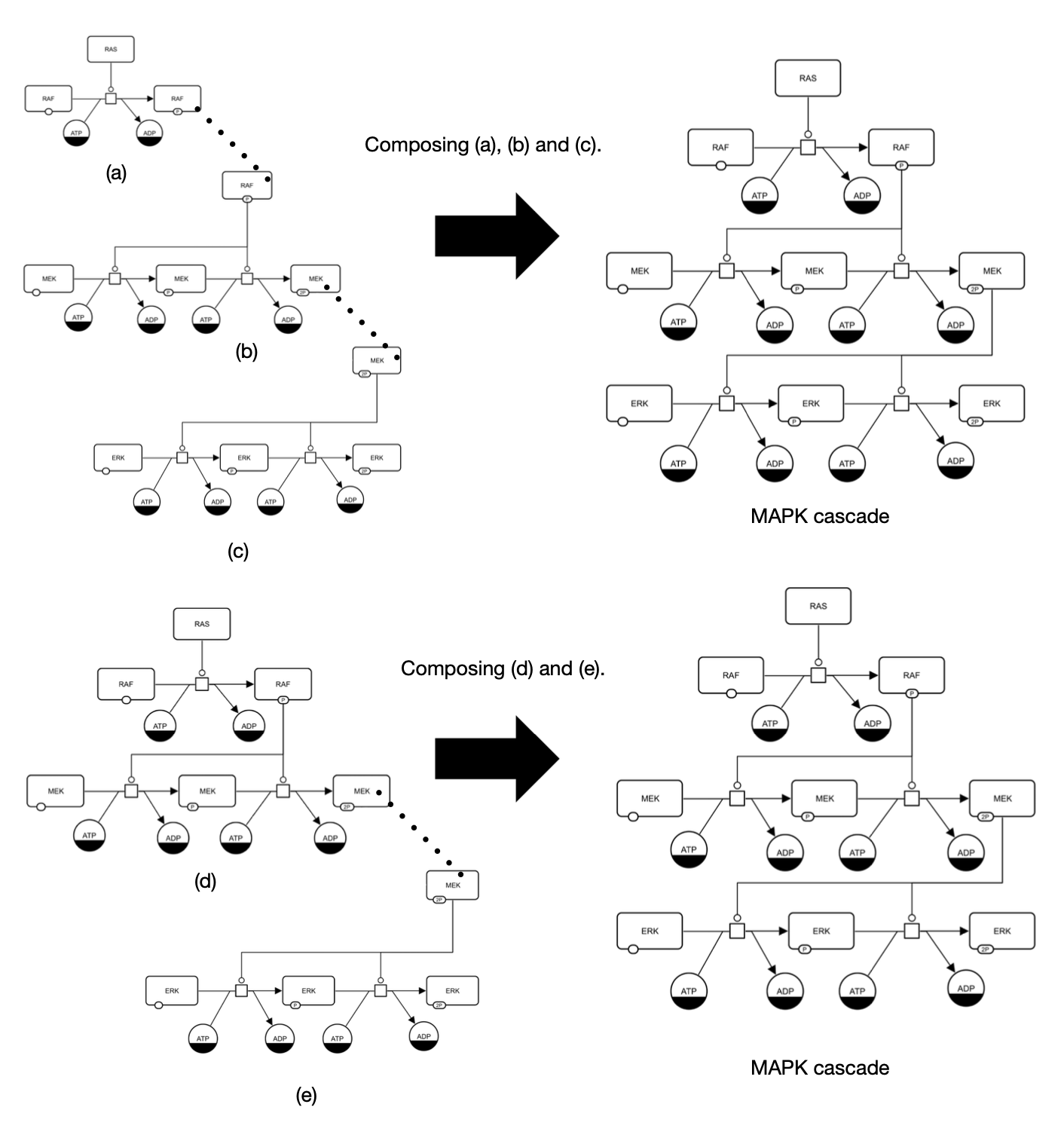}
\vspace*{-.5cm}
\caption{Illustration of building an SBGN-PD visualisation of the MAPK cascade by composing three reaction networks (a), (b) and (c), and two reaction networks (d) and (e), using the Theorem \ref{ARMain Theorem1}. SBGN images are derived from the MAPK cascade example on Page 65 in \cite{rougny_systems_2019}.}
\label{fig:construction-of-cascade}
\end{figure}

The dotted black lines denote compositions (interconnections). However, while visualizing the IGF signalling  using SBGN, usually an \textit{encapsulation node} called the \textit{submap} is used to hide the details of the MAPK cascade in the pathway, as shown in Figure \ref{fig:cascade-as-submap}. Here, the \textit{reference nodes} \textit{tags} show how the submap is connected to the rest part of the IGF signalling via the macromolecules RAS and ERK through \textit{equivalence arcs} (connecting the RAS and the tag RAS, and the ERK and the tag ERK).

\begin{figure}[htb]
\centering
\begin{center}
\includegraphics[width=10cm]{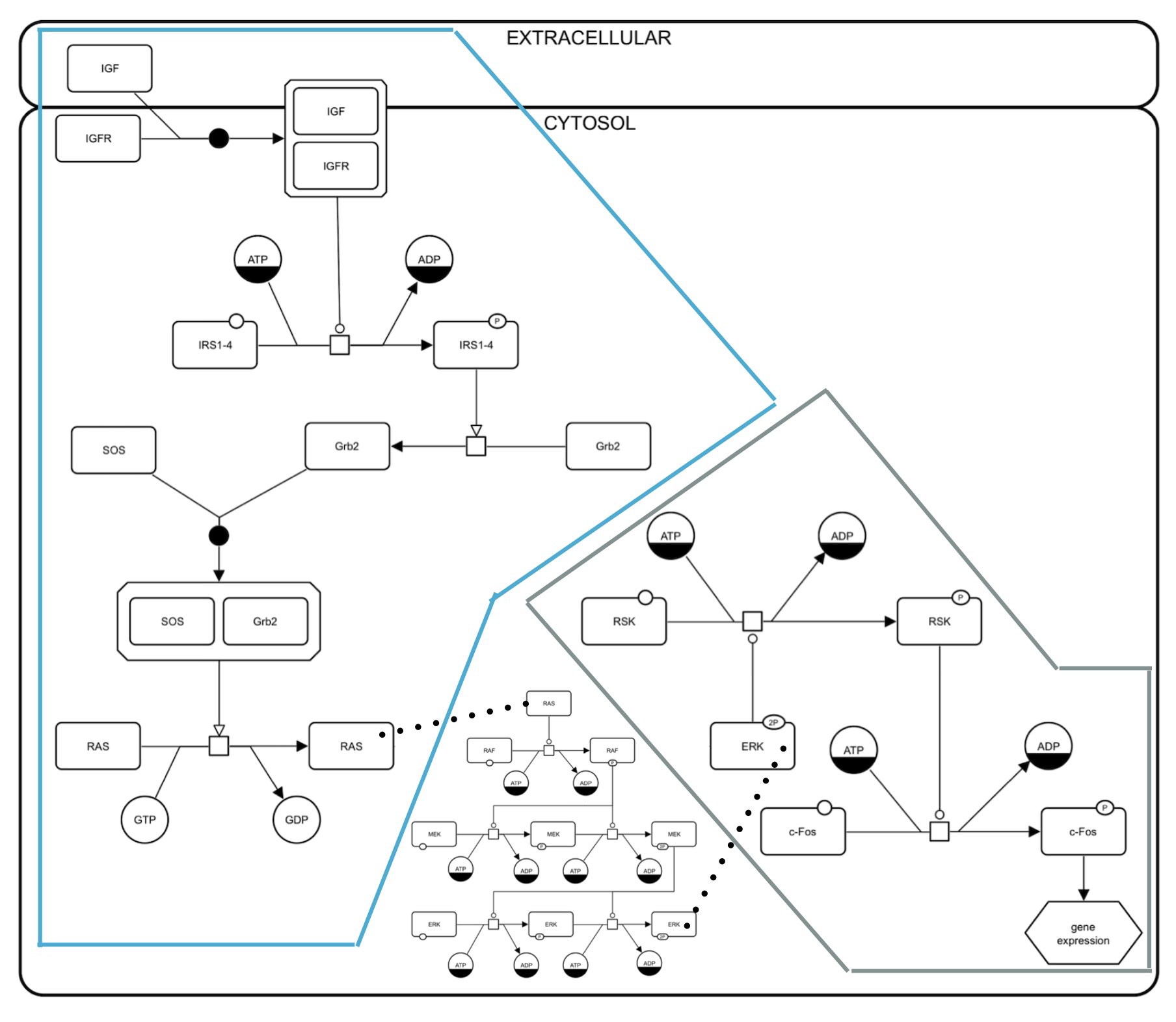}
\caption{An illustration of building an SBGN-PD visualisation of the Insulin-like Growth Factor (IGF) signalling by composing the MAPK cascade with two other biochemical molecular networks (marked in blue and grey) using the Theorem \ref{ARMain Theorem1}. SBGN images are derived from IGF signalling and MAPK cascade examples on Pages 64 and 65 in \cite{rougny_systems_2019}.}
\label{fig:construction-of-insulinsignalling}
\end{center}
\end{figure}

Observe that in Figure \ref{fig:construction-of-insulinsignalling}, we encounter some geometric shapes such as the ones highlighted in Figure \ref{fig:details-MAPK-insert}. According to SBGN PD language Level 1 Version 2.0 as in \cite{rougny_systems_2019}, we now briefly explain their meanings. The geometric shape Figure \ref{fig:details-MAPK-insert}(a) represents the process node \textit{association}. For example, in Figure \ref{fig:construction-of-insulinsignalling}, the macromolecules IGF and IGFR combine to form a \textit{complex} through association.

\begin{figure}[htb]
\centering
\begin{center}
\includegraphics[width=15cm]{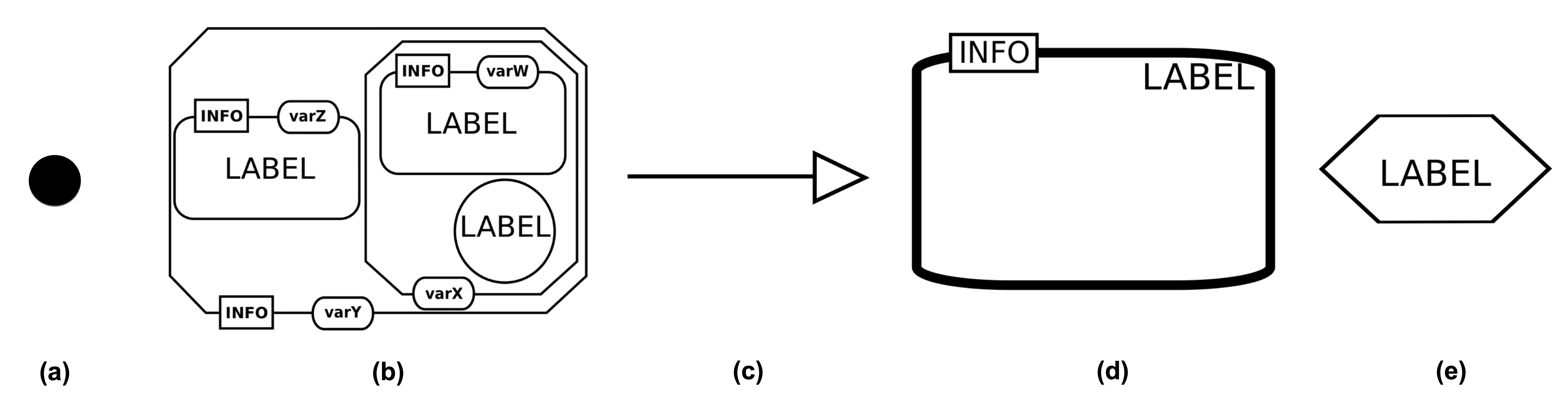}
\caption{Some geometric shapes used in Figure \ref{fig:construction-of-insulinsignalling}. Here, (a), (b), (c), (d) and (e), respectively, denote association, complex node, stimulation arc, compartment node and phenotype. SBGN images are derived from the reference card on Page 75 in \cite{rougny_systems_2019}.}
\label{fig:details-MAPK-insert}
\end{center}
\end{figure}

The geometric shape Figure \ref{fig:details-MAPK-insert}(b) denotes a \textit{complex node}, a biochemical entity comprising other biochemical entities like macromolecules, simple chemicals, multimers, or other complexes, connected by non-covalent bonds. In SBGN Process Descriptions, the stimulation  of a process by a biochemical entity is denoted by a connecting arc (from an entity to the process node) with a white arrowhead (see Figure \ref{fig:details-MAPK-insert}(c)). For example, in Figure \ref{fig:construction-of-insulinsignalling}, the macromolecule IRS1-4 stimulates the activation of the macromolecule Grb2. In SBGN, the geometric shape Figure \ref{fig:details-MAPK-insert}(d) denotes a \textit{compartment node}, representing a logical or physical structure. Every entity pool node, such as macromolecule, simple chemical, complex, etc., belongs to a compartment. Two identical entity pool nodes located in different compartments are considered different entities. In Figure \ref{fig:construction-of-insulinsignalling}, two compartments are used: extracellular and cytosol. In SBGN, geometric shape Figure \ref{fig:details-MAPK-insert}(e) denotes a \textit{phenotype}. In Figure \ref{fig:construction-of-insulinsignalling}, we have the phenotype \textit{gene expression}.

\begin{figure}[htb]
\begin{center}
\includegraphics[width=10 cm]{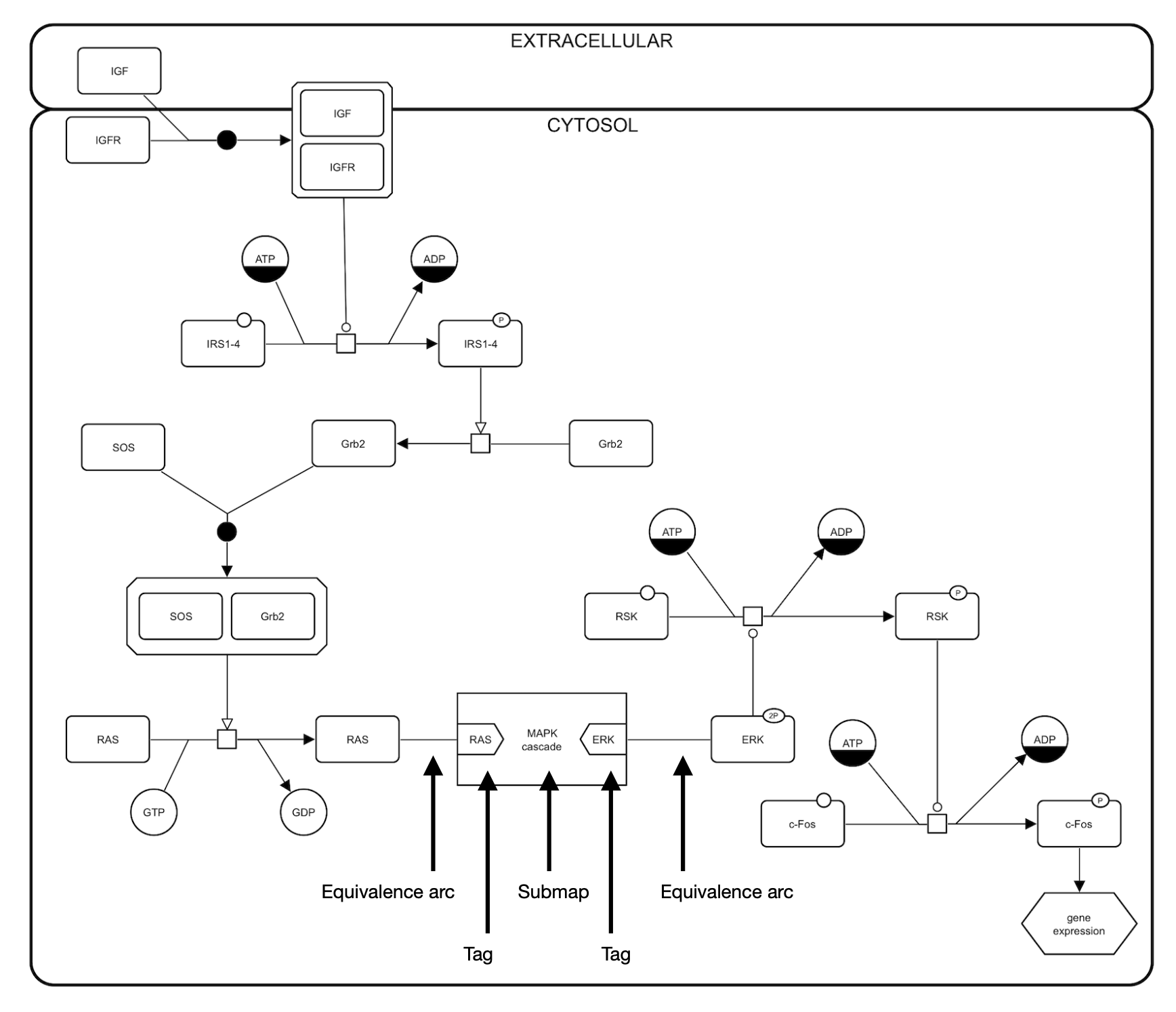}
\caption{An illustration of the encapsulation node submap, the reference node tag and the equivalence arc in the SBGN-PD visualisation of the IGF signalling. The SBGN image is taken from the example of IGF signalling on Page 64, \cite{rougny_systems_2019}.}
\label{fig:cascade-as-submap}
\end{center}
\end{figure}
Often to study complicated biochemical reaction networks, we purposefully omit details to obtain a broader view of the whole reaction network. Our Theorem \ref{ARMain Theorem1} provides us with a formal way to forget details from biochemical reaction networks such that the forgetting procedure (zooming-out procedure) is compatible with our formal compositional framework, as is illustrated in Figure \ref{fig:abstraction-of-structural-details}.  Here, we start with the SBGN-PD visualisations of two biochemical reactions numbered Figure \ref{fig:abstraction-of-structural-details}(1) and Figure \ref{fig:abstraction-of-structural-details}(2). Our Theorem \ref{ARMain Theorem1} allows us to zoom-out (shown with the thin black arrows) by forgetting ADP's and ATP's from the reactions, and in turn, we obtain the reactions Figure \ref{fig:abstraction-of-structural-details}(3) from Figure \ref{fig:abstraction-of-structural-details}(1), and Figure \ref{fig:abstraction-of-structural-details}(4) from Figure \ref{fig:abstraction-of-structural-details}(2). Then, again the Theorem \ref{ARMain Theorem1} let us combine Figure \ref{fig:abstraction-of-structural-details}(3) and Figure \ref{fig:abstraction-of-structural-details}(4)  to obtain the reaction network Figure \ref{fig:abstraction-of-structural-details}(6), and combine the reactions Figure \ref{fig:abstraction-of-structural-details}(1) and Figure \ref{fig:abstraction-of-structural-details}(2) to get the reaction network Figure \ref{fig:abstraction-of-structural-details}(5). More interestingly, Theorem \ref{ARMain Theorem1}   provides us with a canonical way of combining two zooming-out procedures $\big($ Figure \ref{fig:abstraction-of-structural-details}(1) to Figure \ref{fig:abstraction-of-structural-details}(3) and Figure \ref{fig:abstraction-of-structural-details}(2) to Figure \ref{fig:abstraction-of-structural-details}(4)$ \big)$ such that the combined zoom-out procedure is compatible with the composition process. More precisely, here, the combined zoom-out process takes the reaction network Figure \ref{fig:abstraction-of-structural-details}(5) to the reaction network Figure \ref{fig:abstraction-of-structural-details}(6). Here, the dotted lines denote compositions. We provide the technical details of the above-mentioned procedure in Example \ref{Example:horizontal-composition-of-2-morphisms}.


\begin{figure}[H]
\begin{center}
		\includegraphics[width=15.5cm]{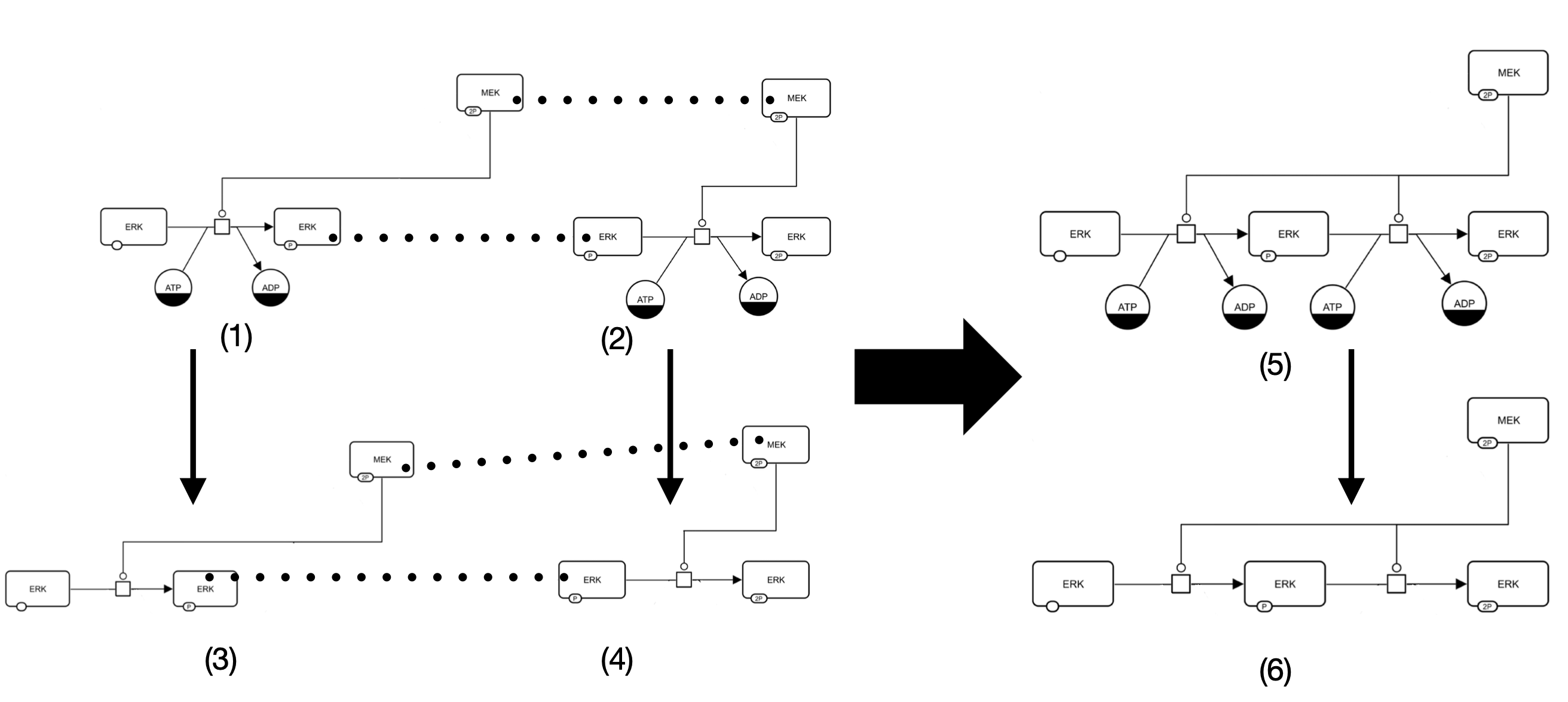}
  \caption{An illustration of formally zooming-out details in a biochemical reaction network using the Theorem \ref{ARMain Theorem1}. This figure, in particular, demonstrates how Theorem \ref{ARMain Theorem1} provides a compatibility between the formal zooming-out methods and the compositionality. SBGN images are derived from the MAPK cascade example on Page 65, \cite{rougny_systems_2019}.}
  \label{fig:abstraction-of-structural-details}
\end{center}
\end{figure}

\subsection{An illustration of Definition \ref{Defn:macroscope}}

Often, especially for disease-related purposes, it is essential to see how a particular portion of a biochemical network gets affected by the remaining part of the network and the converse, i.e. how that particular portion affects the rest of the network. We introduce a mathematical technique called a \textit{macroscope} (Definition \ref{Defn:macroscope}), which allows us to formalise such effects. We choose the name macroscope because it allows us to see the overall effect of a biochemical reaction network on its  surrounding network and vice versa. We illustrate the macroscope in Figure \ref{Figure:Macroscopeoverview}.

\begin{figure}[H]
\begin{center}
\includegraphics[width=5cm]{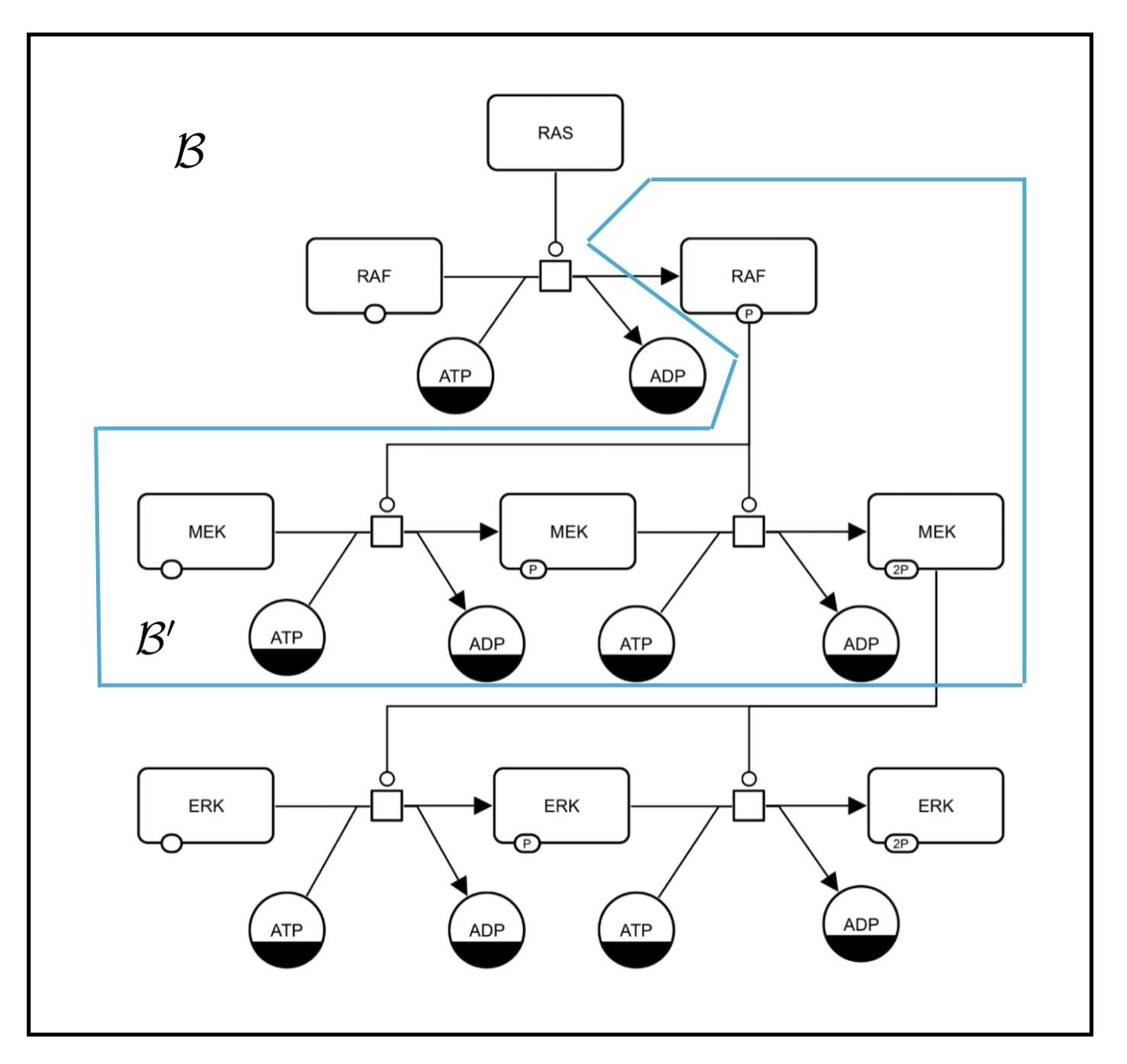}
\caption{An illustration of the macroscope, formally introduced in Definition \ref{Defn:macroscope}. Applying the macroscope in the blue marked portion, we can conclude that the outer portion is catalysing the blue marked portion through the phosphorylated RAF, and in turn, the blue marked portion is catalysing the outer portion through the double phosphorylated MEK. The SBGN image is taken from the MAPK cascade example on Page 65 in \cite{rougny_systems_2019}.}
\label{Figure:Macroscopeoverview}
\end{center}
\end{figure}

\section{Process species, process networks, and their transformations}\label{sec:3}
This section introduces mathematical structures called \textit{process species}, \textit{process network} and \textit{morphisms of process networks}, which model, respectively, a biochemical reaction, biochemical reaction network and a process of transforming a biochemical reaction network into another.

\subsection{Process networks and process species}\label{subsection:Process-network-and-process-species}

\begin{definition}[{Process network and process species}]\label{ARDefinition: Biochemical process networks}
Let $n \in \mathbb{N}$. A \textit{process network} $\mc{B}= (E, B, \lbr l_i \rbr_{n} )$ on a finite set $E$ of \textit{entities} \textit{with $n$ legs} consists of the following: 
	\begin{itemize}
 \item A finite set $B$, whose elements $b$ are called \textit{process species}.
		\item A finite set of functions $\mc{L}:= \lbrace l_i \colon B \ra \mathbb{B}[E] \rbrace_{i \in \lbr 1,2, \ldots, n \rbr}$, where $\mathbb{B}[E]$ is the free commutative monoid on the set $E$ with the coefficients from the \textit{additive Boolean monoid} $\mathbb{B}= \big(\lbr 0,1 \rbr, + \big)$, whose multiplication table is given as follows:

\vskip 0.5em
\renewcommand\arraystretch{1.3}
\setlength\doublerulesep{0pt}
\begin{center}
\begin{tabular}{c||c|c}
+ & 0 & 1 \\
\hline\hline 
0 & 0 & 1 \\ 
\hline
1 & 1 & 1 \\ 
\end{tabular}
\end{center}
	\end{itemize}
  We call $l_i(b)$ as the \textit{$i$-th leg of the process species $b$}. 
\end{definition}

\begin{definition}[{Evaluation function of a process species}]\label{Defn:Evaluation-function-of-a-process-species}
Given a process network $(E, B, \lbr l_i \rbr_{n} )$, for any process species $b \in B$, we will call the function $b_{{\rm{legs}}} \colon \mc{L} \ra \mathbb{B}[E], l_i \mapsto l_i(b)$, the \textit{evaluation function of the process species $b$}, where $\mc{L}= \lbr l_1, l_2, \ldots, l_n \rbr$.
\end{definition}
We say that the \textit{$i$-th leg of the process species $b$ is missing} if $b_{{\rm{legs}}}(l_i)=0$. 

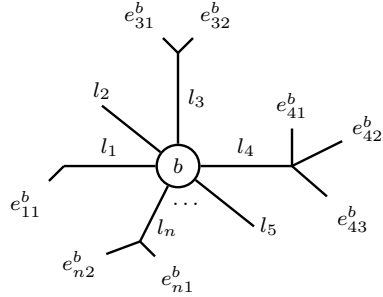
\begin{figure}[htb]
\begin{center}
\begin{tikzpicture}[font={\footnotesize}, line width = 0.9pt]
%
%
\node (b) at (3.5,3) [circle, draw] {$b$};
\coordinate (l1) at (2,3);
\node (e11) at (1.5,2.5) {$e^{b}_{11}$};
\draw[-] (e11) -- (l1);
\draw[-] (l1) -- (b) node[pos=0.5, above] {$l_1$};
\coordinate (l2) at (2.5,3.8);
\draw[-] (b) -- (l2) node[xshift=0mm, yshift=2mm] {$l_2$};
\coordinate (l3) at (3.5,4.5);
\draw[-] (l3) -- (b) node[pos=0.5, right] {$l_3$};
\node (e31) at (3,5) {$e^{b}_{31}$};
\node (e32) at (4,5) {$e^{b}_{32}$};
\draw[-] (e31) -- (l3);
\draw[-] (e32) -- (l3);
\coordinate (l4) at (5,3);
\draw[-] (l4) -- (b) node[pos=0.5, above] {$l_4$};
\node (e41) at (5,3.8) {$e^{b}_{41}$};
\node (e42) at (6,3.5) {$e^{b}_{42}$};
\node (e43) at (5.8,2.3) {$e^{b}_{43}$};
\draw[-] (l4) -- (e41);
\draw[-] (l4) -- (e42);
\draw[-] (l4) -- (e43);
\coordinate (l5) at (4.5,2.2);
\draw[-] (b) -- (l5) node[xshift=2mm] {$l_5$};
\node (dots) at (3.6,2.5) {$\ldots$};
\coordinate (ln) at (3,2);
\draw[-] (b) -- (ln) node[pos=0.75, right] {$l_n$};
\node (en1) at (3.5,1.5) {$e^{b}_{n1}$};
\draw[-] (ln) -- (en1);
\node (en2) at (2.2,1.7) {$e^{b}_{n2}$};
\draw[-] (ln) -- (en2);
\end{tikzpicture}
\end{center}
\caption{Evaluation function $b_{{\rm{legs}}} \colon \mc{L} \ra \mathbb{B}[E]$ for a process species $b.$ Observe that we draw the legs $l_2,l_5, l_6, \ldots, l_{n-1},$ which are evaluated zero at $b$, but we do not attach entities to them.}
\label{fig:notation-process-species}
\end{figure}

\begin{remark}\label{ARRemark: Relevant entities}
Consider a process network $\mc{B}=(E, B, \lbr l_i \rbr_{n} )$. Observe that for any $b \in B$ and $i \in \lbr 1,2, \ldots, n \rbr$ such that  $l_i(b) \neq 0$, there is a unique way to write $l_i(b)= \Sigma^{j=m^{i}_b}_{j=1}e^{b}_{ij}$ with distinct non-zero summands, where $m^{i}_b \in \mb{N}$ and $e^{b}_{ij} \in E$. Let us denote the set $\lbr e^{b}_{ij}\colon l_i(b)= \Sigma^{j=m^{i}_b}_{j=1}e^{b}_{ij}  \rbr$ as $\bar{e}_{i,b}$. Note that $|\bar{e}_{i,b}|= m^{i}_b$, where $|\bar{e}_{i,b}|$ denotes the cardinality of the set $\bar{e}_{i,b}$.   When $l_i(b)=0$, we define  $\bar{e}_{i,b}$ as the empty set $\emptyset$. 
\end{remark}
A general element $x$ in $\mathbb{B}[E]$ is a formal linear combination of the form $x= \alpha_1e_1 + \alpha_2e_2 + \cdots + \alpha_ne_n$, where $\alpha_i \in \mathbb{B}$ and $e_i \in E$ for all $i=1,2, \ldots, n$ and $n \in \mb{N}$ .
Thus, we define the evaluation function $b_{{\rm{legs}}}$ for Figure \ref{fig:notation-process-species} as
\begin{equation}\label{eq:notation-process-species}
\begin{split}
&b_{{\rm{legs}}}(l_1)= 1 \cdot e^{b}_{11} + \big(\sum_{e \in \big(E-\lbr e^{b}_{11}\rbr\big)} 0 \cdot e \big)= e^{b}_{11} \\
&b_{{\rm{legs}}}(l_2)= \big(\sum_{e \in E} 0 \cdot e \big)=0 \\
&b_{{\rm{legs}}}(l_3) = 1 \cdot e^{b}_{31} + 1 \cdot e^{b}_{32} +  \big(\sum_{e \in \big( E-\lbr e^{b}_{31}, e^{b}_{32}\rbr \big)} 0 \cdot e \big)= e^{b}_{31} + e^{b}_{32}\\
& b_{{\rm{legs}}}(l_4)= 1 \cdot e^{b}_{41} + 1 \cdot e^{b}_{42} +  1 \cdot e^{b}_{43} + \big(\sum_{e \in \big(E-\lbr e^{b}_{41}, e^{b}_{42}, e^{b}_{43} \rbr \big)} 0 \cdot e \big)= e^{b}_{41} + e^{b}_{42} + e^{b}_{43}\\
& b_{{\rm{legs}}}(l_k)= \big(\sum_{e \in E} 0 \cdot e \big)=0\,\, \text{for all}\,\, k= 5, 6, \ldots, n-1\\
& b_{{\rm{legs}}}(l_n)= 1 \cdot e^{b}_{n1} + 1 \cdot e^{b}_{n2} +  \big(\sum_{e \in \big(E-\lbr e^{b}_{n1}, e^{b}_{n2}\rbr \big)} 0 \cdot e \big)= e^{b}_{n1} + e^{b}_{n2}
\end{split}
\end{equation}
In \Cref{eq:notation-process-species}, the coefficient $1$ and $0$ expresses, respectively, the presence and absence of the entities. For example, the expression $b_{{\rm{legs}}}(l_3)= e^{b}_{31} + e^{b}_{32}$ says that the only entities that are associated to the leg $l_3$ of the process species $b$ are $e^{b}_{31}$ and $e^{b}_{32}$. Our definition of process network is motivated by the definition of a Petri net as in \cite{MR4085076}. However, the above definition is tailored to our purpose. There are two main differences in our definition. First, instead of just a pair of maps (source and target), we have room for more maps $\lbr l_i\rbr_{n}$ to consider modulators like stimulation, catalysis, inhibition, necessary stimulation, etc. Secondly, the maps $\lbr l_i\rbr_{n}$ are valued in $\mathbb{B}[E]$ instead of $\mb{N}[E]$. This emphasises the fact that often one cannot quantify the exact number of molecules involved in a reaction. In that case, one only measures the presence or absence, and assumes that the presence means presence in abundance. To take account of it, in SBGN visualisation of a biochemical reaction network, precise stoichiometry is absent. With this motivation, in our framework, we consider only the presence or absence of entities given by the coefficients $1$ and $0$, respectively, coming from the additive Boolean monoid $\mathbb{B}$.


\subsection{Examples of process networks}\label{subsection:Example-of-process-networks}
In Figure \ref{fig:process-species}, we consider the SBGN-PD visualisation of the biochemical reaction, where a molecule MEK-P modulates the activation of ERK into ERK-P. Using notations from Definition \ref{ARDefinition: Biochemical process networks}, we construct an associated process network $\mc{B}=(E, B, \lbr l_i \rbr_{3} )$ as follows:
\begin{itemize}
\item $B := \lbr b \rbr$, the singleton set containing the process species.

\item $\mc{L}$ has three elements $l_1, l_2, l_3$, representing consumption arc, modulation arc and production arc, respectively, in Figure \ref{fig:process-species}.

\item $E:=\lbr e^{b}_{11},e^{b}_{12}, e^{b}_{21}, e^{b}_{31}, e^{b}_{32}  \rbr$, where 
\begin{itemize}
\item $e^{b}_{11}=(\text{ATP, simple chemical})$,
\item $e^{b}_{12}=(\text{ERK, macromolecule})$,
\item $e^{b}_{21}=(\text{MEK-P, macromolecule})$,
\item  $e^{b}_{31}=(\text{ERK-P, macromolecule})$,
\item $e^{b}_{32}=(\text{ADP, simple chemical}).$
\end{itemize}
 If we represent the \textit{set of  molecules} $\lbr\text{MEK-P, ERK, ERK-P, ATP, ADP}\rbr$ by $M$ and the \textit{set of types of molecules involved} $\lbr\text{macromolecule, simple chemical}\rbr$ by $T$, then the set $E= M \times T$.
 
\item Legs $l_1,l_2,l_3 \colon B \ra \mathbb{B}[E]$ are defined as $b \mapsto e^{b}_{11} + e^{b}_{12}$, $b \mapsto e^{b}_{21}$ and $b \mapsto e^{b}_{31} + e^{b}_{32}$, respectively.
\end{itemize}
Hence, $\mc{B}=(E, B, \lbr l_i \rbr_{3} )$ defines a process network on the set $E$ with three legs. Comparing with Figure \ref{fig:notation-process-species}, observe that the bottom right diagram in Figure \ref{fig:process-species} describes the evaluation function $b_{{\rm{legs}}} \colon \mc{L} \ra \mathbb{B}[E]$ of the process species $b$.

In Figure \ref{Figure:Processnetworkwithtwoprocessspecies},  we demonstrate an example of a process network with two process species.

\begin{figure}[htb]
\begin{center}
\includegraphics[width=10cm]{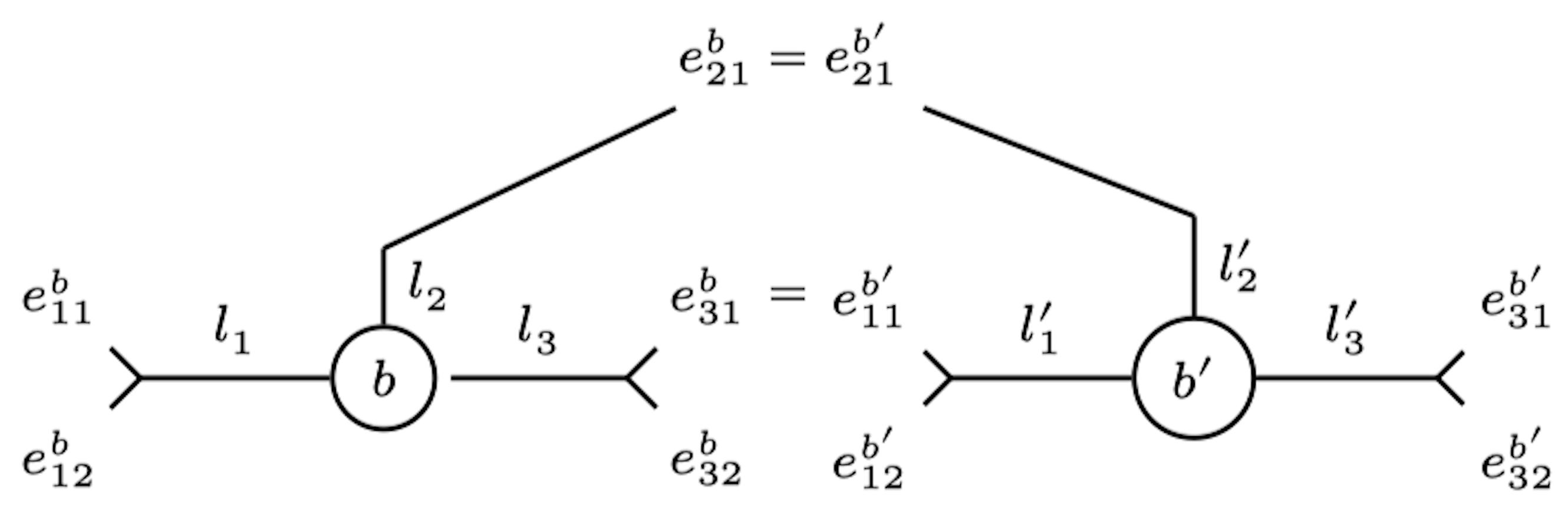}
\caption{A process network with two process species}
\label{Figure:Processnetworkwithtwoprocessspecies}
\end{center}
\end{figure}

\subsection{Transforming a process network into another: Morphisms of process networks}\label{Subsection:Transforming a process network into another: Morphisms of process networks}
Here, we introduce the notion of a morphism of process networks and illustrate how it models the process of transforming a biochemical reaction network into another.

\begin{definition}[Morphism of process networks]\label{Defn:morphism-of-process-networks}
For a fixed $n \in \mb{N}$, a \textit{morphism from a process network $\mc{B}=(E, B, \lbr l_i \rbr_{n})$ to another process network $\mc{B}'=(E', B', \lbr l_i' \rbr_{n})$} is given by a pair of functions $F:=(\alpha \colon E \ra E', \beta \colon B \ra B' )$, such that following diagram commutes for all $i \in \lbr 1,2,3, \ldots, n \rbr$,	where $\mathbb{B} {\rm{[}}\alpha {\rm{]}} \colon  \mathbb{B}[E] \ra  \mathbb{B}[E'] $ is the unique monoid homomorphism extending $\alpha$.
          \[
			\begin{tikzcd}[sep=small]
				B \arrow[rr, "l_i"] \arrow[dd,xshift=0.75ex,"\beta"]&  & \mathbb{B} [E] \arrow[dd,xshift=0.75ex,"\mathbb{B} {\rm{[}}\alpha {\rm{]}}"] \\
				&  &                \\
				B' \arrow[rr, "l'_i"]            &  & \mathbb{B}[E']   
			\end{tikzcd}\]
\end{definition}
Next, we show how the evaluation functions of process species (Definition \ref{Defn:Evaluation-function-of-a-process-species}) behaves with the morphisms of process networks. We will see how their behaviour provides us with formal criteria for distinguishing between reactions with the modulator's influence and the ones without such influences.  

\begin{lemma}\label{Lemma:behaviour-of-evaluation-functions-of-a-process-network}
Let $\mc{B}=(E, B, \lbr l_i \rbr_{n})$ and $\mc{B}'= (E', B', \lbr l'_i \rbr_{n})$ be a pair of process networks. Then, for any morphism $F= (\alpha, \beta) \colon \mc{B} \ra \mc{B}'$ the following hold:
\begin{itemize}
\item[(a)] if $b_{\rm{legs}}(l_i)=0$, then $\beta(b)_{{\rm{legs}}}(l'_i)=0$,
\item[(b)] if  $b_{\rm{legs}}(l_i) \neq 0$, then $\beta(b)_{{\rm{legs}}}(l'_i) \neq 0$,
\item[(c)] if $b_{\rm{legs}}(l_i) \neq 0$, then using the notations introduced in Remark \ref{ARRemark: Relevant entities}, we have  $|\bar{e}_{i,b}| \geq|\bar{e}_{i, \beta(b)}|$,
\end{itemize}
for each $b \in B$ and $i \in \lbr 1,2, \ldots, n \rbr$, where $b_{{\rm{legs}}}$ and $\beta(b)_{{\rm{legs}}}$ are the evaluation functions (see Definition \ref{Defn:Evaluation-function-of-a-process-species}) of the process species $b$ and $\beta(b)$ respectively.
\end{lemma}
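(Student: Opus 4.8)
The plan is to deduce all three parts from a single identity describing how the monoid homomorphism $\bar{\mb{Z}}_2[\alpha]$ interacts with supports. Because every coefficient of $\bar{\mb{Z}}_2$ lies in $\lbr 0,1 \rbr$ and addition is the maximum, an element $x \in \bar{\mb{Z}}_2[E]$ is completely determined by its support $\mathrm{supp}(x) \subseteq E$, the set of entities occurring with coefficient $1$; in the notation of Remark \ref{ARRemark: Relevant entities} this says $\mathrm{supp}(b_{\rm{legs}}(l_i)) = \bar{e}_{i,b}$. Under this identification, addition of monoid elements corresponds to union of supports, so $\bar{\mb{Z}}_2[E]$ behaves like the monoid of finite subsets of $E$ under union.

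First I would establish the support formula
\[
\mathrm{supp}\big(\bar{\mb{Z}}_2[\alpha](x)\big) = \alpha\big(\mathrm{supp}(x)\big)
\]
for every $x \in \bar{\mb{Z}}_2[E]$. Writing $x = \sum_{e \in \mathrm{supp}(x)} e$ and applying the homomorphism, the coefficient of a generator $e' \in E'$ in $\bar{\mb{Z}}_2[\alpha](x)$ is the maximum of the coefficients of those $e \in \mathrm{supp}(x)$ with $\alpha(e)=e'$; this maximum equals $1$ precisely when $e' \in \alpha(\mathrm{supp}(x))$ and $0$ otherwise, which is the claimed equality.

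Next I would feed in the defining commutative square of a morphism (Definition \ref{Defn:morphism-of-process-networks}), namely $l'_i \circ \beta = \bar{\mb{Z}}_2[\alpha] \circ l_i$, rewritten as $\beta(b)_{\rm{legs}}(l'_i) = \bar{\mb{Z}}_2[\alpha]\big(b_{\rm{legs}}(l_i)\big)$. Taking supports of both sides and using the formula above yields the key identity $\bar{e}_{i,\beta(b)} = \alpha(\bar{e}_{i,b})$, valid for each $b \in B$ and each $i$. From this all three statements follow immediately: part (a) holds because $b_{\rm{legs}}(l_i)=0$ means $\bar{e}_{i,b}=\emptyset$, whence $\bar{e}_{i,\beta(b)} = \alpha(\emptyset) = \emptyset$; part (b) holds because the image of a nonempty set under the function $\alpha$ is nonempty; and part (c) holds because $|\alpha(\bar{e}_{i,b})| \leq |\bar{e}_{i,b}|$ for any function $\alpha$.

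The only step demanding genuine care is the support formula, and the point to emphasise is that it relies on $\bar{\mb{Z}}_2$ being \emph{idempotent} rather than being $\mb{N}$. If two distinct entities $e_1 \neq e_2$ of $\bar{e}_{i,b}$ satisfy $\alpha(e_1) = \alpha(e_2)$, then over $\mb{N}[E']$ their images would accumulate a coefficient $2$ and the cardinality would be preserved, whereas here the maximum collapses them to a single generator. This collapse is exactly the mechanism that can make the inequality in (c) strict, and it is worth noting that nothing in the argument uses any property of $\beta$ beyond the commutativity of the square.
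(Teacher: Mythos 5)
Your proposal is correct and takes essentially the same route as the paper, whose entire proof is the one-line assertion that the lemma follows directly from the commutative square in Definition \ref{Defn:morphism-of-process-networks}; your support formula $\mathrm{supp}\big(\bar{\mb{Z}}_2[\alpha](x)\big)=\alpha\big(\mathrm{supp}(x)\big)$, combined with $\bar{e}_{i,\beta(b)}=\alpha(\bar{e}_{i,b})$, is precisely the detail the paper leaves implicit, and (a), (b), (c) follow from it as you say. One peripheral inaccuracy in your closing aside: over $\mb{N}[E']$ the support cardinality would also drop when $\alpha$ identifies entities (what is preserved there is the total coefficient sum, not $|\bar{e}_{i,b}|$), so the idempotency of $\bar{\mb{Z}}_2$ is not actually what makes the inequality in (c) possibly strict — but this remark plays no role in your argument, which stands as written.
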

\begin{proof}
The proof follows directly from the commutative diagram in the Definition \ref{Defn:morphism-of-process-networks}.
\end{proof}
Now, let us consider a process network $\mc{B}=(E, \lbr b\rbr, \lbr l_i \rbr_{3})$ with 3 legs, where 
\begin{itemize}
\item $l_1$ denotes the \textit{input leg}, i.e.  $l_1(b) \in \mathbb{B}[E]$ contains the information of the substrates that goes into the process species (reaction) $b$,
\item $l_2$ denotes a \textit{modulation leg} (eg. activation or  inhibition), i.e. $l_2(b) \in \mathbb{B}[E]$ contains the information of the biomolecules  that modulate (activate or inhibit) the reaction $b$,
\item $l_3$ denotes the \textit{production leg}, i.e.  $l_3(b) \in \mathbb{B}[E]$ contains the information of the metabolites produced in the reaction $b$.
\end{itemize}
 Now, let us assume $l_{2}(b)=0$. This means that the reaction $b$ occurs without any modulator's influence. Now, consider another process network $\mc{B}'=(E', \lbr b' \rbr, \lbr l'_i \rbr_{3})$ with 3 legs such that $l'_{2}(b') \neq 0$, that is the reaction $b'$ occurs with modulators' influence. Then, by the condition (a) and (b) in Lemma \ref{Lemma:behaviour-of-evaluation-functions-of-a-process-network} respectively, it is obvious we can not expect a morphism  $\mc{B}$ to $\mc{B}'$ and from $\mc{B}'$ to $\mc{B}$. See Figure \ref{Figure:Existence-of-morphisms-of-process-networks}.


\begin{figure}[htb]
	\begin{center}
	\resizebox{15cm}{!}{%
	\begin{tikzpicture}[font={\footnotesize}, line width = 0.9pt]
		%
		\node at (2.5,0.5) [] {Process network $\mathcal{B}$};
		\node (b2l) at (2.5,2) [circle, draw] {$b_2$};
		\coordinate (l1l) at (1,2);
		\node (e11) at (0.5,1.5) {$e_{5}$};
		\draw[-] (e11) -- (l1l);
		\draw[-] (l1l) -- (b2l) node[pos=0.5, above] {$l_1$};
		\coordinate (l2l) at (2.5,2.8);
		\draw[-] (b2l) -- (l2l) node[above] {$l_2$};
		\coordinate (l3l) at (4,2);
		\draw[-] (l3l) -- (b2l) node[pos=0.5, above] {$l_3$};
		\node (e6l) at (4.5,1.5) {$e_{6}$};
		\draw[-] (e6l) -- (l3l);
		\node (b1l) at (2.5,4) [circle, draw] {$b_1$};
		\coordinate (l1lu) at (1,4);
		\node (e4l) at (0.5,3.5) {$e_{4}$};
		\draw[-] (e4l) -- (l1lu);
		\node (e3l) at (0.5,4.5) {$e_{3}$};
		\draw[-] (e3l) -- (l1lu);
		\draw[-] (l1lu) -- (b1l) node[pos=0.5, above] {$l_1$};
		\coordinate (l3lu) at (2.5,4.8);
		\draw[-] (b1l) -- (l3lu) node[above] {$l_2$};
		\coordinate (l3lu) at (4,4);
		\draw[-] (l3lu) -- (b1l) node[pos=0.5, above] {$l_3$};
		\node (e1l) at (4.5,4.5) {$e_{1}$};
		\draw[-] (e1l) -- (l3lu);
		\node (e2l) at (4.5,3) {$e_{2}$};
		\draw[-] (e2l) -- (l3lu);
		\draw[-] (e2l) -- (l3l);
		%
		\node at (8,0.5) [] {Process network $\mathcal{B}'$};
		\node (bp) at (8,3) [circle, draw] {$b'$};
		\coordinate (l1m) at (6.5,3);
		\node (e2m) at (6,2.5) {$e'_{2}$};
		\draw[-] (e2m) -- (l1m);
		\draw[-] (l1m) -- (bp) node[pos=0.5, above] {$l'_1$};
		\node (e3m) at (6,3.5) {$e'_{1}$};
		\draw[-] (e3m) -- (l1m);
		\coordinate (l2m) at (8,4);
		\draw[-] (bp) -- (l2m) node[pos=0.5,right] {$l'_2$};
		\node (e4m) at (8.5,4.5) {$e'_4$};
		\draw[-] (e4m) -- (l2m);
		\coordinate (l3m) at (9.5,3);
		\draw[-] (bp) -- (l3m) node[pos=0.5,above] {$l'_3$};
		\node (e3m) at (10,3.5) {$e'_3$};
		\draw[-] (e3m) -- (l3m);
		%
		\node at (13.5,0.5) [] {Process network $\mathcal{\bar{\mathcal{B}}}$};
		\node (bb) at (13.5,3) [circle, draw] {$\bar{b}$};
		\coordinate (l1b) at (12,3);
		\draw[-] (l1b) -- (bb) node[pos=0.5, above] {$\bar{l}_1$};
		\node (e3m) at (11.5,2.5) {$\bar{e}_1$};
		\draw[-] (e3m) -- (l1b);
		\coordinate (l2b) at (13.5,4);
		\draw[-] (bb) -- (l2b) node[above] {$\bar{l}_2$};
		\coordinate (l3b) at (15,3);
		\draw[-] (bb) -- (l3b) node[pos=0.5, above] {$\bar{l}_3$};
		\node (e2b) at (15.5,3.5) {$\bar{e}_2$};
		\draw[-] (e2b) -- (l3b);
	\end{tikzpicture}
	} 
	\end{center}
	\vspace{-0.37cm}
	\caption{An illustration of condition (a) and condition (b) in Lemma \ref{Lemma:behaviour-of-evaluation-functions-of-a-process-network}. Due to (a), there cannot exist a morphism of process networks from $\mathcal{B}$ to $\mathcal{B}'$ and from $\bar{\mathcal{B}}$ to $\mathcal{B}'$, whereas (b) ensures that there cannot exist any morphism from $\mathcal{B}'$ to $\mathcal{B}$ and from $\mathcal{B}'$ to $\bar{\mathcal{B}}$.}
	\label{Figure:Existence-of-morphisms-of-process-networks}
\end{figure}
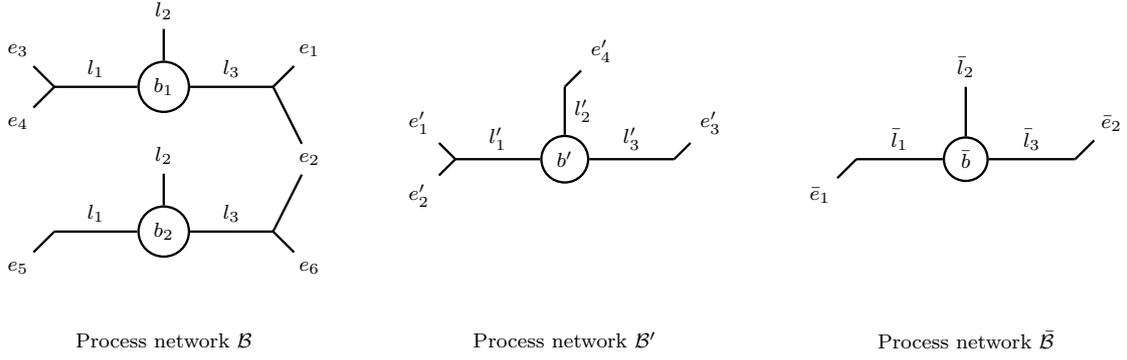

\begin{example}\label{Example:Existence-of-morphisms-of-process-networks}
Let us consider process networks $\mc{B} =(E, B, \lbr l_i \rbr_{3})$, $\bar{\mc{B}} =(\bar{E}, \bar{B}, \lbr \bar{l_i}  \rbr_{_{3}}), \mc{B}' =(  E', B', \lbr l'_i \rbr_{3})$ and  $\bar{\mc{B}'} =(\bar{E'}, \bar{B'}, \lbr \bar{l'_i} \rbr_{3})$, where $E:= \lbr e_1,e_2,e_3,e_4, e_5,e_6 \rbr$, $\bar{E}:= \lbr e_2, e_5, e_6 \rbr$, $E':= \lbr e_1', e_2', e, e_3', e_4' \rbr$, $\bar{E'}:= \lbr e_1', e, e_3' \rbr$, $B:= \lbr b_1, b_2 \rbr$, $\bar{B}:= \lbr b_2 \rbr$, $B':= \lbr b' \rbr$, $\bar{B'}:= \lbr b' \rbr$ (see Figure \ref{fig:morphism-of-process-networks}).
Define $F:=(\alpha \colon E \ra \bar{E}, \beta \colon B \ra \bar{B})$, where $\alpha \colon E \ra \bar{E}$ is given by $e_1 \mapsto e_6, e_2 \mapsto e_2, e_3 \mapsto e_5, e_4 \mapsto e_5, e_5 \mapsto e_5, e_6 \mapsto e_6$ and $\beta \colon B \ra \bar{B}$ is given as $b_1 \mapsto b_2, b_2 \mapsto b_2$.  Now, define $\bar{F}:=(\bar{\alpha} \colon \bar{E} \ra E, \bar{\beta} \colon \bar{B} \ra B)$, where $\bar{\alpha} \colon \bar{E} \ra E$ is given as $e_2 \mapsto e_2, e_6 \mapsto e_6,  e_5 \mapsto e_5$ and $\bar{\beta} \colon  \bar{B} \ra B$ is given by $b_2 \mapsto b_2$. Defining $F':=(\alpha' \colon E' \ra \bar{E'}, \beta' \colon B' \ra \bar{B'})$, where $\alpha' \colon  E' \ra \bar{E'}$ is defined as $e_1' \mapsto e_1', e_2' \mapsto e_1', e_3' \mapsto e_3', e_4' \mapsto e_3', e \mapsto e$ and $\beta' \colon  B' \ra \bar{B'}$ is defined by $b' \mapsto b'$. From the evaluation functions shown in Figure \ref{fig:morphism-of-process-networks}, it is straightforward to verify that $F, \bar{F}$ and $F'$ are morphisms of process networks. Observe in Figure \ref{fig:morphism-of-process-networks} that one can interpret morphisms $F \colon \mc{B} \ra \bar{\mc{B}}$ and $F' \colon \mc{B}' \ra \bar{\mc{B}'}$ as ways of zooming-out details, and the morphism $\bar{F} \colon \bar{\mc{B}} \ra \mc{B}$ as a way to zoom-in details in biochemical reaction networks. An SBGN-PD visualisation of the zooming-out procedure is modelled at the bottom of the Figure \ref{fig:morphism-of-process-networks}, where $F'$ models the process of forgetting ADP and ATP in the reaction describing the activation of the ERK molecule through phosphorylation.  In this regard, it is worth mentioning that the roles of ATP and ADP in biochemical reactions are often regarded as standard and, for the sake of clarity, may therefore be omitted from schematic representations of biochemical processes. In particular, the reaction shown in the bottom right of Figure~\ref{fig:morphism-of-process-networks} represents the phosphorylation of ERK, with the conventional involvement of ATP and ADP implicitly assumed.
Also, one can check that as a consequence of the condition (c) in the Lemma \ref{Lemma:behaviour-of-evaluation-functions-of-a-process-network}, there can not exist a morphism of process network from $\bar{\mc{B}'}$ to $\mc{B}'$.
\end{example}

\begin{figure}[H] 
\begin{center}
\includegraphics[width=12cm]{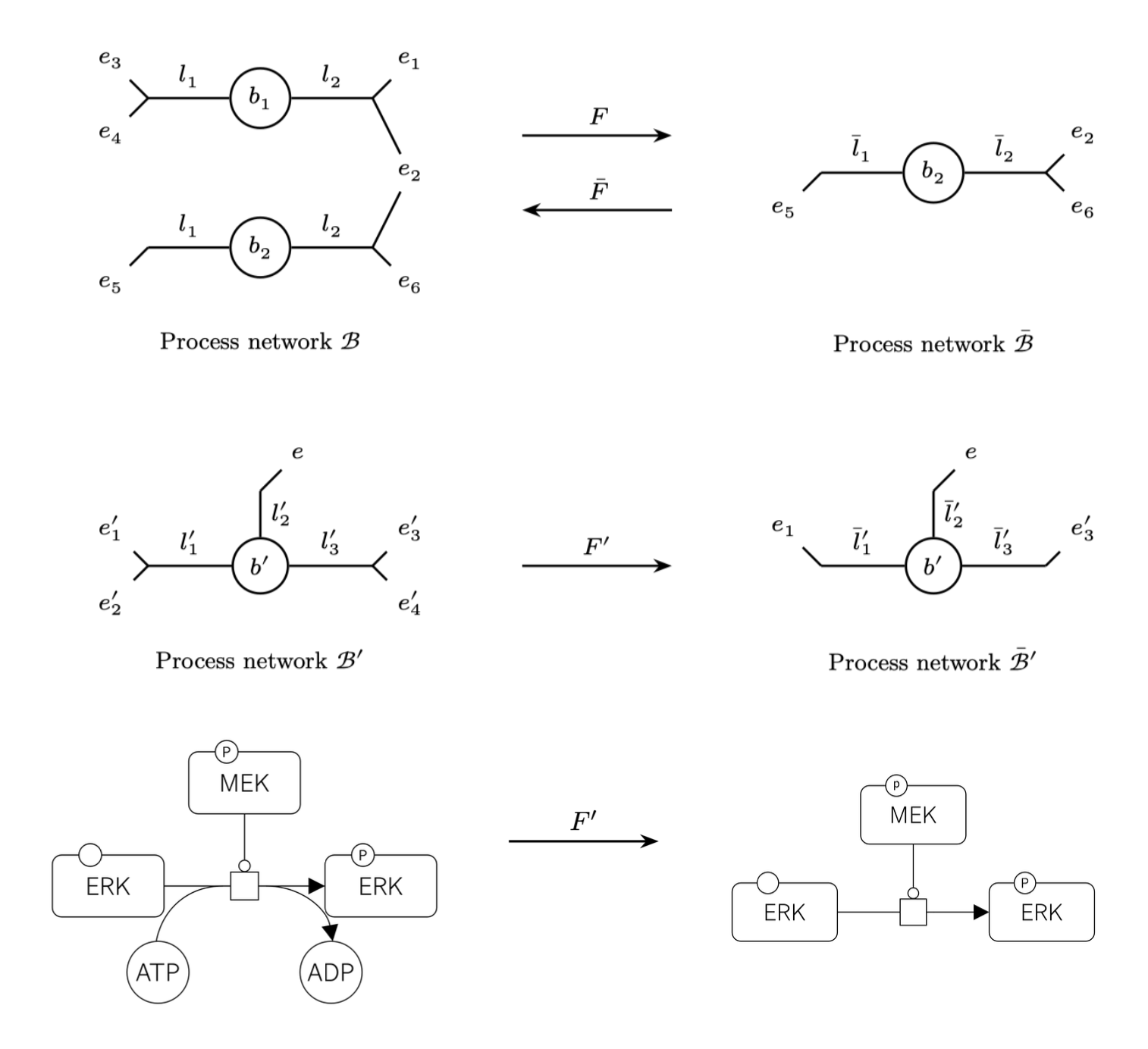}
\caption{An illustration of Example \ref{Example:Existence-of-morphisms-of-process-networks}, showing how to interpret morphisms of process networks as ways of `zooming-out and zooming-in details' in biochemical reaction networks.} 
\label{fig:morphism-of-process-networks}
\end{center}
\end{figure}

\section{An organising principle for process networks}\label{Section4}
 This section develops a mathematical theory that enables us to
\begin{itemize}
\item[(a)] introduce a notion of a \textit{process network with an interface},

\item[(b)] derive a formal way to combine process networks with interfaces into a composite process network with an interface,

 \item[(c)] derive a formal way to transform a process network with an interface to another process network with an interface,
 
  \item[(d)] find a formal way to study the \textit{influence of a process subnetwork on the remaining portion of the network} and vice versa.
    \end{itemize}

    By an \textbf{organizational principle for process networks}, we mean (a), (b), (c) and (d), and their interrelationships. To formalise (a), (b) and (c), first we will construct a finitely cocomplete category whose objects are process networks (Definition \ref{ARDefinition: Biochemical process networks}) and morphisms are morphisms of process networks (Definition \ref{Defn:morphism-of-process-networks}) and then, we will build a symmetric monoidal double category whose horizontal 1-morphisms are structured cospans in the category of process networks. However, (d) will be formalised using a combinatorial argument.

    \subsection{Category of process networks}\label{subsection:category-of-process-networks}
    For a fixed number of legs, the collection of process networks and their morphisms forms a category as we see next.

\begin{proposition}\label{Prop:category-of-process-networks}
		For any $n \in \mb{N}$, the collection of process networks forms a category \textbf{Process}$_{n}$ whose
		\begin{itemize}
			\item objects are process networks $\mc{B}=(E, B, \lbr l_i \rbr_{n} )$;
			\item a morphism from a process network $\mc{B}=(E, B, \lbr l_i \rbr_{n} )$ to another process network $\mc{B}'=(E', B', \lbr l_i' \rbr_{n} )$ is given by a pair of functions $F:=(\alpha \colon E \ra E', \beta \colon B \ra B' )$, such that 
          \[
			\begin{tikzcd}[sep=small]
				B \arrow[rr, "l_i"] \arrow[dd,xshift=0.75ex,"\beta"]&  & \mathbb{B} [E] \arrow[dd,xshift=0.75ex,"\mathbb{B} {\rm{[}}\alpha {\rm{]}}"] \\
				&  &                \\
				\mc{B}' \arrow[rr, "l'_i"]            &  & \mathbb{B}[E']    
			\end{tikzcd}\]
   commutes for all $i \in \lbr 1,2,3, \ldots, n \rbr$.
		\end{itemize}	
\end{proposition}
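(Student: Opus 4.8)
The plan is to equip the stated objects and morphisms with identities and a composition law and then verify the three category axioms, the entire argument resting on the functoriality of the free commutative monoid functor $\bar{\mb{Z}}_2[-]$. First, for each process network $\mc{B}=(E, B, \lbr l_i \rbr_{n})$ I take as candidate identity the pair $\mathrm{id}_{\mc{B}}:=(\mathrm{id}_E, \mathrm{id}_B)$. To confirm this is a genuine morphism in the sense of Definition \ref{Defn:morphism-of-process-networks}, I use that $\bar{\mb{Z}}_2[-]\colon \textbf{Set}\ra \mathbf{CMon}$ preserves identities, so $\bar{\mb{Z}}_2[\mathrm{id}_E]=\mathrm{id}_{\bar{\mb{Z}}_2[E]}$ and the defining square collapses to $l_i = l_i$ for every $i$.

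Next, given morphisms $F=(\alpha, \beta)\colon \mc{B}\ra \mc{B}'$ and $G=(\alpha', \beta')\colon \mc{B}'\ra \mc{B}''$, I define their composite componentwise as $G\circ F:=(\alpha'\circ\alpha,\ \beta'\circ\beta)$, using ordinary function composition in $\textbf{Set}$ on entities and on species separately. The one substantive point is that $G\circ F$ again satisfies the commuting square, and this is exactly where functoriality does the work: invoking $\bar{\mb{Z}}_2[\alpha'\circ\alpha]=\bar{\mb{Z}}_2[\alpha']\circ \bar{\mb{Z}}_2[\alpha]$ and pasting the squares of $F$ and $G$, one obtains $\bar{\mb{Z}}_2[\alpha'\circ\alpha]\circ l_i = \bar{\mb{Z}}_2[\alpha']\circ l_i'\circ\beta = l_i''\circ\beta'\circ\beta$ for each $i$, which is precisely the square required of $G\circ F$.

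Finally, associativity of $\circ$ and the left and right unit laws are inherited directly from those laws for function composition in $\textbf{Set}$, applied to the $\alpha$-components and $\beta$-components independently; nothing further intervenes because a morphism of process networks is carried by its underlying pair of functions. There is thus no real obstacle here: the content of the statement is entirely the preservation of identities and composites by $\bar{\mb{Z}}_2[-]$, which is what keeps identities and composites inside the class of process-network morphisms. If a more structural packaging is preferred, I would instead identify $\textbf{Process}_n$ with the comma category $(\mathrm{Id}_{\textbf{Set}}\downarrow P)$, where $P\colon \textbf{Set}\ra\textbf{Set}$ is the $n$-fold product $P(E)=\big(U\bar{\mb{Z}}_2[E]\big)^{n}$ of the underlying functor of $\bar{\mb{Z}}_2[-]$; since comma categories are automatically categories with componentwise identities and composition, this identification proves the claim at once and agrees with the explicit structure described above.
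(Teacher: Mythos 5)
Your proof is correct and follows essentially the same route as the paper's: composition is defined componentwise, well-definedness is checked by pasting the two commuting squares and invoking functoriality of $\bar{\mb{Z}}_{2}[-]$ (i.e.\ $\bar{\mb{Z}}_{2}[\alpha'\circ\alpha]=\bar{\mb{Z}}_{2}[\alpha']\circ\bar{\mb{Z}}_{2}[\alpha]$), identities are the componentwise identities, and associativity is inherited from \textbf{Set} --- indeed you are slightly more thorough, since the paper leaves the identity-is-a-morphism check and the unit laws implicit. Your closing remark about the comma-category packaging is also consistent with the paper, which establishes exactly that identification in Lemma \ref{ARLemma Bio coma}, though it deploys it only later to obtain finite cocompleteness rather than to prove the category axioms themselves.
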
	

\begin{proof}
Let us consider a pair of morphisms $F \colon \mc{B} \ra \mc{B}'$ and $G \colon \mc{B}' \ra \mc{B}''$, defined as $F:=(\alpha \colon E \ra E', \beta \colon B \ra B')$ and $G:=(\alpha' \colon E' \ra E'', \beta \colon B' \ra B'' )$ where, $\mc{B} =(E, B, \lbr l_i \rbr_{n} )$, $\mc{B}' =(E', B', \lbr l_i' \rbr_{n} )$ and $\mc{B}'' =(E'', B'', \lbr l_i'' \rbr_{n} )$.  We define $G \circ F:= \big( \alpha' \circ \alpha, \beta' \circ \beta \big)$. To see if the definition of $G \circ F$  makes sense, consider the  diagrams
\[
	\begin{tikzcd}
		B \arrow[d, "\beta"'] \arrow[r, "l_i"]  & \mathbb{B} [E] \arrow[d, "\mathbb{B} {\rm{[}}\alpha {\rm{]}}"] \\
		B' \arrow[r, "l_i'"'] \arrow[d, "\beta'"'] & \mathbb{B} [E'] \arrow[d, "\mathbb{B} {\rm{[}}\alpha' {\rm{]}}"] \\
		B'' \arrow[r, "l_i''"']                 & \mathbb{B} [E'']               
	\end{tikzcd}
 \]
 that commutes for all $i= \lbr 1,2, 
 \ldots, n\rbr$. Observe that from the commutativity of small squares, we have $l_i'' \circ (\beta' \circ \beta)= \mathbb{B} {\rm{[}}\alpha' {\rm{]}} \circ \mathbb{B} {\rm{[}}\alpha {\rm{]}} \circ l_i$. Then, the observation $\mathbb{B} {\rm{[}}\alpha' \circ \alpha{\rm{]}}= \mathbb{B} {\rm{[}}\alpha' {\rm{]}} \circ \mathbb{B} {\rm{[}}\alpha {\rm{]}}$ concludes $G \circ F$ is well defined. For each object $\mc{B} =(E, B, \lbr l_i \rbr_{n} )$, it is easy to see that the identity morphism  is given by ${\rm{id}}_{\mc{B}} =({\rm{id}} \colon E \ra E, {\rm{id}} \colon B \ra B ).$ Finally, the associativity of composition follows from the associativity of the composition of functions. Hence, we proved \textbf{Process}$_n$ is a category.
\end{proof}

    Now, since we will be representing process networks with interfaces as cospans in \textbf{Process}$_{n}$ (Proposition \ref{Prop:category-of-process-networks}), for composing cospans, we need the category \textbf{Process}$_{n}$ to have all finite pushouts. In fact, we will show \textbf{Process}$_{n}$ contains all finite colimits. For this, we proceed in two steps:
\begin{itemize}
\item [Step 1:] We will show \textbf{Process}$_{n}$ is equivalent to the comma category $\rm{Id}/ \mathbb{B}[-]^{n}$, where $\rm{Id} \colon$ \textbf{Set} $\ra$ \textbf{Set} is the identity functor and $\mathbb{B}[-]^{n} \colon {\rm{\textbf{Set}}} \ra {\rm{\textbf{Set}}}$ is the functor that takes a set $E$ to the underlying set of the commutative monoid $ \mathbb{B}^{n}[E]:= \underbrace{\mathbb{B}[E] \times \mathbb{B}[E] \times \cdots  \times \mathbb{B}[E]}_{n-times}$, and takes a function $\alpha \colon  E \ra E'$  to the function  $ \mathbb{B}^{n}[\alpha]  \colon \mathbb{B}^{n}[E] \ra \mathbb{B}^{n}[E'].$ In more concrete terms, objects of the category $\rm{Id}/ \mathbb{B}[-]^{n}$ are triples $(D, h, E)$, where $h \colon D \to \mathbb{B}^{n}[E]$ is a function from the set $D$ to the underlying set of $\mathbb{B}^{n}[E]$, and a morphism from $(D, h, E)$ to $(D', h', E')$ is given by a pair of functions $\phi \colon D \to D'$ and $\psi \colon E \to E'$ such that the following diagram commutes:
\[
\begin{tikzcd}
D \arrow[d, "h"'] \arrow[r, "\phi"] & D'  \arrow[d, "h'"] \\
\mathbb{B}^{n}[E] \arrow[r, "\mathbb{B}^{n}{\rm{[}}\psi {\rm{]}}"']                & \mathbb{B}^{n}[E']               
\end{tikzcd}
\]

\item[Step 2:] We will show $\rm{Id}/\mathbb{B}[-]^{n}$ has all finite colimits.
\end{itemize}

\begin{lemma}\label{ARLemma Bio coma}
	\textbf{Process}$_{n}$ is equivalent to the comma category $\rm{Id}/\mathbb{B}[-]^{n}$.
\end{lemma}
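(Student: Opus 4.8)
The plan is to construct an explicit functor $\Phi \colon \textbf{Process}_n \to \rm{Id}/\bar{\mb{Z}}_2[-]^n$ and to verify that it is in fact an isomorphism of categories, which in particular yields the claimed equivalence. The starting observation is the unwinding of the comma category: an object of $\rm{Id}/\bar{\mb{Z}}_2[-]^n$ is a triple $(A, C, f)$ with $A, C$ sets and $f \colon A \to \bar{\mb{Z}}_2[C]^n$ a function, while a morphism $(A, C, f) \to (A', C', f')$ is a pair $(g \colon A \to A', h \colon C \to C')$ satisfying $f' \circ g = \big(\bar{\mb{Z}}_2[h] \times \cdots \times \bar{\mb{Z}}_2[h]\big) \circ f$. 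Comparing with Definition \ref{ARDefinition: Biochemical process networks} and Proposition \ref{Prop:category-of-process-networks}, the slot $A$ will be played by the species set $B$ and the slot $C$ by the entity set $E$.

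First I would define $\Phi$ on objects by sending a process network $\mc{B} = (E, B, \lbr l_i \rbr_n)$ to the triple $(B, E, \langle l_1, \ldots, l_n \rangle)$, where $\langle l_1, \ldots, l_n \rangle \colon B \to \bar{\mb{Z}}_2[E]^n$ is the unique map into the $n$-fold product whose $i$-th component is $l_i$, i.e. $b \mapsto (l_1(b), \ldots, l_n(b))$; on a morphism $F = (\alpha, \beta)$ I would set $\Phi(F) = (\beta, \alpha)$, transposing the pair to match the comma-category ordering. In the reverse direction I would define $\Psi$ by $(A, C, f) \mapsto (C, A, \lbr \pi_i \circ f \rbr_n)$, where $\pi_i \colon \bar{\mb{Z}}_2[C]^n \to \bar{\mb{Z}}_2[C]$ is the $i$-th projection, and $(g, h) \mapsto (h, g)$ on morphisms.

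The heart of the matter, and the only step needing real care, is to show that the single commuting square defining a comma-category morphism encodes precisely the $n$ commuting squares of Definition \ref{Defn:morphism-of-process-networks}. This is exactly the universal property of the product $\bar{\mb{Z}}_2[E]^n$: the $i$-th component of $\big(\bar{\mb{Z}}_2[\alpha] \times \cdots \times \bar{\mb{Z}}_2[\alpha]\big) \circ \langle l_1, \ldots, l_n \rangle$ is $\bar{\mb{Z}}_2[\alpha] \circ l_i$, and the $i$-th component of $\langle l_1', \ldots, l_n' \rangle \circ \beta$ is $l_i' \circ \beta$; since two maps into a product coincide if and only if all their components coincide, the comma square holds if and only if $\bar{\mb{Z}}_2[\alpha] \circ l_i = l_i' \circ \beta$ for every $i \in \lbr 1, \ldots, n \rbr$. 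This simultaneously establishes that $\Phi$ and $\Psi$ are well-defined on morphisms and that $\Phi$ is full and faithful.

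It then remains to check that $\Phi$ and $\Psi$ are mutually inverse and functorial. The two identities $\langle \pi_1 \circ f, \ldots, \pi_n \circ f \rangle = f$ and $\pi_i \circ \langle l_1, \ldots, l_n \rangle = l_i$ give $\Psi \circ \Phi = \rm{Id}$ and $\Phi \circ \Psi = \rm{Id}$ on both objects and morphisms, while preservation of identities and composites is immediate because both functors act by transposing a pair and composition in each category is computed componentwise. Hence $\Phi$ is an isomorphism of categories, and in particular an equivalence. I anticipate no genuine obstacle in this argument; the entire content is the precise bookkeeping of the product decomposition described above.
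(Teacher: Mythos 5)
Your proposal is correct and follows essentially the same route as the paper: the paper defines the identical functor (transposing the pair $(\alpha,\beta)$ and bundling the legs $l_1,\dots,l_n$ into the product map $\prod_i l_i$) and verifies it is fully faithful and surjective on objects, hence essentially surjective, using exactly the same componentwise universal-property argument that you identify as the heart of the matter. Your only variation is to package the verification as an explicit inverse functor $\Psi$, which yields the marginally stronger conclusion that $\textbf{Process}_n$ and $\mathrm{Id}/\bar{\mathbb{Z}}_2[-]^n$ are isomorphic rather than merely equivalent.
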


\begin{proof}
First we show that the following functor is well defined:
\begin{equation}\label{AREquation: definition of equivalence}
		\begin{split}
			F \colon  {\rm{\textbf{Process}}}_{n} & \ra {\rm{Id}}/\mathbb{B}[-]^{n}\\
			(E, B, \lbr l_i \rbr_{n} ) & \mapsto (B, \prod_{i=1}^{i=n} l_i, E)\\
			 \Big((\alpha, \beta) \colon (E, B, \lbr l_i \rbr_{n} ) \ra (E', B', \lbr l_i' \rbr_{n} ) \Big) & \mapsto \big( \beta \colon B \ra B', \alpha \colon E \ra E'\big).  
		\end{split}
\end{equation}
Let us denote the maps induced by $F$ on objects and morphism using the notation $F_0$ and $F_1$, respectively. From the definition of the function 
$$
\prod_{i=1}^{i=n} l_i \colon B \ra \underbrace{\mathbb{B}[E] \times \mathbb{B}[E]\times \cdots  \times \mathbb{B}[E]}_{n-times}, b \mapsto \big(l_1(b), l_2(b), \ldots, l_n(b) \big), 
$$ 
it follows $F_0$ is well-defined. To see $F_1$ is well-defined, we need to show that 
$$
(\beta,\alpha) \in \hom_{{\rm{Id}}/\mathbb{B}[-]^{n}} \Big((B, \prod_{i=1}^{i=n} l_i, E), (B', \prod_{i=1}^{i=n} l'_i, E') \Big).
$$ 
Now, observe that for our purpose it is sufficient to prove the commutativity of the following diagram
\[
	\begin{tikzcd}
		B \arrow[d, "\beta"'] \arrow[r, "\prod_{i=1}^{i=n} l_i"] & \mathbb{B}[E]^{n} \arrow[d, "\mathbb{B}{\rm{[}}- {\rm{]}}^{n}(\alpha)"] \\
		B' \arrow[r, "\prod_{i=1}^{i=n} l_i'"']                & \mathbb{B}[E']^{n}            
	\end{tikzcd}
\]
for all $i \in \lbr 1,2, \ldots, n \rbr$.	Then, for each $i$, the commutativity of the following diagrams
\[
	\begin{tikzcd}[sep=small]
		B \arrow[rr, "l_i"] \arrow[dd,xshift=0.75ex,"\beta"]&  & \mathbb{B} [E] \arrow[dd,xshift=0.75ex,"\mathbb{B} {\rm{[}}\alpha {\rm{]}}"] \\
		&  &                \\
		B' \arrow[rr, "l_i'"]            &  & \mathbb{B}[E']    
	\end{tikzcd}
 \]
	implies
\begin{equation}\label{AREquation: well defined equivalence}
		\prod_{i=1}^{i=n}(\mathbb{B}[\alpha] \circ l_i)= \prod_{i=1}^{i=n} (l_i' \circ \beta) 
\end{equation}
It is easy to see that the left hand side of \ref{AREquation: well defined equivalence} is same as $\mathbb{B}{\rm{[}}- {\rm{]}}^{n}(\alpha) \circ \prod_{i=1}^{i=n} l_i$ and the right hand side of \ref{AREquation: well defined equivalence} is same as $\prod_{i=1}^{i=n} l_i' \circ \beta$. Hence, $F_1$ is well-defined. Observe that from the definition of $F$ itself (\ref{AREquation: definition of equivalence}), it is clear that $F$ is a faithful functor. The fullness follows directly from the pointwise definition of \ref{AREquation: well defined equivalence}. To show that $F$ is essentially surjective, consider an object
$(B, f, E)$ in  ${\rm{Id}}/\mb{Z}_{2}[-]^{n}$. Since $s(f)= B$ and $t(f)=  \mathbb{B}[E] \times \mathbb{B}[E]\times \cdots  \times \mathbb{B}[E]$, it is clear that $f=\prod_{i=1}^{i=n} l_i$ for some functions $l_1, l_2, \ldots, l_n \colon B \ra  \mathbb{B}[E]$. Observe that  $(E, B, \lbr l_i \rbr_{n} )$ is an object of \textbf{Process}$_{n}$ by the pointwise definition in \ref{AREquation: well defined equivalence}. Also, it is clear from the definition of $F$ (\ref{AREquation: definition of equivalence}), that $F \big((E, B, \lbr l_i \rbr_{n} ) \big)= (B, f, E)$. Thus, we have shown $F$ is surjective, and hence, essentially surjective. So, we proved that \textbf{Process}$_{n}$ is equivalent to the category ${\rm{Id}}/\mathbb{B}[-]^{n}$.
\end{proof}

\begin{proposition}\label{Proposition: Bio is finitely cocomplete}
	\textbf{Process}$_{n}$ contains all finite colimits.
\end{proposition}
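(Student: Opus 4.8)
The plan is to execute Step 2 from the setup above and then transport the conclusion back along the equivalence of Lemma \ref{ARLemma Bio coma}. Since an equivalence of categories preserves the existence of colimits of every shape, it suffices to prove that the comma category $\mathrm{Id}/\bar{\mb{Z}}_2[-]^{n}$ has all finite colimits; the statement for \textbf{Process}$_n$ then follows at once by transporting along the functor $F$ constructed in that lemma.

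For the comma category I would invoke the standard fact about colimits in comma categories. Recall that for functors $S \colon \mc{A} \ra \mc{C}$ and $T \colon \mc{B} \ra \mc{C}$, the comma category $S/T$ has all colimits of a given finite shape $J$ provided $\mc{A}$ and $\mc{B}$ have colimits of shape $J$ and $S$ preserves them. In our situation $\mc{A} = \mc{B} = \mc{C} = \textbf{Set}$, the functor $S$ is the identity $\mathrm{Id} \colon \textbf{Set} \ra \textbf{Set}$, and $T = \bar{\mb{Z}}_2[-]^{n}$. The category \textbf{Set} is finitely cocomplete, and the identity functor trivially preserves every colimit, so both hypotheses hold for each finite $J$, and $\mathrm{Id}/\bar{\mb{Z}}_2[-]^{n}$ acquires all finite colimits. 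Concretely, given a finite diagram $D \colon J \ra \mathrm{Id}/\bar{\mb{Z}}_2[-]^{n}$, I would project it to its two \textbf{Set}-valued components, the species diagram $j \mapsto B_j$ and the entity diagram $j \mapsto E_j$, together with the connecting functions $f_j \colon B_j \ra \bar{\mb{Z}}_2[E_j]^{n}$; form $B := \mathrm{colim}_j B_j$ and $E := \mathrm{colim}_j E_j$ in \textbf{Set} with coprojections $\iota_j \colon B_j \ra B$ and $\kappa_j \colon E_j \ra E$; post-compose each $f_j$ with $\bar{\mb{Z}}_2[\kappa_j]^{n}$ to obtain a cocone over the species diagram with apex $\bar{\mb{Z}}_2[E]^{n}$; and use the universal property of $B$ (available precisely because $\mathrm{Id}$ preserves this colimit) to induce the unique function $f \colon B \ra \bar{\mb{Z}}_2[E]^{n}$ with $f \circ \iota_j = \bar{\mb{Z}}_2[\kappa_j]^{n} \circ f_j$. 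Then $(B, f, E)$, with coprojections $(\iota_j, \kappa_j)$, is the colimit of $D$, and checking its universal property reduces routinely to the universal properties of $B$ and $E$ in \textbf{Set}.

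I expect no genuine obstacle here. The one hypothesis of the general comma-category lemma that could in principle fail is the preservation of colimits by $S$, but $S$ is the identity, so it holds for free; this is exactly why the construction splits cleanly into the two \textbf{Set}-components. The only bookkeeping is to confirm that $\{\bar{\mb{Z}}_2[\kappa_j]^{n} \circ f_j\}_j$ really is a cocone over the species diagram, which follows from the commuting squares defining morphisms in the comma category together with the functoriality of $\bar{\mb{Z}}_2[-]^{n}$. Transporting the resulting finite colimits through the equivalence $F$ of Lemma \ref{ARLemma Bio coma} then shows that \textbf{Process}$_n$ contains all finite colimits, which is what we need later to compose structured cospans via pushouts.
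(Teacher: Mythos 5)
Your proposal is correct and takes essentially the same route as the paper: the paper likewise invokes the standard comma-category result (citing Theorem 3, Section 5.2 of the reference used for Lemma \ref{ARLemma Bio coma}'s companion argument) that $F/G$ has finite colimits whenever $A$ and $B$ do and $F$ preserves them, applies it with $F = \mathrm{Id}$ and $G = \bar{\mb{Z}}_2[-]^{n}$ on \textbf{Set}, and then transports the conclusion to \textbf{Process}$_n$ along the equivalence of Lemma \ref{ARLemma Bio coma}. Your explicit construction of the colimit from the two \textbf{Set}-components is extra detail the paper omits, but it is consistent with, not a departure from, its argument.
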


\begin{proof}
	It follows from the \textit{Theorem 3, Section 5.2 of} \cite{MR999925} that whenever the categories $A$ and $B$ have finite colimits, $F \colon A  \ra C$ is a functor preserving such colimits and $G \colon B \ra C$	is any functor, then the comma category $F/G$ has finite colimits. Hence, the category ${\rm{Id}}/\mathbb{B}[-]^{n}$ has all finite colimits.  Since by Lemma \ref{ARLemma Bio coma}, \textbf{Process}$_{n}$ is equivalent to  ${\rm{Id}}/\mathbb{B}[-]^{n}$,   \textbf{Process}$_{n}$ contains also all finite colimits.
\end{proof}

The following easily verifiable lemma demonstrates the  coproducts in \textbf{Process}$_{n}$.

\begin{lemma}\label{ARLemma:coproduct Biochemical process species}
	Consider two process networks $\mc{B}_1 =(E_1, B_1, \lbr l_{i_{1}} \rbr_{n} ) $ and  $\mc{B}_2 =(E_2, B_2, \lbr l_{i_{2}} \rbr_{n} )$ in \textbf{Process}$_{n}$. Then, the coproduct of $\mc{B}_1$ and $\mc{B}_2$ is defined upto an isomorphism as a process network $\mc{B}_1 + \mc{B}_2= (E, B, \lbr l_i \rbr_{n} )$  given as follows:
		\begin{itemize}
			\item $E:= E_1 \sqcup E_2$.
			\item $B:= B_1 \sqcup B_2$.
			\item For each i $\in \lbr 1,2, \ldots, n \rbr$, $l_i \colon B \ra  \mathbb{B}[E]$ is defined as 
			\begin{equation*}
				\begin{split}
					l_i \colon  B & \ra \mathbb{B}[E]\\
					 (b_1, B_1) & \mapsto \mathbb{B}[j_1]\big(l_{i {_1}}(b_1) \big)\\
					 (b_2, B_2) & \mapsto \mathbb{B}[j_2]\big(l_{i {_2}}(b_2) \big),
				\end{split}
			\end{equation*}
   \end{itemize}
		where $j_1 \colon E_1 \ra E, x \mapsto (x, E_1)$ and $j_2 \colon E_2 \ra E, x \mapsto (x, E_2)$.
	\end{lemma}

We demonstrate the computation of pushouts  in \textbf{Process}$_{n}$ later in Lemma \ref{ARLemma:Pushout Biochemical process species} for the special case relevant to the goal of the current manuscript.

\begin{proposition}\label{Proposition:embedding}
 Let $m,n \in \mb{N}$, such that $m < n$. Then, there is a full embedding from the category \textbf{Process}$_{m}$ to the category \textbf{Process}$_{n}$.
\end{proposition}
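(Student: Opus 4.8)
The plan is to construct the embedding by \emph{padding} each $m$-legged process network with $n-m$ trivial legs that evaluate to $0$ everywhere. Concretely, I would define a functor $\Phi \colon \textbf{Process}_{m} \ra \textbf{Process}_{n}$ on objects by
\[
\Phi(E, B, \lbr l_i \rbr_{m}) := (E, B, \lbr \hat{l}_i \rbr_{n}),
\]
where $\hat{l}_i = l_i$ for $1 \le i \le m$ and $\hat{l}_i \colon B \ra \bar{\mb{Z}}_2[E]$ is the constant function $b \mapsto 0$ for $m < i \le n$. Since $0$ is a genuine element of $\bar{\mb{Z}}_2[E]$ (the monoid identity), each padded leg is a legitimate function, so $\Phi(E,B,\lbr l_i\rbr_{m})$ is indeed an object of $\textbf{Process}_{n}$. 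On morphisms I would set $\Phi(\alpha,\beta) := (\alpha,\beta)$, i.e.\ the same underlying pair of functions.

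First I would check that $\Phi$ is well defined on morphisms. Given $(\alpha,\beta)\colon \mc{B}\ra\mc{B}'$ in $\textbf{Process}_{m}$, the defining squares of Definition \ref{Defn:morphism-of-process-networks} already commute for $1\le i\le m$, and these are left untouched by $\Phi$. For a padded index $m<i\le n$ one has $\hat{l}_i=0$ and $\hat{l}'_i=0$ on both networks; since $\bar{\mb{Z}}_2[\alpha]$ is a monoid homomorphism it sends $0$ to $0$, so $\bar{\mb{Z}}_2[\alpha]\circ \hat{l}_i = 0 = \hat{l}'_i \circ \beta$, and the extra squares commute trivially. Functoriality is immediate because $\Phi$ acts as the identity on the underlying pairs of functions, so it preserves identities and composites by Proposition \ref{Prop:category-of-process-networks}.

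Next I would verify the three properties of a full embedding. Faithfulness is immediate, as $\Phi$ is injective on each hom-set since it does not alter the underlying pair $(\alpha,\beta)$. Injectivity on objects holds because $\Phi$ leaves $E$, $B$, and $l_1,\ldots,l_m$ unchanged, so two objects with the same image agree on all this data and hence coincide. The one step deserving a genuine argument is fullness: given any $(\alpha,\beta)\colon \Phi\mc{B}\ra\Phi\mc{B}'$ in $\textbf{Process}_{n}$, its defining squares hold for all $i\in\lbr 1,\ldots,n\rbr$; restricting attention to $i\in\lbr 1,\ldots,m\rbr$, where $\hat{l}_i=l_i$ and $\hat{l}'_i=l'_i$, yields exactly the conditions making $(\alpha,\beta)$ a morphism $\mc{B}\ra\mc{B}'$ in $\textbf{Process}_{m}$. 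Thus every morphism between images already lives in $\textbf{Process}_{m}$ and maps to itself, so $\Phi$ is full.

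The only point requiring care, and thus the ``main obstacle'' such as it is, is the fullness step: one must confirm that the additional $n-m$ legs impose \emph{no} new constraint on a morphism in the target. This is precisely what the restriction above shows, since the padded legs are identically $0$ and the $0$-squares commute automatically, so no spurious morphisms are ruled out in the target. Combining the three verifications, $\Phi$ is full, faithful, and injective on objects, hence a full embedding $\textbf{Process}_{m}\hookrightarrow\textbf{Process}_{n}$, as claimed.
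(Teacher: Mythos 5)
Your proposal is correct and takes essentially the same approach as the paper: the paper's proof defines exactly the same padding functor, sending $(E,B,\lbr l_i \rbr_{m})$ to $(E,B,\lbr l_i \rbr_{n})$ with $l_k = 0$ for $k > m$, and asserts that fullness and faithfulness follow directly. You simply spell out the verifications (well-definedness, functoriality, fullness, faithfulness, injectivity on objects) that the paper leaves implicit.
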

 
\begin{proof}
 Proof directly follows from the fullness and faithfulness of the functor $$i \colon \textbf{Process}_{m} \ra \textbf{Process}_{n}$$ which takes a biochemical process network $(E,B, \lbr l_i \rbr_{m})$ with $m$ legs to a  biochemical process network $(E,B, \lbr l_i \rbr_{n})$ with $n$ legs such that $l_{k}=0$ for $k > m$.
\end{proof}
Hence, if $m < n \in \mb{N}$, then the Proposition \ref{Proposition:embedding} identifies a process network having $m$ legs $\mc{B} = (E, B, \lbr l_i \rbr_{m})$ with a process network having $n$ legs $\mc{B}' = (E, B, \lbr l_i \rbr_{n})$, where $l_k \colon B \ra \mathbb{B}[E]$ are zero functions for $k>m$, see Figure \ref{fig:embedding}.
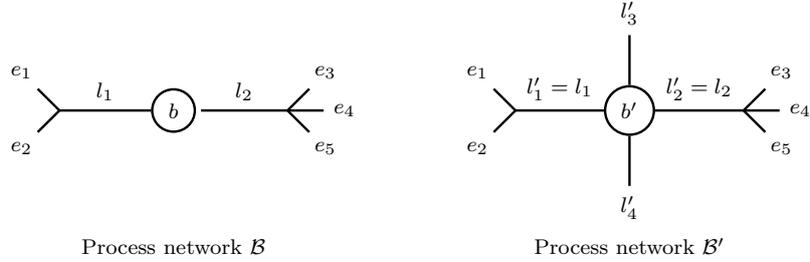
\begin{figure}[htb]
	\begin{center}
	\begin{tikzpicture}[font={\footnotesize}, line width = 0.9pt]
		%
		%
		\node at (2.5,0.2) [] {Process network $\mathcal{B}$};
		\node (b) at (2.5,2) [circle, draw] {$b$};
		\coordinate (l1l) at (1,2);
		\node (e2l) at (0.5,1.5) {$e_{2}$};
		\draw[-] (e2l) -- (l1l);
		\draw[-] (l1l) -- (b) node[pos=0.5, above] {$l_1$};
		\node (e1l) at (0.5,2.5) {$e_{1}$};
		\draw[-] (e1l) -- (l1l);
		\coordinate (l2l) at (4,2);
		\draw[-] (l2l) -- (b2l) node[pos=0.5, above] {$l_2$};
		\node (e32) at (4.5,1.5) {$e_{5}$};
		\draw[-] (e32) -- (l2l);
		\node (e31) at (4.5,2.5) {$e_{3}$};
		\draw[-] (e31) -- (l2l);
		\node (e4) at (4.75,2) {$e_{4}$};
		\draw[-] (e4) -- (l2l);
		%
		\node at (8.5,0.2) [] {Process network $\mathcal{B}'$};
		\node (bp) at (8.5,2) [circle, draw] {$b'$};
		\coordinate (l1r) at (7,2);
		\node (e2r) at (6.5,2.5) {$e_{1}$};
		\draw[-] (e2r) -- (l1r);
		\draw[-] (l1r) -- (bp) node[pos=0.5, above] {$l'_1=l_1$};
		\node (e12p) at (6.5,1.5) {$e_{2}$};
		\draw[-] (e12p) -- (l1r);
		\coordinate (l3r) at (8.5,3);
		\draw[-] (bp) -- (l3r) node[above] {$l'_3$};
		\coordinate (l4r) at (8.5,1);
		\draw[-] (bp) -- (l4r) node[below] {$l'_4$};
		\coordinate (l2r) at (10,2);
		\draw[-] (bp) -- (l2r) node[pos=0.5,above] {$l'_2=l_2$};
		\node (e3r) at (10.5,2.5) {$e_{3}$};
		\draw[-] (e3r) -- (l2r);
		\node (e5r) at (10.5,1.5) {$e_{5}$};
		\draw[-] (e5r) -- (l2r);
		\node (e4r) at (10.75,2) {$e_{4}$};
		\draw[-] (e4r) -- (l2r);
	\end{tikzpicture}
	\end{center}
	\vspace{-0.4cm}
	\caption{An illustration showing that the Proposition \ref{Proposition:embedding} allows us to treat the process network $\mathcal{B}=( E :=\lbrace e_1,e_2, e_3,e_4,e_5 \rbrace, B:= \lbrace b \rbrace, \lbrace l_i \rbrace_{2})$ as same to the the process network $\mathcal{B}'=( E':=\lbrace e_1,e_2, e_3,e_4,e_5 \rbrace, B':= \lbrace b \rbrace, \lbrace l'_i\rbrace_{4})$ such that $l'_1=l_1$, $l'_2=l_2$, $l'_3=0$ and $l'_4=0$. }
	\label{fig:embedding}
\end{figure}


\begin{remark}[Relation to Petri net with link]
In \cite{compositionality:13637}, the authors defined a \textit{Petri net with link} as a functor $P \colon {\rm{Sch(LPetri)}} \ra {\rm{\textbf{Set}}}$, where ${\rm{Sch(LPetri)}}$ is the category freely generated by the following morphisms:
\[
\begin{tikzcd}
  & \text{Input arc} \arrow[ld, "{\rm{src}}_{I}"'] \arrow[rd, "{\rm{tgt}}_{I}"] &   \\ S & \text{Output arc} \arrow[l, "{\rm{src}}_{O}"'] \arrow[r, "{\rm{tgt}}_{O}"]   & T \\
  & \text{Link} \arrow[lu, "{\rm{src}}_{L}"] \arrow[ru, "{\rm{tgt}}_{L}"'] &  
\end{tikzcd}
\] 

Now, let us consider a subclass of Petri nets with links $P \colon {\rm{Sch(LPetri)}} \ra {\rm{\textbf{Set}}}$ which satisfy the following condition: For each $b \in P(T)$, the sets $P({\rm{src}}_I) \big( P({\rm{tgt}}_I)^{-1}(b) \big) $, $P({\rm{src}}_L) \big( P({\rm{tgt}}_L)^{-1}(b) \big) $, and $P({\rm{src}}_{O}) \big( P({\rm{tgt}}_{O})^{-1}(b) \big)$ are finite sets, and the functions $P({\rm{src}}_{I}) \colon P(\text{Input arc}) \ra P(S)$, $P({\rm{src}}_{L}) \colon P(\text{Link}) \ra P(S)$ and $P({\rm{src}}_{O}) \colon P(\text{Input arc}) \ra P(S)$ are injective functions. Then, it is a lengthy but straightforward to show a one-one correspondence between such a subclass of Petri nets with links and the set of process networks with three legs.
\end{remark}

\subsection{Process networks with interfaces as structured cospans}\label{Subsection:Process networks with interfaces as structured cospans}
To introduce the notion of a process network with an interface, we will borrow the theory of structured cospans as developed in \cite{baez2020structuredcospans}. To be more precise, we will use the theory of structured cospans to construct a symmetric monoidal double-category whose horizontal 1-morphisms are process networks with interfaces. By an interface of a process network, we mean a set of entities (see Definition \ref{ARDefinition: Biochemical process networks}) associated to the process network through which the process network connects with other process networks. In mathematical terms, we will represent process networks with interfaces as structured cospans.
We begin our treatment by recalling the definition of a structured cospan.

\begin{definition}[{\textbf{Section 2}, \cite{baez2020structuredcospans}}]\label{Definition:structured-cospans}
	Let $C$ and $X$ be a pair of categories. A \textit{structured cospan} consists of a functor $L \colon C \ra X$ along with a cospan in the category $X$ of the following form: 
	\[
	\begin{tikzcd}
		& x &                    \\
		L(a) \arrow[ru, "\mc{I}"] &   & L(b) \arrow[lu, "\mc{O}"']
	\end{tikzcd}
 \]
	where $a,b$ are objects of $C$ and $x$ is an object of $X$. Morphisms $\mc{I} \colon L(a) \ra x$ and $\mc{O} \colon L(b) \ra x$ are called the \textit{left leg} and the \textit{right leg} of the  above structured cospan.
 \end{definition} 
 
To obtain a notion of an interface of a process network using structured cospans, we in particular need to focus on a particular case, by constructing a functor $L \colon {\rm{\textbf{Set}}}\ra  {\rm{\textbf{Process}}}_{n}$ for each $n \in \mb{N}$. Using the notation $\emptyset$ for both the empty set and the unique function between two empty sets, we state the following lemma.
\begin{lemma}\label{ARLemma: Bio Sets}
	\begin{equation*}
		\begin{split}
			L \colon {\rm{\textbf{Set}}} & \ra {\rm{\textbf{Process}}}_{n}\\
			 E & \mapsto \big(E, \emptyset, \lbr \underbrace{ \emptyset_{\mathbb{B}[E]}, \emptyset_{\mathbb{B}[E]}, \emptyset_{\mathbb{B}[E]}, \emptyset_{\mathbb{B}[E]}}_{n-times} \rbr \big)\\
			 (\alpha \colon E \ra E') & \mapsto \big (\alpha, \emptyset)
		\end{split}
	\end{equation*}
	defines a functor, where $\emptyset_{\mathbb{B}[E]} \colon \emptyset \ra \mathbb{B}[\mb{E}]$ is the canonical unique map. 
\end{lemma}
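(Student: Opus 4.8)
The plan is to verify directly the four defining properties of a functor: that $L$ carries objects of $\textbf{Set}$ to objects of $\textbf{Process}_n$, carries morphisms to morphisms of process networks, and respects identities and composition. Starting with objects, for a set $E$ the assigned tuple $\big(E, \emptyset, \lbr \emptyset_{\bar{\mb{Z}}_2[E]} \rbr_n \big)$ consists of the entity set $E$, the empty set as its set of process species, and $n$ legs $\emptyset_{\bar{\mb{Z}}_2[E]} \colon \emptyset \ra \bar{\mb{Z}}_2[E]$; by Definition \ref{ARDefinition: Biochemical process networks} this is exactly a process network on $E$ with $n$ legs, so $L_0(E)$ is an object of $\textbf{Process}_n$.

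Next I would confirm well-definedness on morphisms. Given $\alpha \colon E \ra E'$, the candidate image is the pair $(\alpha, \emptyset)$, its second component being the unique map $\emptyset \ra \emptyset$. To qualify as a morphism of process networks (Definition \ref{Defn:morphism-of-process-networks}), the naturality square relating the legs must commute for every $i \in \lbr 1, 2, \ldots, n \rbr$. Here the common source object of that square is the set of process species $B = \emptyset$, so both composites are functions out of the empty set and hence automatically coincide. The commutativity therefore holds vacuously, and $(\alpha, \emptyset)$ is a legitimate morphism $L(E) \ra L(E')$.

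For functoriality, I would treat identities using the description of the identity in $\textbf{Process}_n$ from the proof of Proposition \ref{Prop:category-of-process-networks}, namely that the identity on $(E, B, \lbr l_i \rbr_n)$ is $({\rm{id}}_E, {\rm{id}}_B)$; since $B = \emptyset$ and ${\rm{id}}_\emptyset = \emptyset$, this yields $L({\rm{id}}_E) = ({\rm{id}}_E, \emptyset) = {\rm{id}}_{L(E)}$. For composition, given $\alpha \colon E \ra E'$ and $\alpha' \colon E' \ra E''$, I would compute $L(\alpha') \circ L(\alpha) = (\alpha', \emptyset) \circ (\alpha, \emptyset) = (\alpha' \circ \alpha, \emptyset \circ \emptyset)$ and use $\emptyset \circ \emptyset = \emptyset$ to conclude that this equals $(\alpha' \circ \alpha, \emptyset) = L(\alpha' \circ \alpha)$.

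I do not expect a genuine obstacle here; the one point requiring care is the observation that every condition touching the process-species component trivialises because that component is constantly the empty set. Concretely, the morphism-commutativity is vacuous, while preservation of identities and composition reduce to the corresponding facts in $\textbf{Set}$ together with the universal property of $\emptyset$ as an initial object, which forces uniqueness of maps out of it. Once these trivialisations are recorded, the verification is entirely routine.
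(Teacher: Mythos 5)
Your proof is correct and complete: the vacuous commutativity of the morphism square (both composites being maps out of the empty set of process species), together with the reduction of identity and composition preservation to facts in $\textbf{Set}$, is exactly the routine verification needed. The paper in fact states this lemma without any proof, treating it as immediate, so your write-up simply supplies the standard argument the authors left implicit; there is no divergence to report.
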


\begin{definition}[Open Process network]\label{Definition:Open process networks}
	An \textit{open process network} or a \textit{process network with an interface} is  defined as a structured cospan given by the functor $L \colon {\rm{\textbf{Set}}} \ra {\rm{\textbf{Process}}}_{n}$ defined in Lemma \ref{ARLemma: Bio Sets} and a cospan in \textbf{Process}$_{n}$ of the following form
	\[
	\begin{tikzcd}
		& \mc{B} &                    \\
		L(X) \arrow[ru, "\mc{I}"] &   & L(Y) \arrow[lu, "\mc{O}"']
	\end{tikzcd}
    \]
	where $X$ and $Y$ are  sets, and $\mc{B}$ is a process network. We denote the above process network with an interface by the tuple $(\mc{B},X,Y, \mc{I}, \mc{O})$. We will call the left leg $\mc{I}$ and right leg $\mc{O}$ as the \textit{left interface} and the \textit{right interface of the process network} $\mc{B}$.
\end{definition}
\begin{example}\label{Example:open-proces-network}
In Figure \ref{fig:open-process-network}, we illustrate an open process network $(\mc{B}, \lbr 1\rbr, \lbr 2,3 \rbr, \mc{I}, \mc{O})$, where  $\mc{B}= (\lbr e^{b}_{11}, e^{b}_{12}, e^{b}_{21}, e^{b}_{31}, e^{b}_{32} \rbr, \lbr b \rbr, \lbr l_i \rbr_{3})$ is a process network with three legs $l_1, l_2,$ and $l_3$. Here, the functor $L$ is as defined in  Lemma \ref{ARLemma: Bio Sets}, and thus $L(\lbr 1 \rbr)$ and $L(\lbr 1,2 \rbr)$ represent the process networks $\big(\lbr 1 \rbr, \emptyset, \lbr \emptyset_{\mathbb{B}[\lbr 1 \rbr]}, \emptyset_{\mathbb{B}[\lbr 1 \rbr]}, \emptyset_{\mathbb{B}[\lbr 1 \rbr]} \rbr \big)$ and $\big(\lbr 1,2 \rbr, \emptyset, \lbr \emptyset_{\mathbb{B}[\lbr 1,2 \rbr]}, \emptyset_{\mathbb{B}[\lbr 1,2 \rbr]}, \emptyset_{\mathbb{B}[\lbr 1,2 \rbr]} \rbr \big)$  respectively. The left interface $\mc{I}$ specifies the entity $e^{b}_{11}$ and the right interface $\mc{O}$ specifies the entities $e^{b}_{21}$ and $e^{b}_{32}$, through which the open process network interconnects with other open process networks.
    
\end{example}
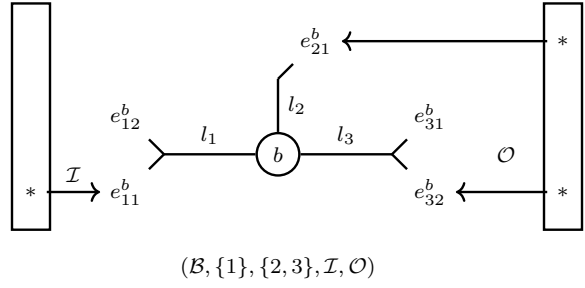
\begin{figure}[htb]
	\begin{center}
	\begin{tikzpicture}[font={\footnotesize}, line width = 0.9pt]
		\node at (3.5,0.5) [] {$(\mathcal{B},\{1\},\{2,3\},\mathcal{I},\mathcal{O})$};
		\node (b-l) at (3.5,2) [circle, draw] {$b$};
		\coordinate (l1-l) at (2,2);
		\node (e11-l) at (1.5,1.5) {$e^b_{11}$};
		\draw[-] (e11-l) -- (l1-l);
		\node (e12-l) at (1.5,2.5) {$e^b_{12}$};
		\draw[-] (l1-l) -- (b-l) node[pos=0.5, above] {$l_1$};
		\draw[-] (e12-l) -- (l1-l);
		\coordinate (l2-l) at (3.5,3);
		\draw[-] (b-l) -- (l2-l) node[pos=0.5, right] {$l_2$};
		\node (e21-l) at (4,3.5) {$e^b_{21}$};
		\draw[-] (e21-l) -- (l2-l);
		\coordinate (l3-l) at (5,2);
		\draw[-] (l3-l) -- (b-l) node[pos=0.5, above] {$l_3$};
		\node (e32-l) at (5.5,1.5) {$e^b_{32}$};
		\draw[-] (e32-l) -- (l3-l);
		\node (e31-l) at (5.5,2.5) {$e^b_{31}$};
		\draw[-] (e31-l) -- (l3-l);
		\draw[-] (0,1) -- (0,4) -- (0.5,4) -- (0.5,1) -- cycle;
		\node (Istar-ll) at (0.25,1.5) {$*$};
		\draw[->] (Istar-ll) -- (e11-l) node[pos=0.5, above] {$\mathcal{I}$};
		\draw[-] (7,1) -- (7,4) -- (7.5,4) -- (7.5,1) -- cycle;
		\node (Ostar-ll) at (7.25,1.5) {$*$};
		\draw[->] (Ostar-ll) -- (e32-l);
		\node (Ostar-lu) at (7.25,3.5) {$*$};
		\node at (6.5,2) {$\mathcal{O}$};
		\draw[->] (Ostar-lu) -- (e21-l);
	\end{tikzpicture}
	\end{center}
	\vspace{-0.3cm}
	\caption{An illustration of an open process network  discussed in Example \ref{Example:open-proces-network}.}
	\label{fig:open-process-network}
\end{figure}


\begin{remark}
To highlight the analogy of Definition \ref{Definition:Open process networks} with an \textit{open system}, a system that interacts with other systems, we call a process network with an interface also an open process network. These kinds of terminologies are common in Applied Category Theory literature; for example, in \cite{MR4085076}, authors have called Petri nets with interfaces as \textit{open Petri nets}, in \cite{MR3694082} and \cite{MR3478745} respectively, authors called the reaction network with an interface and Markov process with an interface as an \textit{open reaction network} and \textit{open Markov process}.
\end{remark}
Now, we need a technical lemma before we can construct our desired symmetric monoidal double category whose horizontal 1-morphisms are open process networks. 
\begin{lemma}\label{ARLemma Left adjoint bio}
	For each $n \in \mb{N}$, the functor $L \colon {\rm{\textbf{Set}}} \ra {\rm{\textbf{Process}}}_{n}$ has a right adjoint.
\end{lemma}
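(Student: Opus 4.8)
The plan is to produce the right adjoint explicitly rather than invoke an adjoint functor theorem, since the functor $L$ is degenerate enough that its partner is essentially forced. I would take $R \colon \textbf{Process}_n \ra \textbf{Set}$ to be the \emph{underlying entity-set} functor, sending a process network $(E,B,\lbr l_i\rbr_n)$ to its set of entities $E$ and a morphism $(\alpha,\beta)$ to $\alpha$. Functoriality is immediate from the componentwise definition of composition in $\textbf{Process}_n$ given in the proof of Proposition \ref{Prop:category-of-process-networks}. I claim $L \dashv R$.

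The heart of the argument is to establish a bijection, natural in $X \in \textbf{Set}$ and $\mc{B}=(E,B,\lbr l_i\rbr_n) \in \textbf{Process}_n$,
$$\hom_{\textbf{Process}_n}\big(L(X),\mc{B}\big) \;\cong\; \hom_{\textbf{Set}}\big(X, R(\mc{B})\big) = \hom_{\textbf{Set}}(X,E).$$
The key observation that makes the two sides agree is that $L(X)=(X,\emptyset,\lbr \emptyset_{\bar{\mb{Z}}_2[X]}\rbr_n)$ has an \emph{empty} set of process species. Consequently any morphism $(\alpha,\beta)\colon L(X)\ra \mc{B}$ has its second component $\beta$ forced to be the unique function $\emptyset \ra B$, and the commuting squares required by Definition \ref{Defn:morphism-of-process-networks} hold automatically: each leg of $L(X)$ has domain $\emptyset$, so both composites around a given square are the unique map out of $\emptyset$ and hence coincide. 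Thus a morphism $L(X)\ra\mc{B}$ carries exactly the data of a function $\alpha\colon X\ra E$, and conversely every such $\alpha$, paired with the unique $\beta$, is a legitimate morphism; this yields the desired bijection.

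Finally I would check naturality in both variables. Given $f\colon X'\ra X$ in $\textbf{Set}$ and $(\gamma,\delta)\colon\mc{B}\ra\mc{B}'$ in $\textbf{Process}_n$, chasing a morphism through the two routes of the naturality square yields the entity-component $\gamma\circ\alpha\circ f$ in both cases, since the bijection acts as the identity on the $\alpha$-component and $R(\gamma,\delta)=\gamma$. Equivalently, one may phrase the whole argument through unit and counit: the unit $\eta_X\colon X\ra R(L(X))=X$ is the identity, the counit $\epsilon_{\mc{B}}\colon L(R(\mc{B}))\ra\mc{B}$ is $(\mathrm{id}_E,\emptyset\ra B)$, and the triangle identities then hold trivially.

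There is no serious obstacle here; the only points demanding care are confirming that the commutativity conditions of Definition \ref{Defn:morphism-of-process-networks} are vacuous because $B=\emptyset$ for objects in the image of $L$, and keeping the variance straight so that the entity-set functor comes out as the \emph{right} adjoint (i.e.\ $L\dashv R$) rather than a left adjoint.
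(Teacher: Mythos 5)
Your proof is correct and takes essentially the same route as the paper: the paper likewise defines $R$ as the underlying entity-set functor and obtains the bijection $\hom_{{\rm{\textbf{Process}}}_n}(L(X),\mc{B}) \cong \hom_{{\rm{\textbf{Set}}}}(X,E)$ from the fact that $L(X)$ has an empty set of process species, so that morphisms out of $L(X)$ carry no data beyond the entity-component function. You are in fact more careful than the paper on the variance point you flag at the end: the paper's proof closes with the slip ``$R$ is a left adjoint of $L$,'' although its own natural isomorphism exhibits $R$ as the right adjoint.
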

\begin{proof}
	Let us define 	
	\begin{equation*}
		\begin{split}
			R \colon & {\rm{\textbf{Process}}}_{n} \ra {\rm{\textbf{Set}}}\\
			& (E, B, \lbr l_i \rbr_{n} ) \ra E\\
			& \Big((\alpha, \beta) \colon (E, B, \lbr l_i \rbr_{n} ) \ra (E', B', \lbr l_i' \rbr_{n} ) \Big) \mapsto \big( \alpha \colon E \ra E' \big).
		\end{split}
	\end{equation*}
	Observe that from the definition itself, it is clear that $R$ is well-defined and is a functor. Now, we have the following natural isomorphisms: 
	\begin{equation*}
		\begin{split}
			&\hom_{{\rm{\textbf{Set}}}} \big(L(X), (E, B, \lbr l_i \rbr_{n} ) \big)\\
			&\cong \hom_{{\rm{\textbf{Set}}}} \big((X, \emptyset, \lbr \underbrace{ \emptyset_{\mathbb{B}[X]}, \emptyset_{\mathbb{B}[X]}, \emptyset_{\mathbb{B}[X]}, \emptyset_{\mathbb{B}[X]}}_{n-times} \rbr ), (E, B, \lbr l_i \rbr_{n} ) \big)\\
			& \cong \hom_{{\rm{\textbf{Set}}}}(X, E) \,\, [\text{As}\,\, \emptyset \,\,  \text{is an initial object in \textbf{Set}}]\\
			& \cong \hom_{{\rm{\textbf{Set}}}}\Big(X, R\big( (E, B, \lbr l_i \rbr_{n} )  \big) \Big).
		\end{split}
	\end{equation*}
	Hence, $R$ is a left adjoint of $L$. 
\end{proof}

 We now have all the machinery to compose process networks with interfaces, as we see in our following result.

\begin{theorem}\label{ARMain Theorem1}
	Consider the functor $L \colon {\rm{\textbf{Set}}} \ra {\rm{\textbf{Process}}}_{n}$ defined in Lemma \ref{ARLemma: Bio Sets}. Then, there is a symmetric monoidal double category ${\rm{Open}}_{\rm{Double}}({\rm{\textbf{Process}}}_n)$ whose
	\begin{itemize}
		\item objects are sets,
		\item vertical 1-morphisms are functions,
		\item horizontal 1-cells from a set $X$ to a set $Y$ are open process networks
		\[
		\begin{tikzcd}
			& \mc{B} &                    \\
			L(X) \arrow[ru, "\mc{I}"] &   & L(Y) \arrow[lu, "\mc{O}"']
		\end{tikzcd}
  \]
		\item A 2-morphism 
  \begin{center}
		\begin{tikzcd}
			& \mc{B} &                    \\
			L(X) \arrow[ru, "\mc{I}"] &   & L(Y) \arrow[lu, "\mc{O}"']
		\end{tikzcd} $\Rightarrow$
		\begin{tikzcd}
			& \mc{B}' &                    \\
			L(X') \arrow[ru, "\mc{I}'"] &   & L(Y') \arrow[lu, "\mc{O}'"']
		\end{tikzcd}
        
  \end{center}
is given by a tuple $(f \colon X \ra X' , \eta \colon \mc{B} \ra \mc{B}', g \colon Y \ra Y')$, such that the following is a commutative diagram
		\[
		\begin{tikzcd}
			L(X) \arrow[r, "\mc{I}"] \arrow[d, "L(f)"'] & \mc{B} \arrow[d, "\eta"'] & L(Y) \arrow[l, "\mc{O}"'] \arrow[d, "L(g)"] \\
			L(X') \arrow[r, "\mc{I}'"]                 & \mc{B}'                 & L(Y') \arrow[l, "\mc{O}'"']               
		\end{tikzcd}
  \]	
		in \textbf{Process}$_{n}$.		 
		\item Composition of 1-morphisms is the standard composition of functions.
		
		\item Composition of horizontal 1-cells is given by the composition of cospans via pushout constructions.
		
		More precisely, suppose
  \begin{center}
		\begin{tikzcd}
			& \mc{B} &                    \\
			L(X) \arrow[ru, "\mc{I}_1"] &   & L(Y) \arrow[lu, "\mc{O}_1"']
		\end{tikzcd} and 
		\begin{tikzcd}
			& \mc{A} &                    \\
			L(Y) \arrow[ru, "\mc{I}_2"] &   & L(Z) \arrow[lu, "\mc{O}_2"']
		\end{tikzcd} 
        
  \end{center}
		be two process networks. Then, the composite open process network is given by the following cospan from $L(X)$ to $L(Z)$ in \textbf{Process}$_{n}$
		\[
		\begin{tikzcd}
			&                   & \mc{B} +_{L(Y)}\mc{A}                                  &                    &                    \\
			& \mc{B} \arrow[ru, ""] &                                    & \mc{A} \arrow[lu, ""'] &                    \\
			L(X) \arrow[ru, "\mc{I}_1"] &                   & L(Y) \arrow[lu, "\mc{O}_1"'] \arrow[ru, "\mc{I}_2"] &                    & L(Z) \arrow[lu, "\mc{O}_2"']
		\end{tikzcd}
  \]
		where $\mc{B} +_{L(Y)}\mc{A}$ is a chosen pushout square, and the unlabeled maps are the canonical maps to the pushout.
		\item the horizontal composition of 2-morphisms 
  
		\begin{tikzcd}
			L(X) \arrow[r, "\mc{I}_{1}"] \arrow[d, "L(f)"'] & \mc{B} \arrow[d, "\eta_1"'] & L(Y) \arrow[l, "\mc{O}_{1}"'] \arrow[d, "L(g)"] \\
			L(X') \arrow[r, "\mc{I}_{1}'"]                 & \mc{B}'                 & L(Y') \arrow[l, "\mc{O}_{1}'"']               
		\end{tikzcd} and 
		\begin{tikzcd}
			L(Y) \arrow[r, "\mc{I}_{2}"] \arrow[d, "L(g)"'] & \mc{A} \arrow[d, "\eta_{2}"'] & L(Z) \arrow[l, "\mc{O}_{2}"'] \arrow[d, "L(h)"] \\
			L(Y') \arrow[r, "\mc{I}_{2}'"]                 & \mc{A}'                 & L(Z') \arrow[l, "\mc{O}_{2}'"']               
		\end{tikzcd}
		is given as 
		\[
		\begin{tikzcd}
			L(X) \arrow[r, ""] \arrow[d, "L(f)"'] & \mc{B} +_{L(Y)}\mc{A} \arrow[d, "\eta_2 +_{L(g)} \eta_1"'] & L(Z) \arrow[l, ""'] \arrow[d, "L(g)"] \\
			L(X') \arrow[r, ""]                 & \mc{B}' +_{L(Y)}\mc{A}'                 & L(Z') \arrow[l, ""']               
		\end{tikzcd}
        \]
		where the unlabeled maps are the canonical composite maps to the pushouts,
		\item vertical composition of 2-morphisms 
  \begin{center}
  \begin{tikzcd}
			L(X) \arrow[r, "\mc{I}"] \arrow[d, "L(f)"'] & \mc{B} \arrow[d, "\eta"'] & L(Y) \arrow[l, "\mc{O}"'] \arrow[d, "L(g)"] \\
			L(X') \arrow[r, "\mc{I}'"]                 & \mc{B}'                 & L(Y') \arrow[l, "\mc{O}'"']               
		\end{tikzcd} 
  \end{center}
  and 
  \begin{center}
  \begin{tikzcd}
			L(X') \arrow[r, "\mc{I}'"] \arrow[d, "L(f')"'] & \mc{B}' \arrow[d, "\eta'"'] & L(Y') \arrow[l, "\mc{O}'"'] \arrow[d, "L(g')"] \\
			L(X'') \arrow[r, "\mc{I}''"]                 & \mc{B}''                 & L(Y'') \arrow[l, "\mc{O}''"']               
		\end{tikzcd}
           \end{center}
           
 is defined using the composition of functions

  \begin{center}
  \begin{tikzcd}
			L(X) \arrow[r, "\mc{I}"] \arrow[d, "L(f' \circ f)"'] & \mc{B} \arrow[d, "\eta' \circ \eta"'] & L(Y) \arrow[l, "\mc{O}"'] \arrow[d, "L(g' \circ g)"] \\
			L(X'') \arrow[r, "\mc{I}''"]                 & \mc{B}''                 & L(Y'') \arrow[l, "\mc{O}''"']               
		\end{tikzcd} 
  \end{center}
 
		\item The symmetric monoidal structure is derived from the coproducts in \textbf{Set} and \textbf{Process}$_{n}$. More precisely, the monoidal product is defined using chosen coproducts in \textbf{Set} and \textbf{Process}$_{n}$. Hence, 
  \begin{itemize}
  \item the monoidal product of two finite sets $X_1$ and $X_2$ is the disjoint union $X_1 + X_2$,
  \item  the monoidal product of two vertical 1-morphisms $f_1 \colon X_1 \ra Y_1$ and $f_2 \colon X_2 \ra Y_2$ is given by the natural map $f_1 +f_2 \colon X_1 +X_2 \ra Y_1 +Y_2$,
\item the monoidal product of horizontal 1-cells 

\begin{tikzcd}
			& \mc{B}_1 &                    \\
			L(X_1) \arrow[ru, "\mc{I}_1"] &   & L(Y_1) \arrow[lu, "\mc{O}_1"']
		\end{tikzcd} and 
		\begin{tikzcd}
			& B_2 &                    \\
			L(X_2) \arrow[ru, "\mc{I}_2"] &   & L(Y_2) \arrow[lu, "\mc{O}_2"']
		\end{tikzcd} 
  is given as 
  \begin{center}
  \begin{tikzcd}
			& \mc{B}_1 + \mc{B}_2 &                    \\
			L(X_1+X_2) \arrow[ru, "\mc{I}_1 + \mc{I}_2"] &   & L(Y_1+Y_2) \arrow[lu, "\mc{O}_1 +\mc{O}_2"']
		\end{tikzcd}
  \end{center}
  \item the monoidal product of two 2-morphisms

  \begin{tikzcd}
			L(X_1) \arrow[r, "\mc{I}_1"] \arrow[d, "L(f_1)"'] & \mc{B}_1 \arrow[d, "\eta_1"'] & L(Y_1) \arrow[l, "\mc{O}_1"'] \arrow[d, "L(g_1)"] \\
			L(X_1') \arrow[r, "\mc{I}_1'"]                 & \mc{B}_1'                 & L(Y_1') \arrow[l, "\mc{O}_1'"']               
		\end{tikzcd} and 
		\begin{tikzcd}
			L(X_2) \arrow[r, "\mc{I}_2"] \arrow[d, "L(f_2)"'] & \mc{B}_2 \arrow[d, "\eta_2"'] & L(Y_2) \arrow[l, "\mc{O}_2"'] \arrow[d, "L(g_2)"] \\
			L(X_2') \arrow[r, "\mc{I}_2'"]                 & \mc{B}_2'                 & L(Y_2') \arrow[l, "\mc{O}_2'"']               
		\end{tikzcd}
		is given as 
		\[
		\begin{tikzcd}
			L(X_1+X_2) \arrow[r, "\mc{I}_1+ \mc{I}_2"] \arrow[d, "L(f_1+f_2)"'] & \mc{B}_1+ \mc{B}_2 \arrow[d, "\eta_1+\eta_2"'] & L(Y_1+Y_2) \arrow[l, "\mc{O}_1 + \mc{O}_2"'] \arrow[d, "L(g_1+g_2)"] \\
			L(X_1' +X_2') \arrow[r, "\mc{I}_1'+ \mc{I}_2'"]                 & \mc{B}_1' + \mc{B}_2'                & L(Y_1'+Y_2') \arrow[l, "\mc{O}_1'+ \mc{O}_2'"']               
		\end{tikzcd}
        \]
We consider initial objects as units for these monoidal products, and the symmetry is defined using the canonical isomorphism $X+Y \cong Y+ X$.
  \end{itemize}

	\end{itemize}
\end{theorem}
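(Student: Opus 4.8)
The plan is to recognise Theorem~\ref{ARMain Theorem1} as a direct instance of Baez and Courser's general machine for building symmetric monoidal double categories out of structured cospans \cite{baez2020structuredcospans}, rather than constructing the double category and checking all its coherence axioms by hand. Their main result produces, from any functor $L \colon A \ra X$ that preserves finite colimits between two categories each possessing finite colimits, a symmetric monoidal double category (in their notation ${}_{L}\mb{C}{\rm sp}(X)$) whose objects are the objects of $A$, whose vertical $1$-morphisms are the morphisms of $A$, whose horizontal $1$-cells are structured cospans $L(a) \ra x \leftarrow L(b)$, and whose $2$-morphisms are the evident commuting diagrams; in that construction horizontal composition is by pushout, vertical composition is inherited from $A$, and the symmetric monoidal structure is induced by chosen coproducts. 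Thus it suffices to verify that our specific $L \colon \mathbf{Set} \ra \mathbf{Process}_n$ meets the three hypotheses of that theorem, whereupon every clause of the statement --- composition by pushout, the coproduct monoidal product, the units as initial objects, and the symmetry from $X + Y \cong Y + X$ --- is precisely the specialisation of their general recipe.

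First I would record that $\mathbf{Set}$ has all finite colimits, which is standard. Second, $\mathbf{Process}_n$ has all finite colimits by Proposition~\ref{Proposition: Bio is finitely cocomplete}. Third, and this is the only genuinely structural input, the functor $L$ must preserve finite colimits: by Lemma~\ref{ARLemma Left adjoint bio}, $L$ admits a right adjoint $R$, so $L$ is a left adjoint and therefore preserves all colimits, in particular all finite ones (initial object, binary coproducts, and pushouts). With these three facts in hand, the theorem of \cite{baez2020structuredcospans} applies verbatim and delivers the symmetric monoidal double category ${\rm Open}_{\rm Double}(\mathbf{Process}_n)$ with exactly the objects, vertical and horizontal morphisms, and $2$-cells listed.

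The remaining task is cosmetic rather than conceptual: to unwind the abstract constructions and confirm that they reproduce the explicit formulas written in the statement. Horizontal composition of the cospans $L(X) \ra \mc{B} \leftarrow L(Y)$ and $L(Y) \ra \mc{A} \leftarrow L(Z)$ is computed as the pushout $\mc{B} +_{L(Y)} \mc{A}$, which exists by the finite cocompleteness of $\mathbf{Process}_n$; horizontal composition of $2$-morphisms is the induced map $\eta_2 +_{L(g)} \eta_1$ on pushouts, obtained from their universal property; vertical composition is componentwise composition of functions and process-network morphisms; and the monoidal product is the coproduct, where $L(X_1 + X_2) \cong L(X_1) + L(X_2)$ precisely because $L$ preserves coproducts. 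I expect the main obstacle --- to the extent that one remains --- to lie entirely in having established preservation of finite colimits by $L$; once Lemma~\ref{ARLemma Left adjoint bio} and Proposition~\ref{Proposition: Bio is finitely cocomplete} are granted, the substantive work is already done, and the proof reduces to invoking \cite{baez2020structuredcospans} and matching notation.
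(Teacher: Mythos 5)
Your proposal is correct and takes essentially the same route as the paper: the paper's proof likewise consists of combining Proposition \ref{Proposition: Bio is finitely cocomplete} (finite cocompleteness of $\mathbf{Process}_{n}$) with Lemma \ref{ARLemma Left adjoint bio} ($L$ has a right adjoint, hence preserves finite colimits) and then invoking the general structured-cospan machinery, in its case via Lemma 14 of \cite{MR4085076} rather than the formulation in \cite{baez2020structuredcospans}. The difference is only in which incarnation of the Baez et al.\ construction is cited, not in the mathematical content.
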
		
\begin{proof}
	Since the category \textbf{Process}$_{n}$ is finitely cocomplete (Proposition 
\ref{Proposition: Bio is finitely cocomplete}), and the functor $L \colon {\rm{\textbf{Set}}} \ra {\rm{\textbf{Process}}}_{n}$ has a right adjoint (Lemma \ref{ARLemma Left adjoint bio}), then the proof our theorem follows from the \textit{Lemma 14} of \cite{MR4085076}.
\end{proof}

The following lemma and the Lemma \ref{ARLemma:coproduct Biochemical process species} show respectively, how the composition and the monoidal product of open process networks look explicitly in the light of Theorem \ref{ARMain Theorem1}.

\begin{lemma}\label{ARLemma:Pushout Biochemical process species}
		Consider two process networks $\mc{B}_1=(E_1, B_1, \lbr l_{i_{1}} \rbr_{n} ) $ and  $\mc{B}_2 =(E_2, B_2, \lbr l_{i_{2}} \rbr_{n} )$ in \textbf{Process}$_{n}$. 
		Let $\mc{O}_1 \colon L(Y) \ra \mc{B}_1$ and $\mc{I}_2 \colon L(Y) \ra \mc{B}_2$ be two morphisms in the category \textbf{Process}$_{n}$, where $L \colon {\rm{\textbf{Set}}} \ra {\rm{\textbf{Process}}}_{n}$ is the functor as defined in Lemma \ref{ARLemma: Bio Sets}. Then, the pushout  process network $\mc{B}_1 +_{\mc{O}_1, L(Y), \mc{I}_2}\mc{B}_2= (E, B, \lbr l_i \rbr_{n} )$  exists, and the following defines a pushout up to an isomorphism:
		\begin{itemize}
			\item $E:= E_1 +_{O_1, Y, I_2} E_2$, where $O_1 \colon Y \ra E_1$ and $I_2 \colon Y \ra E_2$ are the underlying canonical functions associated to $\mc{O}_1$ and $\mc{I}_2$, respectively.
			\item $B:= B_1 \sqcup B_2$.
			\item For each i $\in \lbr 1,2, \ldots, n \rbr$, $l_i \colon B \ra  \mathbb{B}[E]$ is defined as 
			\begin{equation*}
				\begin{split}
					l_i \colon  B & \ra \mathbb{B}[E]\\
					 (b_1, B_1) & \mapsto \mathbb{B}[j_1]\big(l_{i {_1}}(b_1) \big)\\
					 (b_2, B_2) & \mapsto \mathbb{B}[j_2]\big(l_{i {_2}}(b_2) \big),
				\end{split}
			\end{equation*}
   \end{itemize}
		where $j_1 \colon E_1 \ra E, x \mapsto [(x, E_1)]$ and $j_2 \colon E_2 \ra E, x \mapsto [(x, E_2)]$, where $[\,\,]$ represents the equivalence class.
	\end{lemma}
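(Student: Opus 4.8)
The plan is to verify directly that the candidate $(E, B, \lbr l_i \rbr_{n})$ together with the evident inclusion morphisms $k_1 \colon \mc{B}_1 \ra (E, B, \lbr l_i \rbr_{n})$ and $k_2 \colon \mc{B}_2 \ra (E, B, \lbr l_i \rbr_{n})$ satisfies the universal property of the pushout of the span $\mc{B}_1 \xleftarrow{\mc{O}_1} L(Y) \xrightarrow{\mc{I}_2} \mc{B}_2$ in \textbf{Process}$_n$. Existence of the pushout is already guaranteed by Proposition \ref{Proposition: Bio is finitely cocomplete}, so it suffices to check that the explicit formulas compute it up to isomorphism.

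First I would record the crucial simplification that, because the process-species set of $L(Y)$ is empty, any morphism out of $L(Y)$ is completely determined by its entity component; hence $\mc{O}_1 = (O_1, \emptyset)$ and $\mc{I}_2 = (I_2, \emptyset)$ with $O_1 \colon Y \ra E_1$, $I_2 \colon Y \ra E_2$, and the commuting-square condition of Definition \ref{Defn:morphism-of-process-networks} is vacuous for them. Take $k_1 := (j_1, \iota_1)$ and $k_2 := (j_2, \iota_2)$, where $j_1, j_2$ are the coprojections into $E = E_1 +_{O_1, Y, I_2} E_2$ and $\iota_1, \iota_2$ are the coproduct inclusions into $B = B_1 \sqcup B_2$. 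That $k_1, k_2$ are morphisms of process networks is immediate from the very definition of $l_i$, which was engineered so that $\bar{\mb{Z}}_2[j_1] \circ l_{i_1} = l_i \circ \iota_1$ and similarly for index $2$; and the cocone condition $k_1 \circ \mc{O}_1 = k_2 \circ \mc{I}_2$ reduces on entities to $j_1 \circ O_1 = j_2 \circ I_2$ (which holds since $E$ is the \textbf{Set}-pushout) and on species to the equality of empty functions.

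For universality, suppose $\mc{C} = (E_C, B_C, \lbr l_{i_C} \rbr_{n})$ receives morphisms $m_1 = (m_1^E, m_1^B) \colon \mc{B}_1 \ra \mc{C}$ and $m_2 = (m_2^E, m_2^B) \colon \mc{B}_2 \ra \mc{C}$ agreeing on $L(Y)$. The entity components satisfy $m_1^E \circ O_1 = m_2^E \circ I_2$, so the universal property of the pushout $E$ in \textbf{Set} yields a unique $u^E \colon E \ra E_C$ with $u^E \circ j_r = m_r^E$, while the coproduct $B = B_1 \sqcup B_2$ yields a unique $u^B \colon B \ra B_C$ with $u^B \circ \iota_r = m_r^B$. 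I would then set $u := (u^E, u^B)$ and check it is a morphism, i.e.\ $\bar{\mb{Z}}_2[u^E] \circ l_i = l_{i_C} \circ u^B$, by evaluating on each summand: for $b_1 \in B_1$, functoriality of $\bar{\mb{Z}}_2[-]$ gives $\bar{\mb{Z}}_2[u^E]\bigl(l_i(\iota_1(b_1))\bigr) = \bar{\mb{Z}}_2[u^E \circ j_1]\bigl(l_{i_1}(b_1)\bigr) = \bar{\mb{Z}}_2[m_1^E]\bigl(l_{i_1}(b_1)\bigr)$, which equals $l_{i_C}\bigl(m_1^B(b_1)\bigr) = l_{i_C}\bigl(u^B(\iota_1(b_1))\bigr)$ precisely because $m_1$ is a process-network morphism, and symmetrically for $B_2$. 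Uniqueness of $u$ is forced by the uniqueness of $u^E$ and $u^B$.

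The only genuinely delicate step is this last verification that the induced pair $u$ respects the legs; it hinges on combining the functoriality identity $\bar{\mb{Z}}_2[g \circ f] = \bar{\mb{Z}}_2[g] \circ \bar{\mb{Z}}_2[f]$ with the morphism property of $m_1, m_2$, carried out componentwise over the disjoint union $B_1 \sqcup B_2$. Everything else is bookkeeping: well-definedness of $l_i$ is automatic since $B$ is a genuine disjoint union, so the two defining clauses never collide, and the maps into $\bar{\mb{Z}}_2[E]$ are forced to respect the quotient defining $E$ because they factor through the coprojections $j_1, j_2$. As an alternative I could instead invoke the equivalence $\textbf{Process}_n \simeq \rm{Id}/\bar{\mb{Z}}_2[-]^n$ of Lemma \ref{ARLemma Bio coma} and the fact that, since $\rm{Id}$ preserves colimits, colimits in this comma category are computed componentwise (domain pushout $B_1 \sqcup B_2$, codomain pushout $E_1 +_Y E_2$); but the hands-on route above yields the stated formula for $l_i$ most transparently.
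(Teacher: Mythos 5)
Your proposal is correct and takes essentially the same route as the paper: existence is deduced from Proposition \ref{Proposition: Bio is finitely cocomplete}, and the explicit data are then checked against the pushout universal property, a verification the paper's proof declares routine and omits but which you carry out in full and correctly (including the key observation that morphisms out of $L(Y)$ have vacuous species component, and the functoriality of $\bar{\mb{Z}}_2[-]$ needed to see that the induced map respects the legs). Nothing is missing.
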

\begin{proof}
		The existence of pushout is a direct consequence of Theorem \ref{Proposition: Bio is finitely cocomplete}. Checking $\mc{B}_1 +_{\mc{O}_1, L(Y), \mc{I}_2}\mc{B}_2$	is indeed a pushout in \textbf{Process}$_{n}$ is a routine verification.

	\end{proof}	
 
 \begin{example}[{Combining open process networks using the  composition laws of horizontal 1-morphisms}]\label{Example:ombining open process networks using the  composition laws of horizontal 1-morphisms} In Figure \ref{fig:composition-of-open-process-network}, we illustrate the composition of open process networks $(\mc{B}, \lbr 1\rbr, \lbr 2,3 \rbr, \mc{I}, \mc{O})$ and  $(\mc{B}', \lbr 2,3\rbr, \lbr 4 \rbr, \mc{I}', \mc{O}')$, where 
 $$
 \mc{B}= \big( \lbr e^{b}_{11}, e^{b}_{12}, e^{b}_{21}, e^{b}_{31}, e^{b}_{32} \rbr, \lbr b \rbr, \lbr l_i \rbr_{3}\big)
 $$ 
 and 
 $$
 \mc{B}'= \big( \lbr e^{b'}_{11}, e^{b'}_{12}, e^{b'}_{31}, e^{b'}_{32} \rbr, \lbr b' \rbr, \lbr l'_i \rbr_{3}\big)
 $$ 
 are process networks. Here, the functor $L$ is as defined in Lemma \ref{ARLemma: Bio Sets}. While composing using Theorem \ref{ARMain Theorem1}, the right interface $\mc{O}$ of $(\mc{B}, \lbr 1\rbr, \lbr 2,3 \rbr, \mc{I}, \mc{O})$ and the left interface $\mc{I}'$ of $(\mc{B}', \lbr 2,3\rbr, \lbr 4 \rbr, \mc{I}', \mc{O}')$ specify the entities to be identified. In particular, they identify $e^{b}_{21}$ and $e^{b'}_{12}$ as an entity $e$ and identify $e^{b}_{32}$ and $e^{b'}_{11}$ as an entity $d$. The process network $\mc{B} +_{\mc{O},L(\lbr 2,3 \rbr), \mc{I}'} \mc{B}'$ in the composite open process network $(\mc{B} +_{\mc{O},L(\lbr 2,3 \rbr), \mc{I}'} \mc{B}', \lbr 1 \rbr, \lbr 4 \rbr, \mc{I}, \mc{O}', \mc{B}')$ is computed using the specification given in Lemma \ref{ARLemma:Pushout Biochemical process species}. Explicitly, $\mc{B} +_{\mc{O},L(\lbr 2,3 \rbr), \mc{I}'} \mc{B}'= (\lbr e^{b}_{11}, e^{b}_{12}, e, e^{b}_{31}, d, e^{b'}_{31}, e^{b'}_{32}\rbr, \lbr b,b' \rbr, \lbr \bar{l_i} \rbr_{3} )$, where the functions $\bar{l_i}$ are evident in the Figure \ref{fig:composition-of-open-process-network}.
 \end{example}

 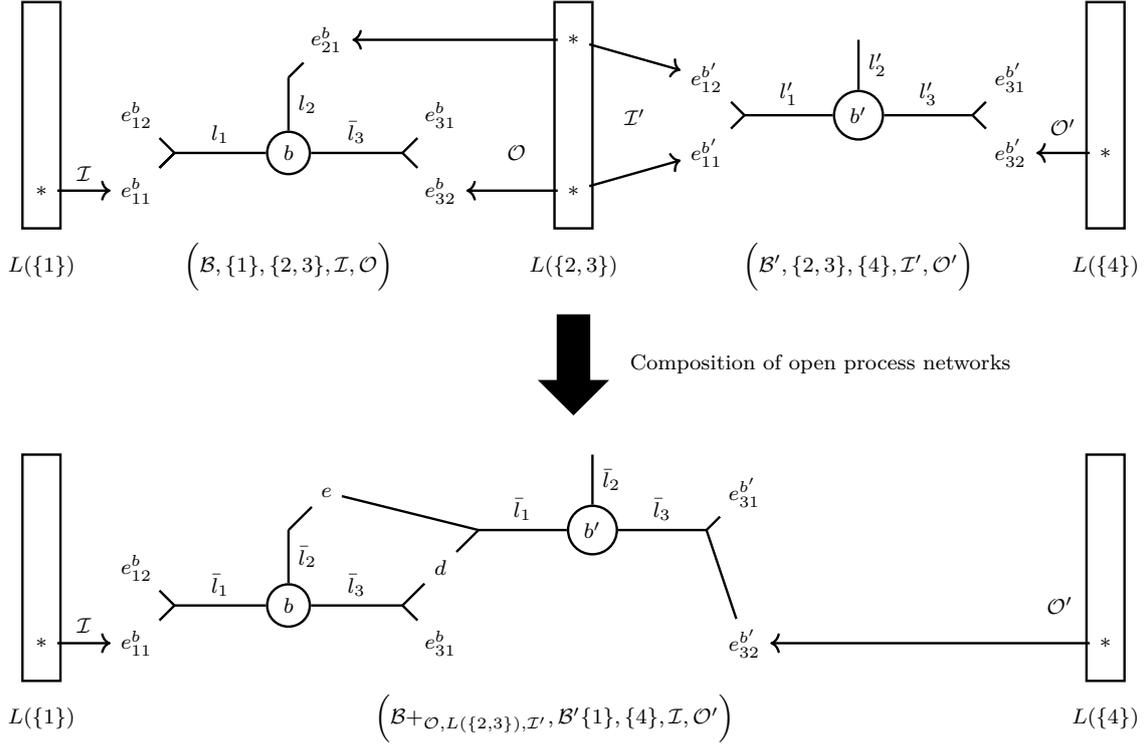
\begin{figure}[H]
	\begin{center}
	\begin{tikzpicture}[font={\footnotesize}, line width = 0.9pt]
		\node at (7,0.5) [] {$\Big(\mathcal{B}+_{\mathcal{O},L(\{2,3\}),\mathcal{I}'}, \mathcal{B}'\{1\},\{4\},\mathcal{I},\mathcal{O}'\Big)$};
		\node (b-l) at (3.5,2) [circle, draw] {$b$};
		\coordinate (l1-l) at (2,2);
		\node (e11-l) at (1.5,1.5) {$e^b_{11}$};
		\draw[-] (e11-l) -- (l1-l);
		\node (e12-l) at (1.5,2.5) {$e^b_{12}$};
		\draw[-] (l1-l) -- (b-l) node[pos=0.5, above] {$\bar{l}_1$};
		\draw[-] (e12-l) -- (l1-l);
		\coordinate (l2-l) at (3.5,3);
		\draw[-] (b-l) -- (l2-l) node[pos=0.5, right] {$\bar{l}_2$};
		\node (e-l) at (4,3.5) {$e$};
		\draw[-] (e-l) -- (l2-l);
		\coordinate (l3-l) at (5,2);
		\draw[-] (l3-l) -- (b-l) node[pos=0.5, above] {$\bar{l}_3$};
		\node (d-l) at (5.5,2.5) {$d$};
		\draw[-] (d-l) -- (l3-l);
		\node (e31-l) at (5.5,1.5) {$e^b_{31}$};
		\draw[-] (e31-l) -- (l3-l);
		\draw[-] (0,1) -- (0,4) -- (0.5,4) -- (0.5,1) -- cycle;
		\node (Istar-ll) at (0.25,1.5) {$*$};
		\draw[->] (Istar-ll) -- (e11-l) node[pos=0.5, above] {$\mathcal{I}$};
		\node () at (0.25,0.5) {$L(\{1\})$};
		\draw[-] (14,1) -- (14,4) -- (14.5,4) -- (14.5,1) -- cycle;
		\node (Ostar-ll) at (14.25,1.5) {$*$};
		\node at (13.65,2) {$\mathcal{O}'$};
		\node () at (14.25,0.5) {$L(\{4\})$};
		\coordinate (l1b-l) at (6,3);
		\draw[-] (e-l) -- (l1b-l);
		\draw[-] (d-l) -- (l1b-l);
		\node (bp-l) at (7.5,3) [circle, draw] {$b'$};
		\draw[-] (bp-l) -- (l1b-l) node[pos=0.5, above] {$\bar{l}_1$};
		\coordinate (l2b-l) at (7.5,4);
		\draw[-] (bp-l) -- (l2b-l) node[pos=0.5, right] {$\bar{l}_2$};
		\coordinate (l3b-l) at (9,3);
		\draw[-] (bp-l) -- (l3b-l) node[pos=0.5, above] {$\bar{l}_3$};
		\node (e31bp-l) at (9.5,3.5) {$e^{b'}_{31}$};
		\draw[-] (l3b-l) -- (e31bp-l);
		\node (e32bp-l) at (9.5,1.5) {$e^{b'}_{32}$};
		\draw[-] (l3b-l) -- (e32bp-l);
		\draw[->] (Ostar-ll) -- (e32bp-l);
		%
		%
		\draw[-] (0,7) -- (0,10) -- (0.5,10) -- (0.5,7) -- cycle;
		\node (Istar-ul) at (0.25,7.5) {$*$};
		\node () at (0.25,6.5) {$L(\{1\})$};
		\node (e11-ul) at (1.5,7.5) {$e^b_{11}$};
		\draw[->] (Istar-ul) -- (e11-ul) node[pos=0.5, above] {$\mathcal{I}$};
		\coordinate (l1-ul) at (2,8);
		\node (b-ul) at (3.5,8) [circle, draw] {$b$};
		\draw[-] (e11-ul) -- (l1-ul);
		\node (e12-ul) at (1.5,8.5) {$e^b_{12}$};
		\draw[-] (l1-ul) -- (b-ul) node[pos=0.5, above] {$l_1$};
		\draw[-] (e12-ul) -- (l1-ul);
		\draw[-] (e11-ul) -- (l1-ul);
		\coordinate (l2-ul) at (3.5,9);
		\draw[-] (b-ul) -- (l2-ul) node[pos=0.5, right] {$l_2$};
		\node (e-ul) at (4,9.5) {$e^b_{21}$};
		\draw[-] (e-ul) -- (l2-ul);
		\coordinate (l3-ul) at (5,8);
		\draw[-] (l3-ul) -- (b-ul) node[pos=0.5, above] {$\bar{l}_3$};
		\node (e32-ul) at (5.5,7.5) {$e^b_{32}$};
		\draw[-] (e32-ul) -- (l3-ul);
		\node (e31-ul) at (5.5,8.5) {$e^b_{31}$};
		\draw[-] (e31-ul) -- (l3-ul);
		\node at (3.5,6.5) [] {$\Big(\mathcal{B},\{1\},\{2,3\},\mathcal{I},\mathcal{O}\Big)$};
		\draw[-] (7,7) -- (7,10) -- (7.5,10) -- (7.5,7) -- cycle;
		\node () at (7.25,6.5) {$L(\{2,3\})$};
		\node (Ostar-uru) at (7.25,7.5) {$*$};
		\node (Ostar-url) at (7.25,9.5) {$*$};
		\draw[->] (Ostar-url) -- (e-ul);
		\node at (6.5,8) {$\mathcal{O}$};
		\draw[->] (Ostar-uru) -- (e32-ul);
		\node [fill=black, single arrow, draw=none, text=black, rotate=-90] at (7.25,5.3) {compo};
		\node at (10.5,5.2) [] {Composition of open process networks};
		\node (e11-ur) at (9,8) {$e^{b'}_{11}$};
		\draw[->] (Ostar-uru) -- (e11-ur) node[pos=0.5, above, yshift=5mm] {$\mathcal{I}'$};
		\node (e12-ur) at (9,9) {$e^{b'}_{12}$};
		\draw[->] (Ostar-url) -- (e12-ur);
		\coordinate (l1-ur) at (9.5,8.5);
		\draw[-] (l1-ur) -- (e11-ur);
		\draw[-] (l1-ur) -- (e12-ur);
		\node (bp-l) at (11,8.5) [circle, draw] {$b'$};
		\draw[-] (l1-ur) -- (bp-l) node[pos=0.5, above] {$l'_1$};
		\coordinate (l2b-ur) at (11,9.5);
		\draw[-] (bp-l) -- (l2b-ur) node[pos=0.5, right] {$l'_2$};
		\coordinate (l3p-ur) at (12.5,8.5);
		\draw[-] (bp-l) -- (l3p-ur) node[pos=0.5, above] {$l'_3$};
		\node at (11,6.5) [] {$\Big(\mathcal{B}',\{2,3\},\{4\},\mathcal{I}',\mathcal{O}'\Big)$};
		\node (e31p-ur) at (13,9) {$e^{b'}_{31}$};
		\draw[-] (l3p-ur) -- (e31p-ur);
		\node (e32p-ur) at (13,8) {$e^{b'}_{32}$};
		\draw[-] (l3p-ur) -- (e32p-ur);
		\draw[-] (14,7) -- (14,10) -- (14.5,10) -- (14.5,7) -- cycle;
		\node (Ost-uru) at (14.25,8) {$*$};
		\draw[->] (Ost-uru) -- (e32p-ur) node[pos=0.5, above, yshift=1mm] {$\mathcal{O}'$};
		\node () at (14.25,6.5) {$L(\{4\})$};
	\end{tikzpicture}
	\end{center}
	\vspace{-0.3cm}
	\caption{An illustration of composing open process networks using the composition law of horizontal 1-morphisms in Theorem \ref{ARMain Theorem1}, see Example \ref{Example:ombining open process networks using the  composition laws of horizontal 1-morphisms}.}
	\label{fig:composition-of-open-process-network}
\end{figure}


\begin{example}[{Combining open process networks using the monoidal product of horizontal 1-morphisms}]\label{Example:Combining open process networks using the monoidal product of horizontal 1-morphisms}

In Figure \ref{fig:monoidal-product-of-open-process-network}, we illustrate the monoidal product of open process networks $(\mc{B}, \lbr 1 \rbr, \lbr 4 \rbr, \mc{I}, \mc{O})$ and $(\mc{B}', \lbr 2,3 \rbr, \lbr 5 \rbr, \mc{I}', \mc{O}')$  using the monoidal product defined in Theorem \ref{ARMain Theorem1}, where $\mc{B}=( \lbr e^{b}_{11}, e^{b}_{12}, e^{b}_{21}, e^{b}_{31}, e^{b}_{32} \rbr, \lbr b \rbr, \lbr l_i \rbr_{3})$ and $\mc{B}'= (\lbr e^{b'}_{11}, e^{b'}_{12}, e^{b'}_{31}, e^{b'}_{32} \rbr, \lbr b' \rbr, \lbr l'_i \rbr_{3})$ Here, the functor $L$ is as defined in Lemma \ref{ARLemma: Bio Sets}. Observe that the  process network $\mc{B} + \mc{B}'$ in the composite open process network $(\mc{B} + \mc{B}', \lbr 1 \rbr \sqcup \lbr 2,3 \rbr, \lbr 4 \rbr \sqcup \lbr 5 \rbr, \mc{I} + \mc{I}', \mc{O} + \mc{O}')$ is computed using the specification given in Lemma \ref{ARLemma:coproduct Biochemical process species}. Explicitly, the process network $$\mc{B} + \mc{B}'=( \lbr e^{b}_{11}, e^{b}_{12}, e^{b}_{21}, e^{b}_{31}, e^{b}_{32}, e^{b'}_{11}, e^{b'}_{12}, e^{b'}_{31}, e^{b'}_{32}\rbr, \lbr b,b' \rbr, \lbr \bar{l}_{i}\rbr_{3} ),$$ where the functions $\bar{l_i}$ are evident in the Figure \ref{fig:monoidal-product-of-open-process-network}.
\end{example}

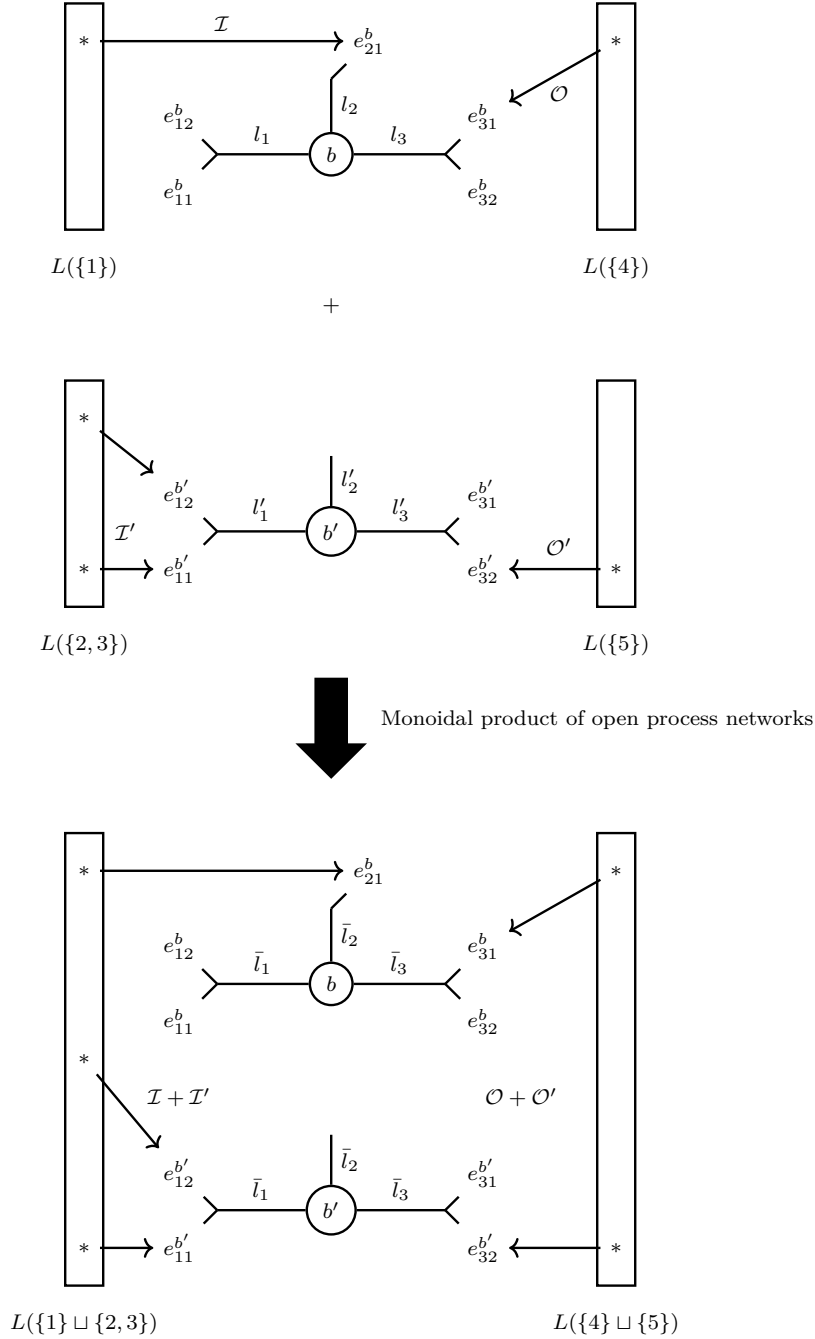
\begin{figure}[H]
	\begin{center}
	\begin{tikzpicture}[font={\footnotesize}, line width = 0.9pt]
		%
		\node (b-l) at (5,14) [circle, draw] {$b$};
		\coordinate (l1-l) at (3.5,14);
		\node (e11-l) at (3,13.5) {$e^b_{11}$};
		\draw[-] (e11-l) -- (l1-l);
		\node (e12-l) at (3,14.5) {$e^b_{12}$};
		\draw[-] (l1-l) -- (b-l) node[pos=0.5, above] {$l_1$};
		\draw[-] (e12-l) -- (l1-l);
		\coordinate (l2-l) at (5,15);
		\draw[-] (b-l) -- (l2-l) node[pos=0.5, right] {$l_2$};
		\node (e21-l) at (5.5,15.5) {$e^b_{21}$};
		\draw[-] (e21-l) -- (l2-l);
		\coordinate (l3-l) at (6.5,14);
		\draw[-] (l3-l) -- (b-l) node[pos=0.5, above] {$l_3$};
		\node (e32-l) at (7,13.5) {$e^b_{32}$};
		\draw[-] (e32-l) -- (l3-l);
		\node (e31-l) at (7,14.5) {$e^b_{31}$};
		\draw[-] (e31-l) -- (l3-l);
		\draw[-] (1.5,13) -- (1.5,16) -- (2,16) -- (2,13) -- cycle;
		\node (Istar-ll) at (1.75,15.5) {$*$};
		\draw[->] (Istar-ll) -- (e21-l) node[pos=0.5, above] {$\mathcal{I}$};
		\draw[-] (8.5,13) -- (8.5,16) -- (9,16) -- (9,13) -- cycle;
		\node (Ostar-lu) at (8.75,15.5) {$*$};
		\node at (8,14.8) {$\mathcal{O}$};
		\draw[->] (Ostar-lu) -- (e31-l);
  \node at (1.75,12.5) {$L(\lbrace 1 \rbrace)$};
\node at (8.75,12.5) {$L(\lbrace 4 \rbrace)$};

\node (b-l) at (5,9) [circle, draw] {$b'$};
		\coordinate (l1-l) at (3.5,9);
		\node (e11-l) at (3,8.5) {$e^{b'}_{11}$};
		\draw[-] (e11-l) -- (l1-l);
		\node (e12-l) at (3,9.5) {$e^{b'}_{12}$};
		\draw[-] (l1-l) -- (b-l) node[pos=0.5, above] {$l'_1$};
		\draw[-] (e12-l) -- (l1-l);
		\coordinate (l2-l) at (5,10);
		\draw[-] (b-l) -- (l2-l) node[pos=0.5, right] {$l'_2$};
		%
		\coordinate (l3-l) at (6.5,9);
		\draw[-] (l3-l) -- (b-l) node[pos=0.5, above] {$l'_3$};
		\node (e32-l) at (7,8.5) {$e^{b'}_{32}$};
		\draw[-] (e32-l) -- (l3-l);
		\node (e31-l) at (7,9.5) {$e^{b'}_{31}$};
		\draw[-] (e31-l) -- (l3-l);
		\draw[-] (1.5,8) -- (1.5,11) -- (2,11) -- (2,8) -- cycle;
		\node (Istar-upper) at (1.75,10.5) {$*$};
		\node (Istar-lower) at (1.75,8.5) {$*$};
		\draw[-] (8.5,8) -- (8.5,11) -- (9,11) -- (9,8) -- cycle;
		\node (Ostar-lu) at (8.75,8.5) {$*$};
		\node at (8,8.8) {$\mathcal{O}'$};
		\draw[->] (Ostar-lu) -- (e32-l);
  \node at (1.75,7.5) {$L(\lbrace 2,3 \rbrace)$};
\node at (8.75,7.5) {$L(\lbrace 5 \rbrace)$};
\node at (5,12) {$+$};
\node at (2.30,9) {$\mathcal{I}'$};
\draw[->] (Istar-upper) -- (e12-l);
\draw[->] (Istar-lower) -- (e11-l);

 \node [fill=black, single arrow, draw=none, text=black, rotate=-90] at (5,6.5) {compo};
		\node at (8.5, 6.5) [] {Monoidal product of open process networks};


	\draw[-] (1.5,-1) -- (1.5,5) -- (2,5) -- (2,-1) -- cycle;
 
 \draw[-] (8.5,-1) -- (8.5,5) -- (9,5) -- (9,-1) -- cycle;

 \node (b-l) at (5,3) [circle, draw] {$b$};
		\coordinate (l1-l) at (3.5,3);
		\node (e11-l) at (3,2.5) {$e^b_{11}$};
		\draw[-] (e11-l) -- (l1-l);
		\node (be12-l) at (3,3.5) {$e^b_{12}$};
		\draw[-] (l1-l) -- (b-l) node[pos=0.5, above] {$\bar{l}_1$};
		\draw[-] (be12-l) -- (l1-l);
		\coordinate (l2-l) at (5,4);
		\draw[-] (b-l) -- (l2-l) node[pos=0.5, right] {$\bar{l}_2$};
		\node (be21-l) at (5.5,4.5) {$e^b_{21}$};
		\draw[-] (be21-l) -- (l2-l);
		\coordinate (l3-l) at (6.5,3);
		\draw[-] (l3-l) -- (b-l) node[pos=0.5, above] {$\bar{l}_3$};
		\node (e32-l) at (7,2.5) {$e^b_{32}$};
		\draw[-] (e32-l) -- (l3-l);
		\node (be31-l) at (7,3.5) {$e^b_{31}$};
		\draw[-] (be31-l) -- (l3-l);
  \node (b-l) at (5,0) [circle, draw] {$b'$};
		\coordinate (l1-l) at (3.5,0);
		\node (bbe11-l) at (3,-0.5) {$e^{b'}_{11}$};
		\draw[-] (bbe11-l) -- (l1-l);
		\node (bbe12-l) at (3,0.5) {$e^{b'}_{12}$};
		\draw[-] (l1-l) -- (b-l) node[pos=0.5, above] {$\bar{l}_1$};
		\draw[-] (bbe12-l) -- (l1-l);
		\coordinate (l2-l) at (5,1);
		\draw[-] (b-l) -- (l2-l) node[pos=0.5, right] {$\bar{l}_2$};
		%
		\coordinate (l3-l) at (6.5,0);
		\draw[-] (l3-l) -- (b-l) node[pos=0.5, above] {$\bar{l}_3$};
		\node (bbe32-l) at (7,-0.5) {$e^{b'}_{32}$};
		\draw[-] (bbe32-l) -- (l3-l);
		\node (e31-l) at (7,0.5) {$e^{b'}_{31}$};
          \draw[-] (e31-l) -- (l3-l);

  
  \node (bIstar-upper) at (1.75,4.5) {$*$};
  \node (bIstar-middle) at (1.75, 2) {$*$};

  \node (bIstar-lower) at (1.75,-0.5) {$*$};

  \node (bOstar-upper) at (8.75,4.5) {$*$};
  \node (bOstar-lower) at (8.75,-0.5) {$*$};
  
\draw[->] (bIstar-upper)-- (be21-l);
\draw[->] (bIstar-middle)-- (bbe12-l);
\draw[->] (bIstar-lower)-- (bbe11-l);
\draw[->] (bOstar-upper)-- (be31-l);
\draw[->] (bOstar-lower)-- (bbe32-l);

\node at (1.75,-1.5) {$L(\lbrace 1 \rbrace \sqcup \lbrace 2,3 \rbrace)$};

\node at (8.75,-1.5) {$L(\lbrace 4 \rbrace \sqcup \lbrace 5 \rbrace)$};

\node at (3, 1.5) {$\mc{I} + \mc{I}'$};
\node at (7.5, 1.5) {$\mc{O} + \mc{O}'$};

	\end{tikzpicture}
	\end{center}
	\vspace{-0.3cm}
	\caption{An illustration of combining open process networks  using the monoidal product defined in Theorem \ref{ARMain Theorem1}, see Example \ref{Example:Combining open process networks using the monoidal product of horizontal 1-morphisms}. We combine the open process networks $(\mc{B}, \lbr 1 \rbr, \lbr 4 \rbr, \mc{I}, \mc{O})$ (top) and $(\mc{B}', \lbr 2,3 \rbr, \lbr 5 \rbr, \mc{I}', \mc{O}')$ (top) to obtain the open process network $(\mc{B} + \mc{B}', \lbr 1 \rbr \sqcup \lbr 2,3 \rbr, \lbr 4 \rbr \sqcup \lbr 5 \rbr, \mc{I} + \mc{I}', \mc{O} + \mc{O}')$ (below). }
	\label{fig:monoidal-product-of-open-process-network}
\end{figure}

\begin{example}[{2-morphism between two open process networks}]\label{Example:2-morphism-between two-open-process networks}
In Figure \ref{fig:2morphism}, we illustrate a 2-morphism $({\rm{id}_{\lbr 1 \rbr}} \colon \lbr 1 \rbr \ra \lbr 1\rbr, \eta \colon \mc{B} \ra \mc{B}', {\rm{id}_{\lbr 2,3 \rbr}} \colon \lbr 2,3 \rbr \ra \lbr 2,3 \rbr  )$ from the open process network $(\mc{B}, \lbr 1 \rbr, \lbr 2,3 \rbr, \mc{I}, \mc{O})$ to the open process network $(\mc{B}', \lbr 1 \rbr, \lbr 2,3 \rbr, \mc{I}', \mc{O}')$, where $\mc{B}=(E=\lbr e^{b}_{11}, e^{b}_{12}, e^{b}_{21}, e^{b}_{31}, e^{b}_{32} \rbr, B= \lbr b \rbr, \lbr l_i \rbr_{3} )$ and $\mc{B}'=( E'=\lbr e^{b}_{11}, e^{b}_{21}, e^{b}_{31} \rbr, B'=\lbr b \rbr, \lbr l'_i \rbr_{3} )$ are process networks. Here, $\eta: =(\alpha \colon E \ra E', \beta \colon B \ra B')$ is a morphism of process networks, where $\beta$ is the identity map on the set $\lbr b \rbr$, and $\alpha$ maps $e^b_{11} \mapsto e^{b}_{11}, e^b_{12} \mapsto e^b_{11}, e^b_{21} \mapsto e^b_{21}, e^b_{31} \mapsto e^b_{32}, e^b_{32} \mapsto e^b_{32}$, and hence we have $\mc{I}' \circ L({\rm{id}}_{\lbr 1 \rbr})= \eta \circ \mc{I}$ and $\mc{O}' \circ L({\rm{id}}_{\lbr 2,3 \rbr})= \eta \circ \mc{O}$.
\end{example}
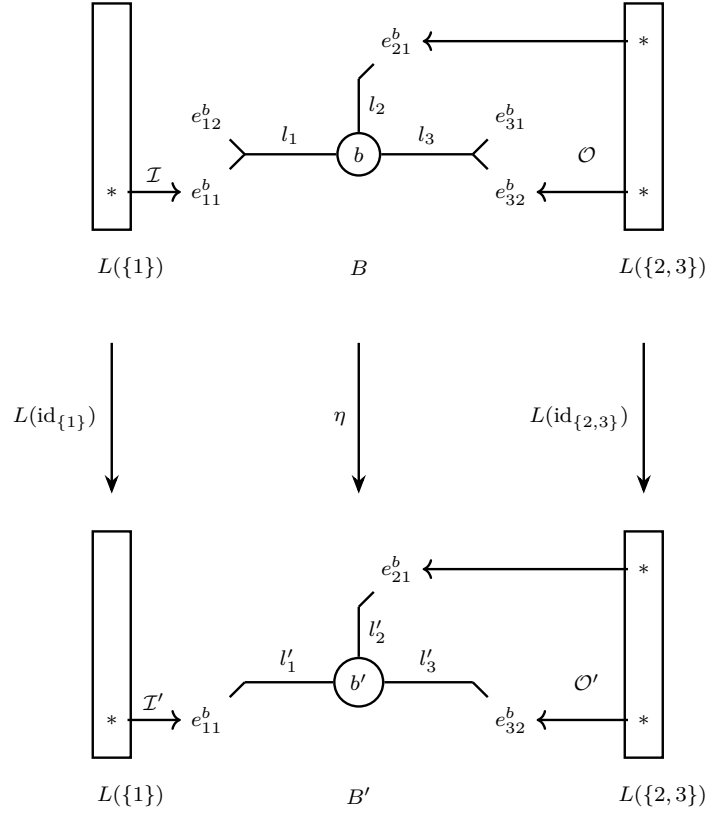
\begin{figure}[H]
	\begin{center}
	\begin{tikzpicture}[font={\footnotesize}, line width = 0.9pt]
		\node at (3.5,0.5)  {$B$};
        \node (L1) at (0.5,0.5) {$L(\lbrace 1 \rbrace)$};
		\node (b-l) at (3.5,2) [circle, draw] {$b$};
  \node at (7.5,0.5) {$L(\lbrace 2,3 \rbrace)$};
		\coordinate (l1-l) at (2,2);
		\node (e11-l) at (1.5,1.5) {$e^b_{11}$};
		\draw[-] (e11-l) -- (l1-l);
		\node (e12-l) at (1.5,2.5) {$e^b_{12}$};
		\draw[-] (l1-l) -- (b-l) node[pos=0.5, above] {$l_1$};
		\draw[-] (e12-l) -- (l1-l);
		\coordinate (l2-l) at (3.5,3);
		\draw[-] (b-l) -- (l2-l) node[pos=0.5, right] {$l_2$};
		\node (e21-l) at (4,3.5) {$e^b_{21}$};
		\draw[-] (e21-l) -- (l2-l);
		\coordinate (l3-l) at (5,2);
		\draw[-] (l3-l) -- (b-l) node[pos=0.5, above] {$l_3$};
		\node (e32-l) at (5.5,1.5) {$e^b_{32}$};
		\draw[-] (e32-l) -- (l3-l);
		\node (e31-l) at (5.5,2.5) {$e^b_{31}$};
		\draw[-] (e31-l) -- (l3-l);
		\draw[-] (0,1) -- (0,4) -- (0.5,4) -- (0.5,1) -- cycle;
		\node (Istar-ll) at (0.25,1.5) {$*$};
		\draw[->] (Istar-ll) -- (e11-l) node[pos=0.5, above] {$\mathcal{I}$};
		\draw[-] (7,1) -- (7,4) -- (7.5,4) -- (7.5,1) -- cycle;
		\node (Ostar-ll) at (7.25,1.5) {$*$};
		\draw[->] (Ostar-ll) -- (e32-l);
		\node (Ostar-lu) at (7.25,3.5) {$*$};
		\node at (6.5,2) {$\mathcal{O}$};
		\draw[->] (Ostar-lu) -- (e21-l);

  \node (b-l) at (3.5,-5) [circle, draw] {$b'$};

 \coordinate (l1-l) at (2,-5);
		\node (e11-l) at (1.5,-5.5) {$e^b_{11}$};
		\draw[-] (e11-l) -- (l1-l);
		\draw[-] (l1-l) -- (b-l) node[pos=0.5, above] {$l'_1$};
		%
		\coordinate (l2-l) at (3.5,-4);
		\draw[-] (b-l) -- (l2-l) node[pos=0.5, right] {$l'_2$};
		\node (e21-l) at (4,-3.5) {$e^b_{21}$};
		\draw[-] (e21-l) -- (l2-l);
		\coordinate (l3-l) at (5,-5);
		\draw[-] (l3-l) -- (b-l) node[pos=0.5, above] {$l'_3$};
		\node (e32-l) at (5.5,-5.5) {$e^b_{32}$};
		\draw[-] (e32-l) -- (l3-l);
		%
		\draw[-] (0,-6) -- (0,-3) -- (0.5,-3) -- (0.5,-6) -- cycle;
		\node (Istar-ll) at (0.25,-5.5) {$*$};
		\draw[->] (Istar-ll) -- (e11-l) node[pos=0.5, above] {$\mathcal{I}'$};
		\draw[-] (7,-6) -- (7,-3) -- (7.5,-3) -- (7.5,-6) -- cycle;
		\node (Ostar-ll) at (7.25,-5.5) {$*$};
		\draw[->] (Ostar-ll) -- (e32-l);
		\node (Ostar-lu) at (7.25,-3.5) {$*$};
		\node at (6.5,-5) {$\mathcal{O}'$};
		\draw[->] (Ostar-lu) -- (e21-l);
      \node at (7.5,-6.5) {$L(\lbrace 2,3 \rbrace)$};
       \node at (0.5,-6.5) {$L(\lbrace 1 \rbrace)$};
       \node at (3.5,-6.5)  {$B'$};
       \draw[-Stealth] (.25,-.5) -- (0.25,-2.5);
       \draw[-Stealth] (3.50,-.5) -- (3.50,-2.5);
        \draw[-Stealth] (7.25,-.5) -- (7.25,-2.5);
        \node at (-0.5,-1.5)  {$L({\rm{id}}_{\lbrace 1 \rbrace})$};
        \node at (3.25,-1.5)  {$\eta$};
         \node at (6.40,-1.5)  {$L({\rm{id}}_{\lbrace 2,3 \rbrace})$};
	\end{tikzpicture}
	\end{center}
	\vspace{-0.3cm}
	\caption{An illustration of 2-morphism between two open process networks, see Example \ref{Example:2-morphism-between two-open-process networks}.}
	\label{fig:2morphism}
\end{figure}

\begin{example}[{Horizontal composition of 2-morphisms}]\label{Example:horizontal-composition-of-2-morphisms}
In Figure \ref{fig:horizontal-composition-of-2-morphisms} and Figure \ref{fig:part2composition-of-open-process-network}, we illustrate the horizontal composition of the 2-morphism $({\rm{id}}_{\lbr 2,3 \rbr},\eta_2, {\rm{id}}_{\lbr 4 \rbr} )$ from the open process network $(\mc{B}_2, \lbr 2,3 \rbr, \lbr 4 \rbr, \mc{I}_2, \mc{O}_2)$ to the open process network  $(\mc{B}'_2, \lbr 2,3 \rbr, \lbr 4 \rbr, \mc{I}'_2, \mc{O}'_2)$, and the 2-morphism $({\rm{id}}_{\lbr 1 \rbr},\eta_1, {\rm{id}}_{\lbr 2,3 \rbr} )$  from the open process network $(\mc{B}_1, \lbr 1 \rbr, \lbr 2,3 \rbr, \mc{I}_1, \mc{O}_1)$ to the open process network $(\mc{B}_1, \lbr 1 \rbr, \lbr 2,3 \rbr, \mc{I}'_1, \mc{O}'_1)$. 2-morphisms $\eta_2$ and $\eta_1$ are shown in Figure \ref{fig:horizontal-composition-of-2-morphisms}, and  their horizontal composition $\eta_2 +_{L({\rm{id}}_{\lbr 2,3 \rbr})} \eta_1$ is shown in Figure \ref{fig:part2composition-of-open-process-network}. Note that here the horizontal composition of open process networks identifies the entities $e^{b_1}_{21}$ and $e^{b_2}_{21}$ into the entity $e$, and identifies the entities $e^{b_1}_{31}$ and $e^{b_2}_{11}$ into $d$.
\end{example}

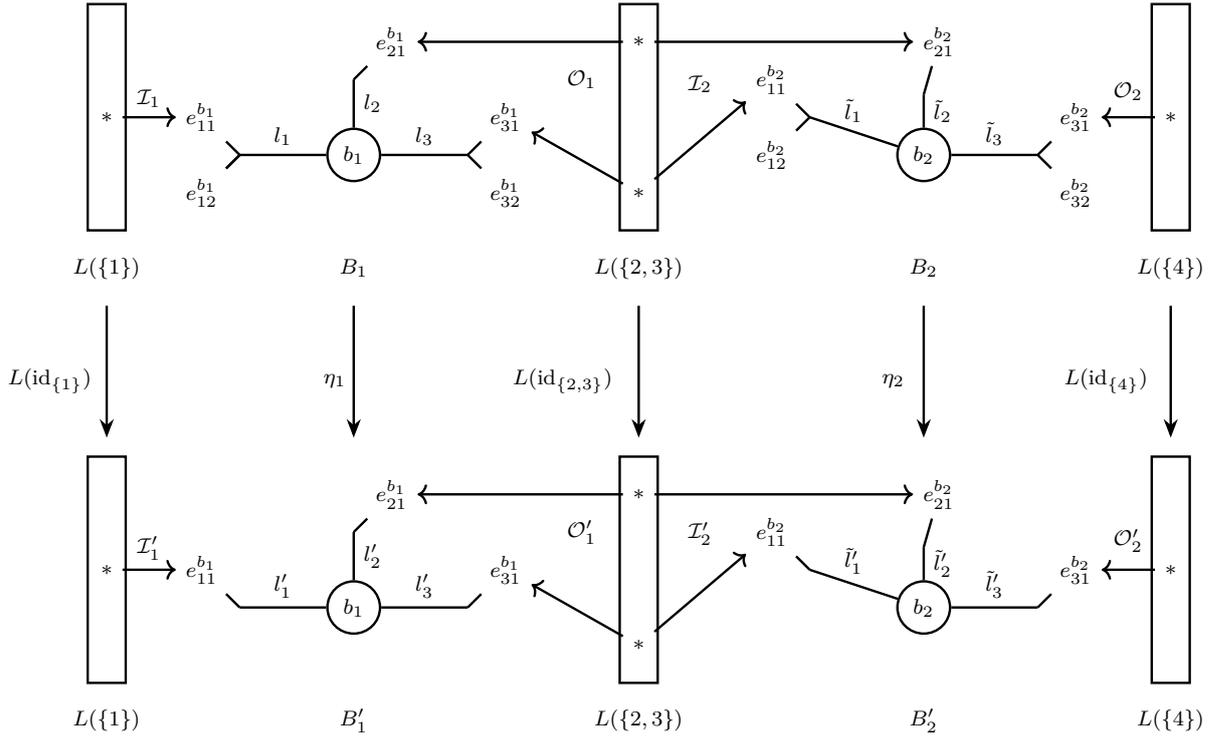
\begin{figure}[H]
	\begin{center}
	\begin{tikzpicture}[font={\footnotesize}, line width = 0.9pt]
		%
		%
		\draw[-] (0,7) -- (0,10) -- (0.5,10) -- (0.5,7) -- cycle;
		\node (Istar-ul) at (0.25,8.5) {$*$};
		\node () at (0.25,6.5) {$L(\{1\})$};
		\node (e11-ul) at (1.5,7.5) {$e^{b_1}_{12}$};
		\coordinate (l1-ul) at (2,8);
		\node (b-ul) at (3.5,8) [circle, draw] {$b_1$};
		\draw[-] (e11-ul) -- (l1-ul);
		\node (e12-ul) at (1.5,8.5) {$e^{b_1}_{11}$};
  \draw[->] (Istar-ul) -- (e12-ul) node[pos=0.5, above] {$\mathcal{I}_1$};
		\draw[-] (l1-ul) -- (b-ul) node[pos=0.5, above] {$l_1$};
		\draw[-] (e12-ul) -- (l1-ul);
		\draw[-] (e11-ul) -- (l1-ul);
		\coordinate (l2-ul) at (3.5,9);
		\draw[-] (b-ul) -- (l2-ul) node[pos=0.5, right] {$l_2$};
		\node (e-ul) at (4,9.5) {$e^{b_1}_{21}$};
		\draw[-] (e-ul) -- (l2-ul);
		\coordinate (l3-ul) at (5,8);
		\draw[-] (l3-ul) -- (b-ul) node[pos=0.5, above] {$l_3$};
		\node (e32-ul) at (5.5,7.5) {$e^{b_{1}}_{32}$};
		\draw[-] (e32-ul) -- (l3-ul);
		\node (e31-ul) at (5.5,8.5) {$e^{b_{1}}_{31}$};
		\draw[-] (e31-ul) -- (l3-ul);
		%
		%
		\draw[-] (7,7) -- (7,10) -- (7.5,10) -- (7.5,7) -- cycle;
		\node () at (7.25,6.5) {$L(\{2,3\})$};
		\node (Ostar-uru) at (7.25,7.5) {$*$};
		\node (Ostar-url) at (7.25,9.5) {$*$};
		\draw[->] (Ostar-url) -- (e-ul);
		\node at (6.5,9) {$\mathcal{O}_1$};
		\draw[->] (Ostar-uru) -- (e31-ul);
		\node (e11-ur) at (9,8) {$e^{b_2}_{12}$};
  \node (e12-ur) at (9,9) {$e^{b_2}_{11}$};
		\draw[->] (Ostar-uru) -- (e12-ur) node[pos=0.5, above, yshift=5mm] {$\mathcal{I}_2$};
  \node (e21-ul) at (11.2,9.5) {$e^{b_2}_{21}$};
		\draw[->] (Ostar-url) -- (e21-ul);
		\coordinate (l1-ur) at (9.5,8.5);
		\draw[-] (l1-ur) -- (e11-ur);
		\draw[-] (l1-ur) -- (e12-ur);
		\node (bp-l) at (11,8) [circle, draw] {$b_2$};
		\draw[-] (l1-ur) -- (bp-l) node[pos=0.5, above] {$\tilde{l}_1$};
		\coordinate (l2b-ur) at (11,8.8);
		\draw[-] (bp-l) -- (l2b-ur) node[pos=0.5, right] {$\tilde{l}_2$};
		\coordinate (l3p-ur) at (12.5,8);
		\draw[-] (bp-l) -- (l3p-ur) node[pos=0.5, above] {$\tilde{l}_3$};
		%
		%
		\node (e31p-ur) at (13,8.5) {$e^{b_2}_{31}$};
		\draw[-] (l3p-ur) -- (e31p-ur);
		\node (e32p-ur) at (13,7.5) {$e^{b_2}_{32}$};
		\draw[-] (l3p-ur) -- (e32p-ur);
		\draw[-] (14,7) -- (14,10) -- (14.5,10) -- (14.5,7) -- cycle;
		\node (Ost-uru) at (14.25,8.5) {$*$};
		\draw[->] (Ost-uru) -- (e31p-ur) node[pos=0.5, above, yshift=1mm] {$\mathcal{O}_2$};
		\node  at (14.25,6.5) {$L(\{4\})$};
		\draw[-] (e21-ul) -- (l2b-ur);

\draw[-] (0,1) -- (0,4) -- (0.5,4) -- (0.5,1) -- cycle;
		\node (Istar-ul) at (0.25,2.5) {$*$};
		\node () at (0.25,0.5) {$L(\{1\})$};
		%
		\coordinate (l1-ul) at (2,2);
		\node (b-ul) at (3.5,2) [circle, draw] {$b_1$};
		\node (e012-ul) at (1.5,2.5) {$e^{b_1}_{11}$};
		\draw[-] (l1-ul) -- (b-ul) node[pos=0.5, above] {$l'_1$};
		\draw[-] (e012-ul) -- (l1-ul);
  \draw[->] (Istar-ul) -- (e012-ul) node[pos=0.5, above] {$\mathcal{I}'_1$};
		\coordinate (l2-ul) at (3.5,3);
		\draw[-] (b-ul) -- (l2-ul) node[pos=0.5, right] {$l'_2$};
		\node (e-ul) at (4,3.5) {$e^{b_1}_{21}$};
		\draw[-] (e-ul) -- (l2-ul);
		\coordinate (l3-ul) at (5,2);
		\draw[-] (l3-ul) -- (b-ul) node[pos=0.5, above] {$l'_3$};
		\node (e31-ul) at (5.5,2.5) {$e^{b_{1}}_{31}$};
		\draw[-] (e31-ul) -- (l3-ul);
		%
		%
		\draw[-] (7,1) -- (7,4) -- (7.5,4) -- (7.5,1) -- cycle;
		\node () at (7.25,0.5) {$L(\{2,3\})$};
		\node (Ostar-uru) at (7.25,1.5) {$*$};
		\node (Ostar-url) at (7.25,3.5) {$*$};
		\draw[->] (Ostar-url) -- (e-ul);
		\node at (6.5,3) {$\mathcal{O}'_1$};
		\draw[->] (Ostar-uru) -- (e31-ul);
		%
		%
  \node (e12-ur) at (9,3) {$e^{b_2}_{11}$};
		\draw[->] (Ostar-uru) -- (e12-ur) node[pos=0.5, above, yshift=5mm] {$\mathcal{I}'_2$};
  \node (e21-ul) at (11.2,3.5) {$e^{b_2}_{21}$};
		\draw[->] (Ostar-url) -- (e21-ul);
		\coordinate (l1-ur) at (9.5,2.5);
		\draw[-] (l1-ur) -- (e12-ur);
		\node (bp-l) at (11,2) [circle, draw] {$b_2$};
		\draw[-] (l1-ur) -- (bp-l) node[pos=0.5, above] {$\tilde{l}'_1$};
		\coordinate (l2b-ur) at (11,2.8);
		\draw[-] (bp-l) -- (l2b-ur) node[pos=0.5, right] {$\tilde{l}'_2$};
		\coordinate (l3p-ur) at (12.5,2);
		\draw[-] (bp-l) -- (l3p-ur) node[pos=0.5, above] {$\tilde{l}'_3$};
		%
		%
		\node (e31p-ur) at (13,2.5) {$e^{b_2}_{31}$};
		\draw[-] (l3p-ur) -- (e31p-ur);
		%
		\draw[-] (14,1) -- (14,4) -- (14.5,4) -- (14.5,1) -- cycle;
		\node (Ost-uru) at (14.25,2.5) {$*$};
		\draw[->] (Ost-uru) -- (e31p-ur) node[pos=0.5, above, yshift=1mm] {$\mathcal{O}'_2$};
		\node  at (14.25,0.5) {$L(\{4\})$};
		\draw[-] (e21-ul) -- (l2b-ur);
   \draw[-Stealth] (.25, 6) -- (0.25, 4.25);
   \draw[-Stealth] (7.25, 6) -- (7.25, 4.25);
   \draw[-Stealth] (14.25, 6) -- (14.25, 4.25);
   \node at (3.5,0.5) {$B'_1$};
   \node at (3.5,6.5) {$B_1$};
   \node at (11,0.5) {$B'_2$};
    \draw[-Stealth] (3.5, 6) -- (3.5, 4.25);
    \node at (11,6.5) {$B_2$};
    \draw[-Stealth] (11, 6) -- (11, 4.25);
    \node at (-.5, 5) {$L({\rm{id}_{ \lbrace 1 \rbrace}})$};
    \node at (3.25,5) {$\eta_1$};
    \node at (6.25,5) {$L({\rm{id}_{ \lbrace 2,3 \rbrace}})$};
    \node at (10.6,5) {$\eta_2$};
     \node at (13.4,5) {$L({\rm{id}_{ \lbrace 4\rbrace}})$};

	\end{tikzpicture}
	\end{center}
	\vspace{-0.3cm}
	\caption{An illustration showing 2-moprhisms between composable horizontal 1-morphisms considered in Example \ref{Example:horizontal-composition-of-2-morphisms}.}
	\label{fig:horizontal-composition-of-2-morphisms}
\end{figure}

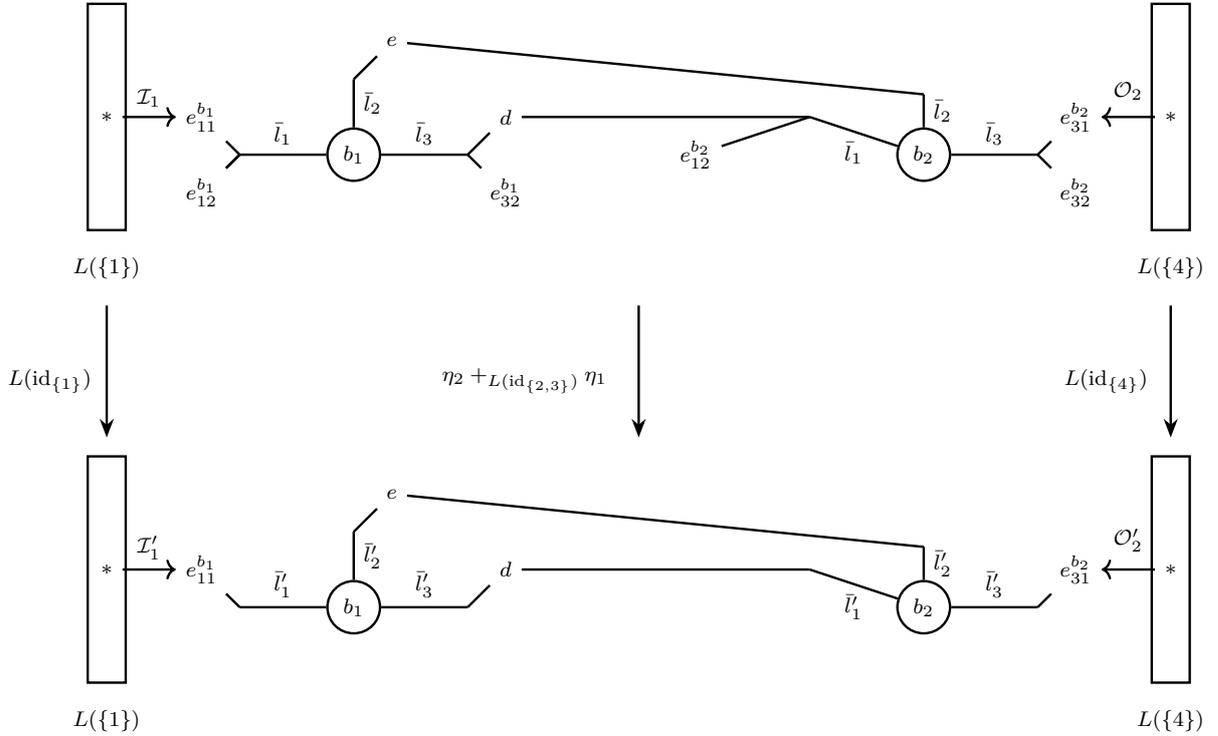
\begin{figure}[H]
	\begin{center}
	\begin{tikzpicture}[font={\footnotesize}, line width = 0.9pt]
		%
		%
		\draw[-] (0,7) -- (0,10) -- (0.5,10) -- (0.5,7) -- cycle;
		\node (Istar-ul) at (0.25,8.5) {$*$};
		\node () at (0.25,6.5) {$L(\{1\})$};
		\node (e11-ul) at (1.5,7.5) {$e^{b_1}_{12}$};
		\coordinate (l1-ul) at (2,8);
		\node (b-ul) at (3.5,8) [circle, draw] {$b_1$};
		\draw[-] (e11-ul) -- (l1-ul);
		\node (e12-ul) at (1.5,8.5) {$e^{b_1}_{11}$};
  \draw[->] (Istar-ul) -- (e12-ul) node[pos=0.5, above] {$\mathcal{I}_1$};
		\draw[-] (l1-ul) -- (b-ul) node[pos=0.5, above] {$\bar{l}_1$};
		\draw[-] (e12-ul) -- (l1-ul);
		\draw[-] (e11-ul) -- (l1-ul);
		\coordinate (l2-ul) at (3.5,9);
		\draw[-] (b-ul) -- (l2-ul) node[pos=0.5, right] {$\bar{l}_2$};
		\node (e-ul) at (4,9.5) {$e$};
		\draw[-] (e-ul) -- (l2-ul);
		\coordinate (l3-ul) at (5,8);
		\draw[-] (l3-ul) -- (b-ul) node[pos=0.5, above] {$\bar{l}_3$};
		\node (e32-ul) at (5.5,7.5) {$e^{b_{1}}_{32}$};
		\draw[-] (e32-ul) -- (l3-ul);
		\node (e31-ul) at (5.5,8.5) {$d$};
		\draw[-] (e31-ul) -- (l3-ul);
		%
		%
		%
		%
		\node (e11-ur) at (8,8) {$e^{b_2}_{12}$};
		%
		\coordinate (l1-ur) at (9.5,8.5);
		\draw[-] (l1-ur) -- (e11-ur);
		\draw[-] (l1-ur) -- (e31-ul);
		\node (bp-l) at (11,8) [circle, draw] {$b_2$};
		\draw[-] (l1-ur) -- (bp-l) node[pos=0.5, below] {$\bar{l}_1$};
		\coordinate (l2b-ur) at (11,8.8);
		\draw[-] (bp-l) -- (l2b-ur) node[pos=0.5, right] {$\bar{l}_2$};
		\coordinate (l3p-ur) at (12.5,8);
		\draw[-] (bp-l) -- (l3p-ur) node[pos=0.5, above] {$\bar{l}_3$};
		\node (e31p-ur) at (13,8.5) {$e^{b_2}_{31}$};
		\draw[-] (l3p-ur) -- (e31p-ur);
		\node (e32p-ur) at (13,7.5) {$e^{b_2}_{32}$};
		\draw[-] (l3p-ur) -- (e32p-ur);
		\draw[-] (14,7) -- (14,10) -- (14.5,10) -- (14.5,7) -- cycle;
		\node (Ost-uru) at (14.25,8.5) {$*$};
		\draw[->] (Ost-uru) -- (e31p-ur) node[pos=0.5, above, yshift=1mm] {$\mathcal{O}_2$};
		\node  at (14.25,6.5) {$L(\{4\})$};
		\draw[-] (e-ul) -- (l2b-ur);

	\draw[-] (0,1) -- (0,4) -- (0.5,4) -- (0.5,1) -- cycle;
		\node (Istar-ul) at (0.25,2.5) {$*$};
		\node () at (0.25,0.5) {$L(\{1\})$};
		%
		\coordinate (l1-ul) at (2,2);
		\node (b-ul) at (3.5,2) [circle, draw] {$b_1$};
		\node (e12-ul) at (1.5,2.5) {$e^{b_1}_{11}$};
  \draw[->] (Istar-ul) -- (e12-ul) node[pos=0.5, above] {$\mathcal{I}'_1$};
		\draw[-] (l1-ul) -- (b-ul) node[pos=0.5, above] {$\bar{l}'_1$};
		\draw[-] (e12-ul) -- (l1-ul);
		%
		\coordinate (l2-ul) at (3.5,3);
		\draw[-] (b-ul) -- (l2-ul) node[pos=0.5, right] {$\bar{l}'_2$};
		\node (e-ul) at (4,3.5) {$e$};
		\draw[-] (e-ul) -- (l2-ul);
		\coordinate (l3-ul) at (5,2);
		\draw[-] (l3-ul) -- (b-ul) node[pos=0.5, above] {$\bar{l}'_3$};
		\node (e31-ul) at (5.5,2.5) {$d$};
		\draw[-] (e31-ul) -- (l3-ul);
		%
		%
		%
		%
		%
		\coordinate (l1-ur) at (9.5,2.5);
		\draw[-] (l1-ur) -- (e31-ul);
		\node (bp-l) at (11,2) [circle, draw] {$b_2$};
		\draw[-] (l1-ur) -- (bp-l) node[pos=0.5, below] {$\bar{l}'_1$};
		\coordinate (l2b-ur) at (11,2.8);
		\draw[-] (bp-l) -- (l2b-ur) node[pos=0.5, right] {$\bar{l}'_2$};
		\coordinate (l3p-ur) at (12.5,2);
		\draw[-] (bp-l) -- (l3p-ur) node[pos=0.5, above] {$\bar{l}'_3$};
		\node (e31p-ur) at (13,2.5) {$e^{b_2}_{31}$};
		\draw[-] (l3p-ur) -- (e31p-ur);
		%
		\draw[-] (14,1) -- (14,4) -- (14.5,4) -- (14.5,1) -- cycle;
		\node (Ost-uru) at (14.25,2.5) {$*$};
		\draw[->] (Ost-uru) -- (e31p-ur) node[pos=0.5, above, yshift=1mm] {$\mathcal{O}'_2$};
		\node  at (14.25,0.5) {$L(\{4\})$};
		\draw[-] (e-ul) -- (l2b-ur);


   \draw[-Stealth] (.25, 6) -- (0.25, 4.25);
   \draw[-Stealth] (7.25, 6) -- (7.25, 4.25);
   \draw[-Stealth] (14.25, 6) -- (14.25, 4.25);
    \node at (-.5, 5) {$L({\rm{id}_{ \lbrace 1 \rbrace}})$};
    \node at (5.75,5) {$\eta_2 +_{L({\rm{id}}_{\lbrace 2,3 \rbrace})} \eta_1$};
     \node at (13.4,5) {$L({\rm{id}_{ \lbrace 4\rbrace}})$};


	\end{tikzpicture}
	\end{center}
	\vspace{-0.3cm}
	\caption{An illustration of horizontal composition of the 2-morphisms as shown in Example \ref{Example:horizontal-composition-of-2-morphisms}.}
	\label{fig:part2composition-of-open-process-network}
\end{figure}


\begin{example}[{Compatibility  between compositionality and zooming-out process}]\label{Example:Compatibility  between compositionality and zooming-out process}
 Using SBGN-PD visualisations, in Figure \ref{fig:composition-and zooming}, we demonstrate how the horizontal composition law of 2-morphisms as in Theorem \ref{ARMain Theorem1} (see Example \ref{Example:horizontal-composition-of-2-morphisms}) induces a compatibility condition between compositionality and zooming-out process in a system of biochemical reaction networks. In particular, we start with two biochemical reactions numbered (1) and (2). Then, using the 2-morphisms as shown in Figure \ref{fig:horizontal-composition-of-2-morphisms},  we zoom-out (shown with the thin black arrows) by forgetting ADP's and ATP's from the reactions, and we obtain the biochemical reactions (3) from (1), and (4) from (2). Then, using the horizontal composition of 1-morphisms (shown with dotted black lines) as in Figure \ref{fig:part2composition-of-open-process-network}, we combine (3) and (4)  to obtain the reaction network (6), and the reactions (1) and (2) to get reaction network (5). The horizontal composition of two 2-morphisms (as in Figure \ref{fig:part2composition-of-open-process-network}) provides us with a canonical way of combining two zooming-out procedures $\big($(1) to (3) and (2) to (4)$ \big)$ such that the combined zoom-out preocedure behaves well with the composition of process networks. To be more precise, the combined zoom-out process takes the reaction network (5) to the reaction network (6). 
\end{example}

\begin{figure}[H]
\begin{center}
\includegraphics[width=15cm]{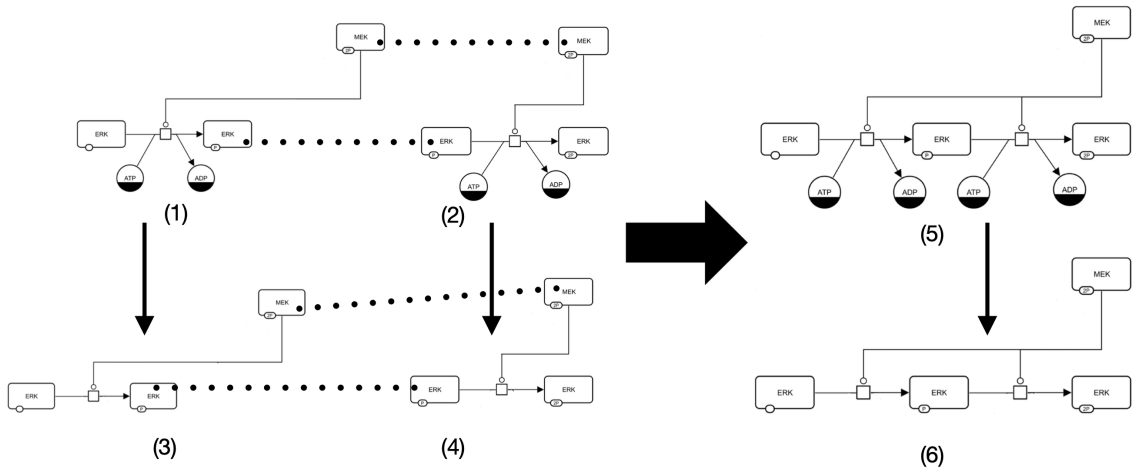}
\caption{An illustration of a compatibility condition between compositionality and zooming-out process in a system of biochemical reaction networks viusalised in SBGN-PD (see Example \ref{Example:Compatibility  between compositionality and zooming-out process}). SBGN images are derived from the MAPK cascade example on Page 65, \cite{rougny_systems_2019}.}
\label{fig:composition-and zooming}
\end{center}
\end{figure}

In the Example \ref{Example:horizontal-composition-of-2-morphisms}, we explained how one can interpret the horizontal composition of 2-morphisms as a compatiblity condition between the zooming-out procedures and composition laws of process networks. In the same spirit, one can interpret the monoidal product of 2-morphisms as a compatiblity condition between the zooming-out procedures and the monoidal product of process networks.

 \subsection{Macroscope of a process network}\label{Subsection:Macroscope of a process network: A tool to study the influence of a process network on its environment and vice versa}
This subsection builds a mathematical gadget by the name \textit{macroscope}, which provides us with a formal language to describe the influence of a particular portion of a biochemical network on the remaining portion and vice versa. To achieve our said purpose, we introduce a few technical notions below. 
\begin{definition}[Process subnetwork of a process network]\label{Defn:Processsubnetwork}
A \textit{process subnetwork of a process network} $\mc{B}= (E, B, \lbr l_i \rbr_{n})$ is a process network $\mc{B}'= (E', B', \lbr l'_i \rbr_{n})$ such that $E' \subseteq E$ and $B' \subseteq B$ and $l'_i(b)=l_i(b)$ for all $b \in B'$ and $i \in \lbr 1,2, \ldots, n \rbr$.
\end{definition}

\begin{definition}\label{Defn:Env}[Environment of a process subnetwork]
Let $\mc{B}'= (E', B', \lbr l'_i \rbr_{n})$ be a process subnetwork of a process network $\mc{B}= (E, B, \lbr l_i \rbr_{n})$. Then the \textit{environment of $\mc{B}'$ with respect to $\mc{B}$} is defined as the subset $B-B'$.

\end{definition}

\begin{remark}
Recall, in Section \ref{Sec:1}, we discussed   how the SBGN-PD visualisation of MAPK cascade can be composed with two other biochemical pathways to build IGF signalling pathways (Figure \ref{fig:construction-of-insulinsignalling}), and how to hide the details of MAPK cascade by using an encapsulation node called the submap in the IGF signalling SBGN-PD (Figure \ref{fig:cascade-as-submap}). In this context, our process subnetwork of a process network (Definition \ref{ARDefinition: Biochemical process networks}) is similar to how a submap of a SBGN-PD is to the SBGN-PD. With this analogy, reactions in the two  biochemical pathways in Figure \ref{fig:construction-of-insulinsignalling} (marked in blue and grey) form the environment (Definition \ref{Defn:Env}) of the MAPK cascade SBGN-PD (treated as a subprocess network) of the process network description of the IGF signalling pathway.

\end{remark}

\begin{definition}\label{Defn:M-function-of-a-process-network}
Let $\mc{B}'= (E', B', \lbr l'_i \rbr_{n})$ be a process subnetwork of a process network $\mc{B}= (E, B, \lbr l_i \rbr_{n})$. Then for each $i \in \lbr 1,2, \ldots, n-1 \rbr$, using the notations as in the  Remark \ref{ARRemark: Relevant entities}, we define the \textit{$M^{\mc{B}, \mc{B}'}_{i}$-function of $\mc{B}'$} as follows:
\begin{alignat*}{3}
M^{\mc{B}, \mc{B}'}_{i} \colon & E' \ra &&\mathbb{B}[E']\\
& e \mapsto e, \, \, &&\text{if there exists}\,\,  b\in B-B' \,\, \text{and}\,\, b' \in B' \text{such that}\,\, e=e^{b}_{nj}= e^{b'}_{ij'}\,\\
&\,&&  \text{for some}\,\, j \in \lbr 1,2, \ldots, m_{b}^{n} \rbr \,\, \text{and}\,\,
j' \in \lbr 1,2, \ldots, m_{b'}^{i} \rbr,\\
& e \mapsto 0, \,\, &&\text{otherwise.}
\end{alignat*}
and, when $i=n$, we have 
\begin{alignat*}{3}
M^{\mc{B},\mc{B}'}_{n} \colon & E' \ra &&\mathbb{B}[E']\\ 
& e \mapsto e, \, \, &&\text{if there exists}\,\, b \in B-B' \,\, \text{and}\,\, b' \in B' \text{such that} \,\, e=e^{b}_{kj}= e^{b'}_{nj'},\,\\
&\,&& \text{for some}\,\, k \in \lbr 1,2, \ldots, n-1 \rbr, j \in \lbr 1,2, \ldots, m_{b}^{k} \rbr, j' \in \lbr 1,2, \ldots, m_{b'}^{n} \rbr,\\
& e \mapsto &&0, \, \, \text{otherwise.}
\end{alignat*}
\end{definition}

\begin{remark}\label{rmk-macro-novel}

To make sense of the definition of $M^{\mc{B}, \mc{B}'}_{i}$ function (Definition \ref{Defn:M-function-of-a-process-network}) in the context of SBGN-PD, let us consider the case of $n=4$. Let $\mc{B}=(E, B, \lbr l_{i} \rbr_{4})$ be a process network with 4 legs.  Let $\mc{B}'=(E', B', \lbr l'_{i} \rbr_{4})$ be a sub-process network of $\mc{B}$. 

Let 
\begin{itemize}
\item $l_1$ denotes the \textit{input leg}, i.e. for each $b \in B$, $l_1(b) \in \mathbb{B}[E]$ contains the information of the substrates that goes into the process species (reaction) $b$.
\item $l_2$ denotes the \textit{activation leg}, i.e. for each $b \in B$, $l_2(b) \in \mathbb{B}[E]$ contains the information of the biomolecules  that activate (stimulate) the reaction $b$.
\item $l_3$ denotes the \textit{inhibition leg}, i.e. for each $b \in B$, $l_3(b) \in \mathbb{B}[E]$ contains the information of the biomolecules  that inhibit the reaction $b$.
\item $l_4$ denotes the \textit{production leg}, i.e. for each $b \in B$, $l_4(b) \in \mathbb{B}[E]$ contains the information of the metabolites produced in the reaction $b$.
\end{itemize}
Then, from the Definition \ref{Defn:M-function-of-a-process-network} itself, it is evident that
\begin{itemize}
\item $M^{\mc{B}, \mc{B}'}_{1}$ picks out the biomolecules produced by some reactions in the environment of $B'$ such that they act as substrates to some reactions in $B'$,
\item $M^{\mc{B}, \mc{B}'}_{2}$ picks out the biomolecules produced by some reactions in the environment of $B'$ such that they act as activators to some reactions in $B'$,
\item $M^{\mc{B}, \mc{B}'}_{3}$ picks out the biomolecules produced by some reactions in the environment of $B'$ such that they act as inhibitors to some reactions in $B'$,
\item $M^{\mc{B}, \mc{B}'}_{4}$ picks out the biomolecules produced by some reactions in $B'$ such that they \textit{act} (eg. as substrates or/and activators or/and inhibitors) on some reactions in the environment of $B'$.

\end{itemize}
\end{remark}


Given a process network $\mc{B}= (E, B, \lbr l_i \rbr_{n})$, for each $i \in \lbr 1,2, \ldots, ,n-1,n \rbr$, one can extend the function $M^{\mc{B}, \mc{B}'}_{i} \colon E' \ra \mathbb{B}[E']$ to a function 
\begin{equation*}
\begin{split}
\overline{M^{\mc{B},\mc{B}'}_{i}} \colon \mathbb{B}[E']  & \ra \mathbb{B}[E']\\
 (\alpha_1e_1 + \alpha_2e_2 + \ldots + \alpha_ne_n) & \mapsto \big(\alpha_1 M^{\mc{B}, \mc{B}'}_{i}(e_1) + \alpha_2M^{\mc{B}, \mc{B}'}_{i}(e_2)+ \ldots + \alpha_nM^{\mc{B}, \mc{B}'}_{i}(e_n)\big).
\end{split}
\end{equation*}

Our next definition provides us with a way to collect entities in a process network which performs \textit{similar functions}.

\begin{definition}\label{Defn:Sigma-B-function}
    Given a process network $\mc{B}= (E, B, \lbr l_i \rbr_{n})$, we define the \textit{$\Sigma_{\mc{B}}$-function of the process network $\mc{B}$} as follows:
    \begin{equation*}
    \begin{split}
    \Sigma_{\mc{B}} \colon & \mc{L} \ra \mathbb{B}[E]\\
    & l_i \mapsto \sum_{b \in B} \big(l_{i}(b) \big), \,\, \text{for each}\,\,i \in \lbr1,2,\ldots, n \rbr,
    \end{split} 
\end{equation*}
where $\mc{L}$ is the set $\lbr l_1, l_2, \ldots, l_n \rbr$.
\end{definition}

\begin{remark}

For the case of $n=4$ (as considered in Remark \ref{rmk-macro-novel}), Definition \ref{Defn:Sigma-B-function}  provides us with a way to collect entities which performs similar functions (eg. acting as substrates or metabolites or activators or inhibitors) to some reactions in the process network $\mc{B}= (E, B, \lbr l_i \rbr_{4})$.
\end{remark}

We are now ready to introduce the main notion of this section, that we call the \textit{macroscope of a process network}.
\begin{definition}[Macroscope of a process subnetwork]\label{Defn:macroscope}
Let $\mc{B}'= (E', B', \lbr l'_i \rbr_{n})$ be a process subnetwork of a process network $\mc{B}= (E, B, \lbr l_i \rbr_{n})$. Then, the \textit{macroscope of $\mc{B}'$ with respect to $\mc{B}$} is defined as the $n$-tuple $M(\mc{B}, \mc{B}'):= (L^{\mc{B}, \mc{B}'}_1, L^{\mc{B}, \mc{B}'}_2, \ldots, L^{\mc{B}, \mc{B}'}_n) \in \bar{\mathbb{Z}}_2[E'] \times \bar{\mathbb{Z}}_2[E'] \times \cdots\times \bar{\mathbb{Z}}_2[E']$, where $$L^{\mc{B}, \mc{B}'}_i:=\overline{M^{\mc{B}, \mc{B}'}_{i}} \big(\Sigma_{\mc{B}'}(l_i)\big),\,\, \forall \, \, i \in \lbr 1,2, \ldots, n \rbr.$$ For each $i \in \lbr 1,2, \ldots, n-1 \rbr$, the $i$-th coordinate $L^{\mc{B}, \mc{B}'}_i$  is called the $l_i$-\textit{influence of the environment on $\mc{B}'$}, and the $n$-th coordinate $L^{\mc{B}, \mc{B}'}_n$ is called the \textit{influence of $\mc{B}'$ on its environment}.
\end{definition}
\begin{remark}\label{rmk:Novelty-of-macroscope}
Observe that our notion of macroscope $M(\mc{B, \mc{B}'})$ of a process subnetwork $\mc{B}'$ of a process network $\mc{B}$ captures more information than just identifying the interface of $\mc{B}'$ with respect to $\mc{B}$. By the virtue of Definition \ref{Defn:M-function-of-a-process-network} (as explained in Remark \ref{rmk-macro-novel}), 
\begin{itemize}
\item the  macroscope $M(\mc{B, \mc{B}'})$ explicitly specifies the entities produced by some  reactions in $\mc{B}'$ which act (eg. substrates, activators or inhibitors for the case of $n=4$) on the reactions present outside (environment) of $\mc{B}'$,
\item the  macroscope $M(\mc{B, \mc{B}'})$ explicitly specifies the entities produced outside (environment) of $\mc{B}'$ and \textit{the way they are acting} (eg. substrates, activators or inhibitors for the case of $n=4$) on  some  reactions in $\mc{B}'$.

\end{itemize}

\end{remark}

\begin{remark}
Using the notations as in Definition \ref{Defn:macroscope}, note that when $\mc{B}=\mc{B}'$, we have $M(\mc{B}, \mc{B}')=\underbrace{(0,0, \ldots, 0)}_{n-{\rm{times}}}$, which matches our intuition.
\end{remark}
\begin{example}\label{Example:Macroscope}
In Figure \ref{fig:macroscope}, we illustrate a process subnetwork $\mc{B}'$ (marked with a blue rectangle) of a process network $\mc{B}$ (marked with a black rectangle) with three legs. Note that the environment of $\mc{B}'$  with respect to $\mc{B}$  is the set  $\lbr b_1, b_2 ,b_3,b_4,b_5\rbr - \lbr b_1, b_3 \rbr= \lbr b_2, b_4, b_5 \rbr$. Observe that $M(\mc{B, \mc{B}'})= (e^{b_1}_{11} + e^{b_1}_{12}, e^{b_1}_{21}, e^{b_1}_{32}+ e^{b_3}_{31} )$. The first coordinate $e^{b_1}_{11} + e^{b_1}_{12}$ and the second coordinate $e^{b_1}_{21}$ of $M(\mc{B, \mc{B}'})$ describes the $l_1$-influence and $l_2$-influence of the environment on $\mc{B}'$, respectively, and the 3rd coordinate $e^{b_1}_{32}+ e^{b_3}_{31}$ describes the influence of $\mc{B}'$ on its environment.
\end{example}

\begin{figure}[H]
\begin{center}
           \raisebox{+0.5cm}{\includegraphics[width=16cm]{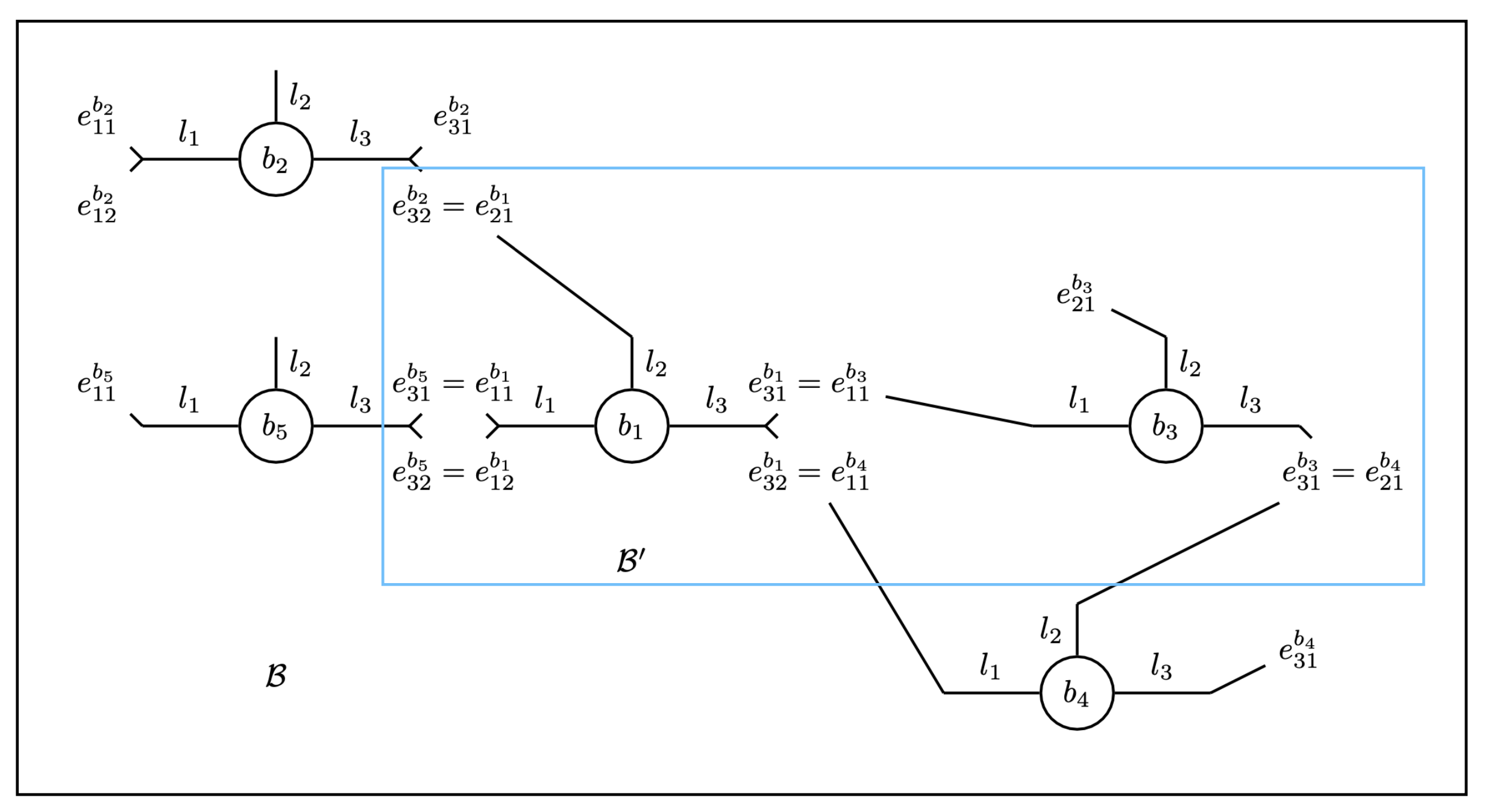}}        
              \hspace{0.5cm}
\begin{tikzpicture}[node distance = 1.5 cm, font={\small}, node font=\small]
                               \node [fill=black, single arrow, draw=none, text=black, rotate=-90] at (24.3,3.7) {compo};
		\node at (28, 3.6) [] {Applying macroscope of $\mc{B}'$ with respect to $\mc{B}$};
\end{tikzpicture}
\begin{tikzpicture}[node distance = 1.5 cm, font={\small}, node font=\small]
\node (b) at (21.7,-1) [circle, draw] {$M(\mc{B}, \mc{B}')$};
\node (e12) at (18,-0.5) {$e_{11}^{b_1}$};
\node (e11) at (18,-1.5) {$e_{12}^{b_1}$};
\node (e21) at (21.2,1.9) {$e^{b_1}_{21}$};
\node (e31) at (25.4,-0.5) {$e^{b_1}_{32}$};
\node (e32) at (25.4,-1.5) {$e_{31}^{b_3}$};
\coordinate (l1) at (19.2,-1);
\coordinate (l2) at (21.7,1);
\coordinate (l3) at (24.2,-1);
\draw[-] (e12) -- (l1);
\draw[-] (e11) -- (l1);
\draw[-] (l1) -- (b) node[pos=0.5, below] {$L^{\mc{B}, \mc{B}'}_1$};
\draw[-] (e21) -- (l2);
\draw[-] (l2) -- (b) node[pos=0.5, right] {$L^{\mc{B}, \mc{B}'}_2$};
\draw[-] (e32) -- (l3);
\draw[-] (e31) -- (l3);
\draw[-] (b) -- (l3) node[pos=0.5, below] {$L^{\mc{B}, \mc{B}'}_3$};
\end{tikzpicture}
\end{center}
        \caption{Visualisation of the macroscope discussed in Example \ref{Example:Macroscope}.}\label{fig:macroscope}
\end{figure}

\begin{example}\label{Example:SBGN-example-macroscope}
In Figure \ref{fig:MAPKmacroscope}, we illustrate Definition \ref{Defn:macroscope} on the SBGN-PD visualisation of the MAPK cascade. Here, we mark $\mc{B}'$ with blue and $\mc{B}$ with black. The macroscope $M(\mc{B}, \mc{B}')$ of $\mc{B}'$ with respect to $\mc{B}$ is $\big(0, \text{(RAF-P, macromolecule), (MEK-2P, macromolecule)} \big)$. Notations MEK-2P and RAF-P describe a double phosphorylated MEK and phosphorylated RAF, respectively. While representing entities (Definition \ref{ARDefinition: Biochemical process networks}) in the form $(a,b)$ such as (RAF-P, macromolecule) and (MEK-2P, macromolecule), the first coordinate $a$ represents the molecule, and the second coordinate $b$ represents its type. It says that the biochemical reaction network marked in blue modulates its environment via (MEK-2P, macromolecule); on the other hand, the environment modulates it via (RAF-P, macromolecule).
\end{example}
\begin{figure}[H]
        \begin{center}
           \raisebox{+0.5cm}{\includegraphics[width=8cm]{macroscope.png}}        
              \hspace{0.5cm}
                \begin{tikzpicture}[node distance = 1.5 cm, font={\small}, node font=\small]
                               \node [fill=black, single arrow, draw=none, text=black, rotate=-90] at (24.3,3.7) {compo};
		\node at (28, 3.6) [] {Applying macroscope of $\mc{B}'$ with respect to $\mc{B}$};
\node (b) at (25.7,-1) [circle, draw] {$M(\mc{B}, \mc{B}')$};
\node (e21) at (25.2,2) {\text{(RAF-P, macromolecule)}};
\node (e31) at (30.2,-0.5) {\text{(MEK-2P, macromolecule)}};
%
\coordinate (l1) at (23.2,-1);
\coordinate (l2) at (25.7,1);
\coordinate (l3) at (28.2,-1);
%
\draw[-] (l1) -- (b) node[pos=0.5, below] {$L^{\mc{B}, \mc{B}'}_1$};
\draw[-] (e21) -- (l2);
\draw[-] (l2) -- (b) node[pos=0.5, right] {$L^{\mc{B}, \mc{B}'}_2$};
%
\draw[-] (e31) -- (l3);
\draw[-] (b) -- (l3) node[pos=0.5, below] {$L^{\mc{B}, \mc{B}'}_3$};
\end{tikzpicture}
        \end{center}
        \caption{Visualisation of a macroscope (Definition \ref{Defn:macroscope}) on the SBGN-PD visualisation of the MAPK cascade as discussed in Example \ref{Example:SBGN-example-macroscope}. The SBGN image is taken from the MAPK cascade example on Page 65 in \cite{rougny_systems_2019}. In SBGN-PD language, various geometric shapes like rectangles with rounded corners, circles, etc., represent types }\label{fig:MAPKmacroscope}
\end{figure}





\section{SBGN-PD as process networks}\label{Section5}

 The goal of this section is to sketch how one can express SBGN-PD as process networks. Currently, our treatment is not complete and is discussed only via examples. A complete and fully rigorous dictionary between the language of SBGN-PD and our process networks is left for future research. To keep our treatment self contained, we will briefly describe the vocabulary of SBGN-PD. For a detailed treatment readers are referred to \cite{rougny_systems_2019}.

\begin{figure}[htb]
\begin{center}
\includegraphics[scale=0.2]{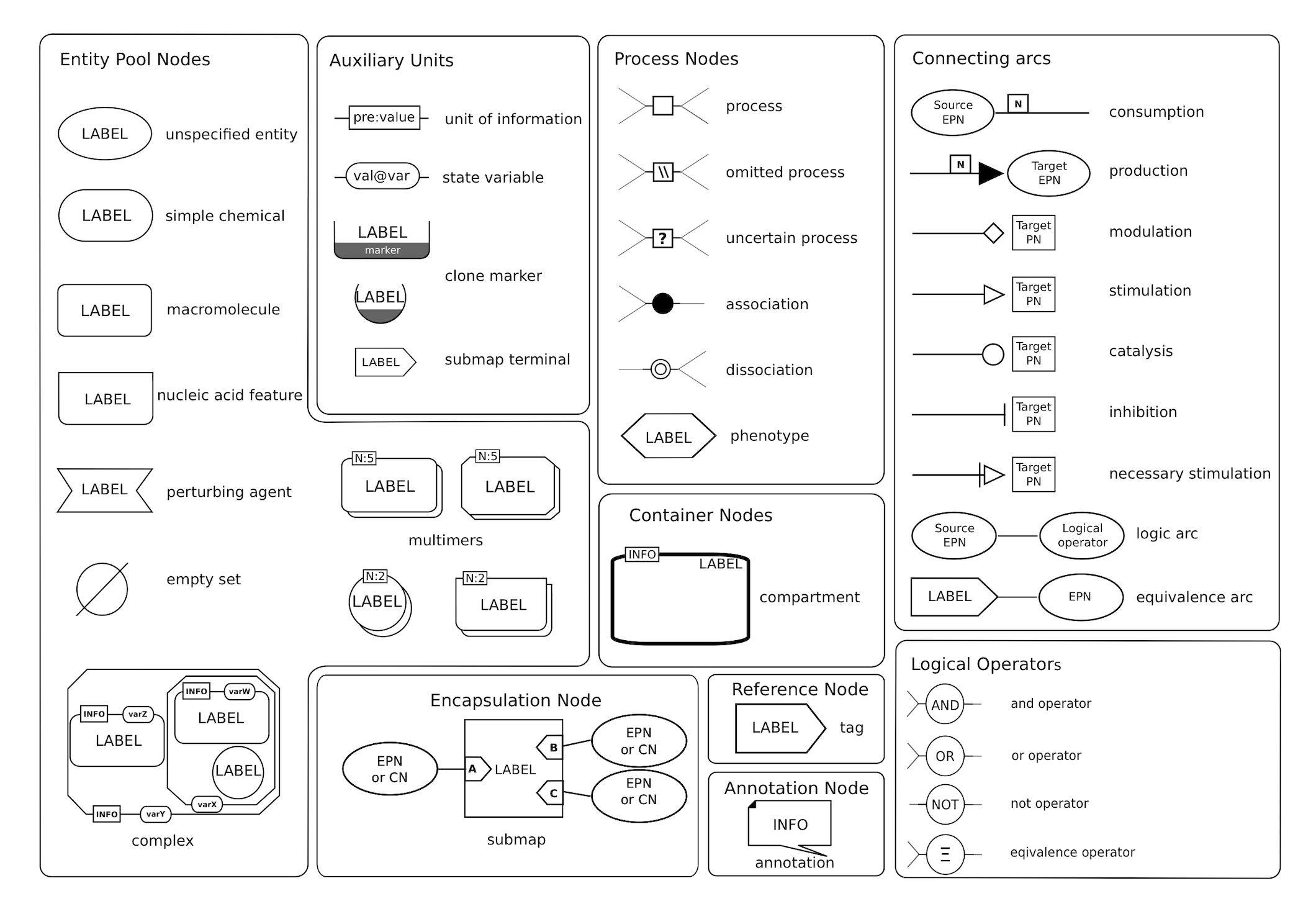}
\caption{Reference card for SBGN PD language Level 1 Version 2.0. The image is taken from Page 75, \cite{rougny_systems_2019}. 
}
\label{fig:SBGNPD-reference-card}
\end{center}
\end{figure}
\subsection{SBGN-Process Description (SBGN-PD)}

Figure \ref{fig:SBGNPD-reference-card} describes various symbols and glyphs used in SBGN-PD language. An \textbf{entity pool node} represents a population of indistinguishable biological entities in an SBGN PD. It may correspond to different levels of granularity, such as all proteins, all instances of a specific protein, or particular forms of a protein. SBGN-PD defines six glyphs for material entity pools: \textit{unspecified entity}, \textit{simple chemical}, \textit{macromolecule}, \textit{nucleic acid feature}, \textit{multimer}, and \textit{complex}. In addition, two conceptual entity pools are defined: the \textit{empty set} and the \textit{perturbing agent}. Both material and conceptual entity pools may optionally carry \textbf{auxiliary units}, which are  glyphs used to decorate other glyphs in order to convey additional information. Their presence modifies the interpretation of the associated glyph or provides supplementary descriptive context. Auxiliary units may be used to encode annotations (\textit{units of information}), represent state information (\textit{state variables}), indicate duplicated entity pool nodes (\textit{clone markers}), describe specific glyph structures (e.g., \textit{subunits of a complex}), or \textit{provide references to elements external to a portion of SBGN-PD} (e.g., submap terminals). \textbf{Process nodes} represent transformations of one or more entity pools into one or more entity pools. SBGN Process Description defines a generic process and five specific types: \textit{omitted}, \textit{uncertain}, \textit{association}, \textit{dissociation}, and \textit{phenotype}. SBGN-PD also has \textbf{container nodes} called \textit{compartments}, which are logical or physical structures that contain entity pool nodes. Each entity pool node belongs to only one compartment; consequently, the same biochemical species located in different compartments is represented as distinct entity pools.
For example, calcium ions in the endoplasmic reticulum and in the cytosol are represented as separate entity pool nodes. A \textbf{connecting arc} between an entity pool node and the process node can be of several types, depending on the kind of role the entity pool node play on the process node. In SBGN-PD, we have 
\begin{itemize}
\item \textit{consumption arcs}- the entity pool node is consumed by the process,
\item \textit{production arcs}- the entity pool node is produced by the the process,
\item \textit{modulation arcs}- the entity pool nodes modulates the process but the exact nature of the modulation is not specified or is unknown,
\item \textit{stimulation arcs}- the entity pool node affects positively the flux of the process,
\item \textit{catalysis arcs}- a particular case of stimulation, where the entity pool node catalyses the process,
\item inhibition arc- the entity pool node affects negatively the flux of the process,
\item necessary stimulation arc- the entity pool node is necessary for the process to take place.
\end{itemize}
Apart from the above 7 types of connecting arcs there are two more types viz.  \textit{logical arcs} and \textit{equivalence arcs}, which we will not discuss in this article. The \textbf{encapsulation node} called the \textit{submap} encapsulates an entire portion of SBGN-PD, including all node and edge types, within a single glyph. It is therefore distinct from an omitted process. The internal contents of the submap are hidden from the viewer, with only the submap terminals displayed.

\subsection{From a SBGN-PD diagram to a process network}
In this subsection, we demonstrate  how one can  convert a SBGN-PD diagram into a process network through two examples. 
\begin{example}\label{Example:musclesignalling}
We consider the SBGN-PD visualisation of  neuronal/muscle signalling   (Figure \ref{originalfig:musclesignalling}).
\begin{figure}[H]
\begin{center}
\includegraphics[width=10cm]{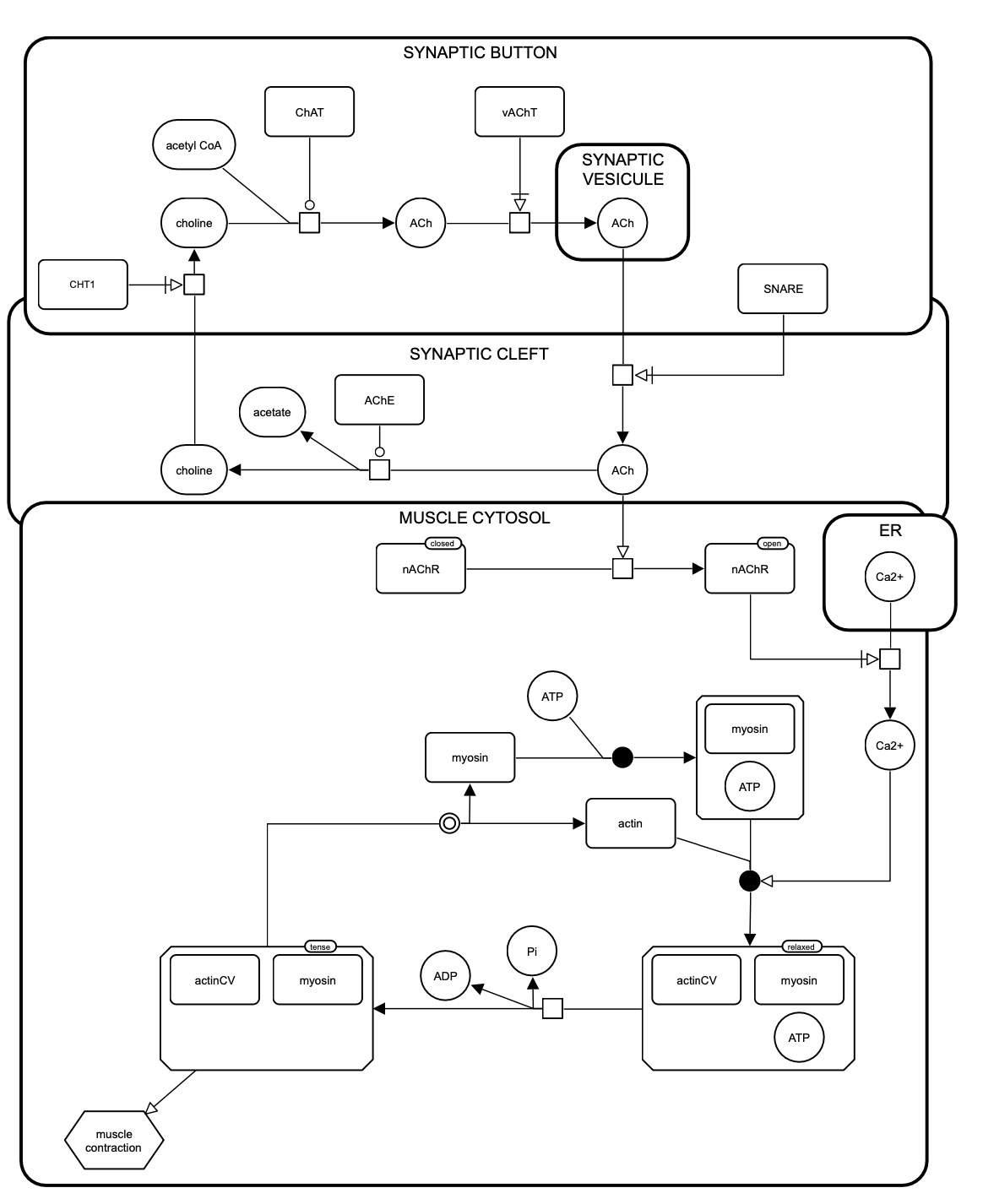}
\caption{An illustration of the SBGN-PD visualisation of the neuronal/muscle signalling, describing inter-cellular signalling. The SBGN image is taken from the neuronal/muscle signalling example on Page 66, \cite{rougny_systems_2019}.}
\label{originalfig:musclesignalling}
\end{center}
\end{figure}
\subsubsection*{Step 1:}
Without loss of generality, we label the entity pool nodes  and process nodes present in the Figure \ref{originalfig:musclesignalling}, respectively, as $e_1, e_2, \ldots, e_{23}, e_{24}$ and $b_1,b_2, \ldots, b_{12}$, see Figure \ref{fig:musclesignalling}.
\subsubsection*{Step 2:}
Without loss of generality, we enumerate the compartments SYNAPTIC BUTTON, SYNAPTIC VESICULE, SYNAPTIC CLEFT, MUSCLE CYTOSOL AND ER  present in Figure \ref{fig:musclesignalling} as $c_1, c_2, c_3, c_4$, and $c_5$ respectively. 

\subsubsection*{Step 3:}

We denote the `types' of entity pool nodes: unspecified entity, simple chemical, macromolecule, nucleic acid feature, perturbing agent, empty set, multimer and complex by $t_1,t_2, t_3, t_4,t_5,t_6, t_7, t_8$,  respectively.
\subsubsection*{Step 4:}
Let 
\begin{equation}\nonumber
T= \lbrace t_1,t_2, \cdots, t_8 \rbrace
\end{equation}
\begin{equation}\nonumber
C=\lbrace c_1,c_2, \cdots, c_5 \rbrace
\end{equation}
\begin{equation}\nonumber
B= \lbrace b_1,b_2, \cdots, b_{12} \rbrace
\end{equation}
\begin{equation}\nonumber
\mathcal{E} = \lbrace e_1,e_2, \cdots, e_{24} \rbrace
\end{equation}

\subsubsection*{Step 5:}

Let 

\begin{itemize}
\item $E= \mathcal{E} \times C \times  T$
\item \begin{equation*}
    \begin{split}
    l_1 \colon & B \ra \mathbb{B}[E]\\
    & b_1 \mapsto (e_1, c_1, t_2) + (e_2, c_1, t_{2}) \\
    & b_2 \mapsto (e_4, c_1, t_2)\\
    & b_3 \mapsto (e_6, c_2, t_2)\\
    & b_4 \mapsto (e_8, c_3, t_2)\\
    & b_5 \mapsto (e_{11}, c_3, t_2)\\
    & b_6 \mapsto (e_{14}, c_4, t_3)\\
    & b_7 \mapsto (e_{15}, c_5, t_2 )\\
    & b_8 \mapsto (e_{17}, c_4, t_{8}) + (e_{20}, c_4, t_3)\\
    & b_9 \mapsto (e_{19}, c_{4},t_3) +(e_{18}, c_4, t_2)\\
    & b_{10} \mapsto (e_{24}, c_{4}, t_{8})\\
    & b_{11} \mapsto (e_{21}, c_{4}, t_{8})\\
    & b_{12} \mapsto 0
    \end{split} 
\end{equation*}
\begin{equation*}
    \begin{split}
    l_2 \colon & B \ra \mathbb{B}[E]\\
    & b_1 \mapsto 0 \\
    & b_2 \mapsto 0 \\
    & b_3 \mapsto 0 \\
    & b_4 \mapsto 0 \\
    & b_5 \mapsto 0 \\
    & b_6 \mapsto (e_{8}, c_3, t_2)\\
    & b_7 \mapsto 0 \\
    & b_8 \mapsto (e_{16}, c_4, t_{2})\\
    & b_9 \mapsto 0 \\
    & b_{10} \mapsto 0 \\
    & b_{11} \mapsto 0 \\
    & b_{12} \mapsto (e_{24}, c_{4}, t_{8})
    \end{split} 
\end{equation*}
\begin{equation*}
    \begin{split}
    l_3 \colon & B \ra \mathbb{B}[E]\\
    & b_1 \mapsto (e_3, c_1, t_3) \\
    & b_2 \mapsto 0 \\
    & b_3 \mapsto 0 \\
    & b_4 \mapsto (e_9, c_3, t_3)\\
    & b_5 \mapsto 0\\
    & b_6 \mapsto 0\\
    & b_7 \mapsto 0 \\
    & b_8 \mapsto 0 \\
    & b_9 \mapsto 0 \\
    & b_{10} \mapsto 0 \\
    & b_{11} \mapsto 0 \\
    & b_{12} \mapsto 0
    \end{split} 
\end{equation*}
\begin{equation*}
    \begin{split}
    l_4 \colon & B \ra \mathbb{B}[E]\\
    & b_1 \mapsto 0 \\
    & b_2 \mapsto (e_5, c_1, t_3)\\
    & b_3 \mapsto (e_7, c_1, t_3)\\
    & b_4 \mapsto 0 \\
    & b_5 \mapsto (e_{12}, c_1, t_3)\\
    & b_6 \mapsto 0\\
    & b_7 \mapsto (e_{13}, c_4, t_3 )\\
    & b_8 \mapsto 0 \\
    & b_9 \mapsto 0 \\
    & b_{10} \mapsto 0 \\
    & b_{11} \mapsto 0 \\
    & b_{12} \mapsto 0
    \end{split} 
\end{equation*}
\begin{equation*}
    \begin{split}
    l_5 \colon & B \ra \mathbb{B}[E]\\
    & b_1 \mapsto (e_{4}, c_
    1, t_{2})\\
    & b_2 \mapsto (e_6, c_2, t_2)\\
    & b_3 \mapsto (e_8, c_3, t_2)\\
    & b_4 \mapsto (e_{11}, c_3, t_2) + (e_{10}, c_{3}, t_{2})\\
    & b_5 \mapsto (e_{2}, c_1, t_2)\\
    & b_6 \mapsto (e_{13}, c_4, t_3)\\
    & b_7 \mapsto (e_{16}, c_4, t_2 )\\
    & b_8 \mapsto (e_{21}, c_{4}, t_{8})\\
    & b_9 \mapsto (e_{17}, c_{4},t_8)\\
    & b_{10} \mapsto (e_{19}, c_{4}, t_{3}) + (e_{20}, c_{4}, t_{3})\\
    & b_{11} \mapsto (e_{24}, c_{4}, t_{8}) + (e_{23}, c_{4}, t_{2}) + (e_{22}, c_4, t_{2})\\
    & b_{12} \mapsto 0
    \end{split} 
\end{equation*}

\end{itemize}

Thus, we showed that the SBGN-PD in Figure \ref{originalfig:musclesignalling} can be expressed as the process network $(E, B, \lbr l_{i}\rbr_{5})$ with 5 legs. Let $\mathcal{L}= \lbr l_1, l_2, \cdots, l_{5}\rbr$. Now, if we apply our $\Sigma_{B}$-function $\Sigma_{B} \colon \mathcal{L} \to \mathbb{B}[E]$ (Definition \ref{Defn:Sigma-B-function}), then 
\begin{itemize}
\item $\Sigma_{B}(l_{1})$ collects the entity pool nodes associated to consumption arcs present in the SBGN-PD visualisation in Figure \ref{originalfig:musclesignalling},
\item $\Sigma_{B}(l_{2})$ collects the entity pool nodes associated to stimulation arcs present in the SBGN-PD visualisation in Figure \ref{originalfig:musclesignalling},
\item $\Sigma_{B}(l_{3})$ collects the entity pool nodes associated to catalysis arcs present in the SBGN-PD visualisation in Figure \ref{originalfig:musclesignalling},
\item $\Sigma_{B}(l_{4})$ collects the entity pool nodes associated to necessary stimulation arcs present in the SBGN-PD visualisation in Figure, \ref{originalfig:musclesignalling}
\item $\Sigma_{B}(l_{5})$ collects the entity pool nodes associated to production arcs present in the SBGN-PD visualisation in Figure \ref{originalfig:musclesignalling}. 
\end{itemize}
\begin{figure}[H]
\begin{center}
\includegraphics[width=7cm]{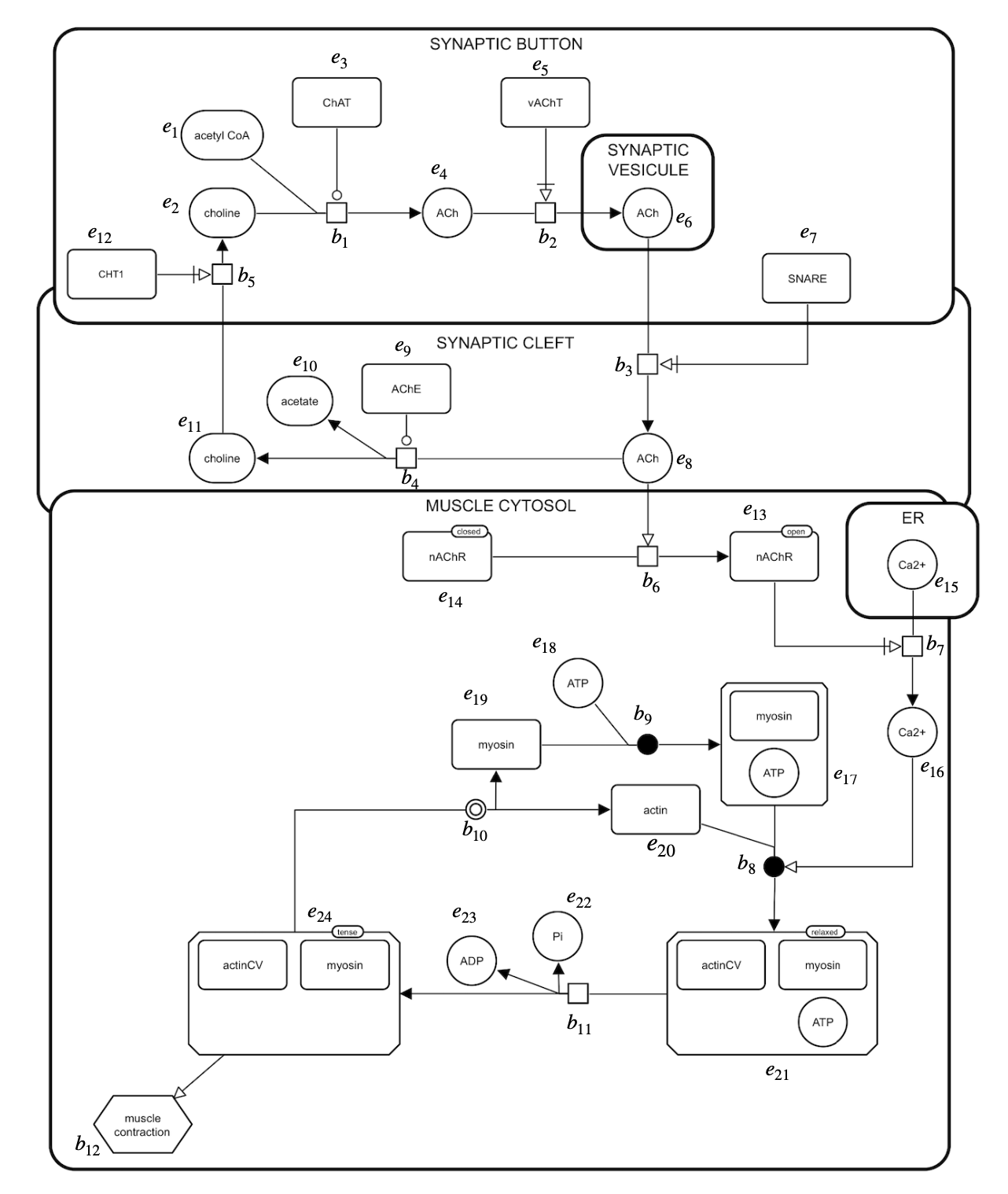}
\caption{We label the entity pool nodes and the process nodes  of Figure \ref{originalfig:musclesignalling} respectively, with the elements of the sets $\lbr e_1,e_2, \ldots, e_{24} \rbr$ and $\lbr b_1,b_2, \ldots, b_{12} \rbr$.}
\label{fig:musclesignalling}
\end{center}
\end{figure}

\end{example}

\begin{example}\label{Example:MAPK}
We consider the SBGN-PD visualisation of MAPK-cascade (Figure \ref{fig:MAPKexample}).

\subsubsection*{Step 1:}
Without loss of generality, we label the entity pool nodes  and process nodes present in the Figure \ref{originalfig:musclesignalling}, respectively, as $e_1, e_2, \ldots, e_{18}, e_{19}$ and $b_1,b_2, \ldots, b_{5}$, see Figure \ref{fig:MAPKexamplemarked}. Observe that although clone markers such as ATP and ADP are allowed to repeat according to SBGN-PD visualization rules, in our treatment we have distinguished clone markers associated to  different process nodes so that we are allowed to legitimately apply our compositionality results (Lemma \ref{ARLemma:Pushout Biochemical process species}) to decompose Figure \ref{fig:MAPKexample} into modules. 

\subsubsection*{Step 2:}
Observe that all the entity pool nodes in the SBGN-PD visualization in the Figure \ref{fig:MAPKexample} lies in the same compartment. We denote it by $c$.

\subsubsection*{Step 3:}
We denote the `types' of entity pool nodes: unspecified entity, simple chemical, macromolecule, nucleic acid feature, perturbing agent, empty set, multimer and complex by $t_1,t_2, t_3, t_4,t_5,t_6, t_7, t_8$,  respectively.

\subsubsection*{Step 4:}
Let 
\begin{equation}\nonumber
\begin{aligned}
T &= \lbrace t_1,t_2, \cdots, t_8 \rbrace \\
C &= \lbrace c \rbrace \\
B &= \lbrace b_1,b_2, \cdots, b_{5} \rbrace \\
\mathcal{E} &= \lbrace e_1,e_2, \cdots, e_{19} \rbrace
\end{aligned}
\end{equation}
\subsubsection*{Step 5:}
Let 
\begin{itemize}
\item $E= \mathcal{E} \times C \times T$.
\item 
\begin{equation*}
    \begin{split}
    l_1 \colon & B \ra \mathbb{B}[E]\\
    & b_1 \mapsto (e_1, c, t_3) + (e_{10}, c, t_{2}) \\
    & b_2 \mapsto (e_4, c, t_3) + (e_{12}, c, t_2)\\
    & b_3 \mapsto (e_5, c, t_3) + (e_{14}, c, t_2)\\
    & b_4 \mapsto(e_7, c, t_3) + (e_{16}, c, t_2)\\
    & b_5 \mapsto (e_8, c, t_3) + (e_{18}, c, t_2)
    \end{split} 
\end{equation*}
\begin{equation*}
    \begin{split}
    l_2 \colon & B \ra \mathbb{B}[E]\\
    & b_1 \mapsto (e_2, c, t_3) \\
    & b_2 \mapsto (e_3, c, t_3) \\
    & b_3 \mapsto (e_3, c, t_3)\\
    & b_4 \mapsto(e_6, c, t_3)\\
    & b_5 \mapsto (e_6, c, t_3)
    \end{split} 
\end{equation*}
\begin{equation*}
    \begin{split}
    l_3 \colon & B \ra \mathbb{B}[E]\\
    & b_1 \mapsto (e_3, c, t_3) + (e_{11}, c, t_{2}) \\
    & b_2 \mapsto (e_5, c, t_3) + (e_{13}, c, t_2)\\
    & b_3 \mapsto (e_6, c, t_3) + (e_{15}, c, t_2)\\
    & b_4 \mapsto(e_8, c, t_3) + (e_{17}, c, t_2)\\
    & b_5 \mapsto (e_9, c, t_3) + (e_{19}, c, t_2)
    \end{split} 
\end{equation*}
Hence, the SBGN-PD in Figure \ref{fig:MAPKexample} can be expressed as the process network $(E, B, \lbr l_{i} \rbr_{3})$ with 3 legs. In a similar way as in the Example \ref{Example:musclesignalling},  using the $\Sigma_{B}$ function of Definition \ref{Defn:Sigma-B-function}, we can collect the entity pool nodes associated to consumption arcs, catalysis arcs and production arcs present in the SBGN-PD visual representation of the MAPK-cascade. Furthermore, note that since all entity pool nodes belong to the same compartment in the SBGN-PD visualisation of MAPK cascade, the information about compartments can be implicitly captured without using the second cordinate in the three coordinate representation of entity pool nodes. We took this approach in Figure \ref{ARDefinition: Biochemical process networks}, Subsection \ref{subsection:Example-of-process-networks} and in the Example \ref{Example:SBGN-example-macroscope}.

\end{itemize}

\end{example}

\begin{remark}\label{Remark:Limitationsoftranslations}
Observe that in our above method of translating the  SBGN-PD diagram into a process network, we represent the entities (as defined in Definition \ref{Defn:Processsubnetwork}) as a triple $(x,y,z)$, where $x$ denotes an  entity pool node, $y$ denotes its type and $z$ denotes the compartment it belongs.
Thus, our translation methods could capture  (i) the `type information' of entity pool nodes, (ii) the information of the compartment to which an entity pool node belongs,  (iii) the `type information' of connecting arcs and (iv) the overall `connectivity structure' of the underlying biochemical reaction network.
However, we have not captured the auxillary units, the component entity pool nodes of a complex and  the `type information' of process nodes.
\end{remark}
\begin{remark}
Observe that in our definition of a process network (Definition \ref{ARDefinition: Biochemical process networks}) $n$ legs  define $n$ different abstract types of connecting arcs, each connecting a process node with a set of entity pool nodes performing the same type of function on the process node. However, there are only $9$ connecting arcs according  to the SBGN-PD language level 1 version 2.0 (see Figure \ref{fig:SBGNPD-reference-card}). Now, SBGN is an evolving visual notational system and our treatment was based on the current verion: SBGN-PD language level 1, version 2.0, and thus there may be a possibility that in the future levels and versions of SBGN-PD notational systems, new types of connecting arcs may appear according to the newer data available to biologists. Thus, we kept our framework more flexible and at the same time simple enough to demonstrate that one can suitably adapt the ideas of Baez-Masters’ Open Petri nets \cite{MR4085076} in the context of composing SBGN-PD diagrams. Furthermore, note that in Example \ref{Example:musclesignalling} and Example \ref{Example:MAPK},  respectively, SBGN-PD diagrams are realized as  process networks with 5 legs and 3 legs. Although by the virtue of Proposition \ref{Proposition:embedding}, we can actually consider them as process networks with $9$ legs, we feel doing so  is unnecessary and cumbersome.

\end{remark}
\begin{remark}
A main motivation for our efforts on translating a SBGN-PD diagram into a  process network is to use our Theorem \ref{ARMain Theorem1} to build a SBGN-PD by composing smaller SBGN-PD's using the composition laws of the symmetric monoidal double category of process networks. Now, in order to use Theorem \ref{ARMain Theorem1} legitimately after translating the SBGN-PD into a process network (like the way we did), we need to make sure that the entity pool nodes which we are identifying via pushouts are of same types and they lie in the same compartment. From the point of biology, we believe this restriction still makes our model effective in our effort on building a formal composable framework to study SBGN-PD's.  
\end{remark}
\begin{figure}[H]
\begin{center}
\includegraphics[width=8cm]{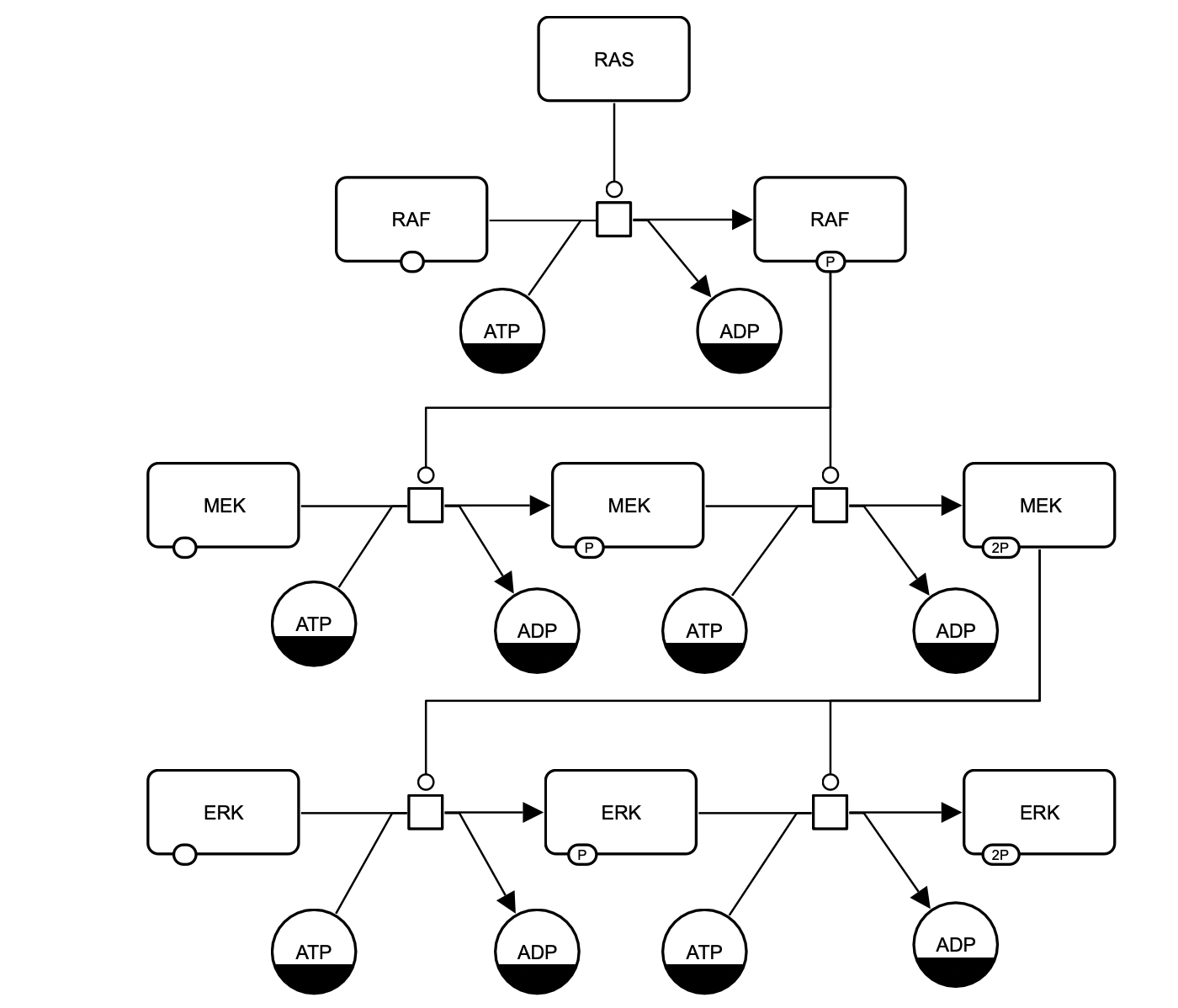}
\caption{SBGN-PD visualisation of MAPK cascade. The SBGN image is taken from the Page 65, \cite{rougny_systems_2019}.}
\label{fig:MAPKexample}
\end{center}
\end{figure}

\begin{figure}[H]
\begin{center}
\includegraphics[width=8cm]{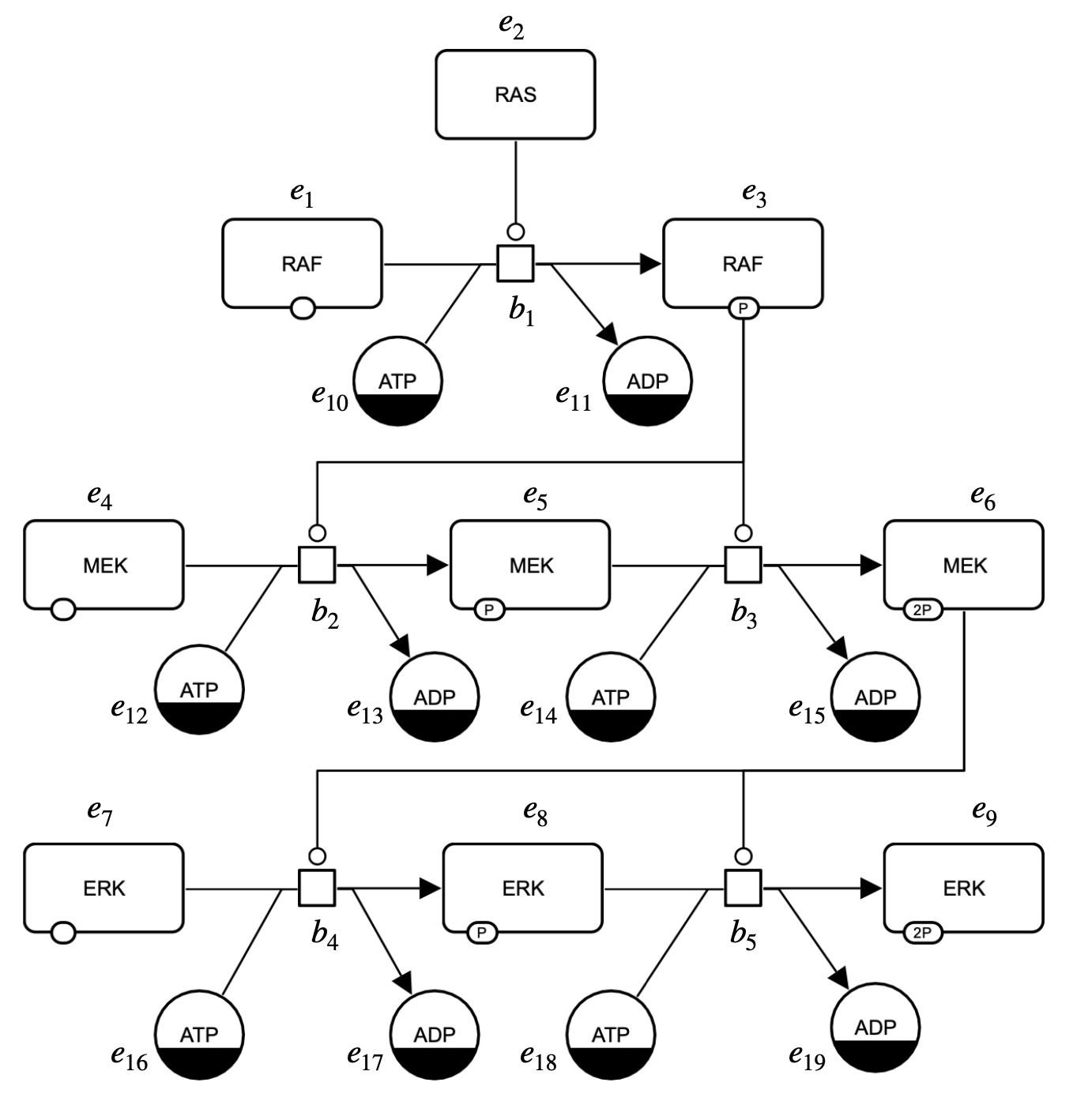}
\caption{ We label the entity pool nodes and the process nodes of Figure \ref{fig:MAPKexample} respectively, with the elements of the sets $\lbr e_1,e_2, \ldots, e_{19} \rbr$ and $\lbr b_1,b_2, \ldots, b_{5} \rbr$.}
\label{fig:MAPKexamplemarked}
\end{center}
\end{figure}

\section{Discussion and outlook}\label{Sec:discussion-and-conclusion}

\subsection{Achievements}
To the best of our knowledge, we believe we are the first ones to analyse SBGN diagrams in Systems Biology using category theory. Back in 2011, John C. Baez mentioned about such a possibility in his blog \cite{baez_network_theory_part1_2011}. In this regard, we wanted to use the existing results in Applied Category Theory to develop a compositional framework for SBGN-PD, which led us to the theory of \textit{open Petri nets} developed by  John C. Baez and Jade Master in \cite{MR4085076}. Although the definition of our process network is motivated from the Baez-Master's notion of Petri net, it differs in two crucial ways:
\begin{itemize}
\item[(i)] Instead of just a pair of maps (source and target) as in the case of Petri net, our process network has room for more maps $\lbr l_i\rbr_{n}$ (\textit{legs})  to accommodate entity pool nodes associated to connecting arcs such as modulation arc, stimulation arc, catalysis arc, inhibition arc and necessary stimulation arc present in the SBGN-PD formalism (see Figure \ref{fig:SBGNPD-reference-card}).
\item[(ii)]  the maps $\lbr l_i\rbr_{n}$ are valued in $\mathbb{B}[E]$ instead of $\mb{N}[E]$ to accommodate the fact that in a SBGN-PD  visualisation of a biochemical reaction network, the stoichiometry remains absent. Furthermore, the  additive monoid structure of $\mathbb{B}$ has been found to be necessary in introducing  our notion of `macroscope', especially while defining the $\Sigma_{B}$-function (Definition \ref{Defn:Sigma-B-function}).
\end{itemize}

 We were able to suitably adapt the ideas of Baez-Master's Open Petri nets to obtain a symmetrical monoidal double category whose horizontal 1-morphisms are our open process networks. However, the novelty of our compositional framework lies in the following aspects:
\begin{itemize}
\item Connecting the structured cospan-based compositional framework of open Petri nets to the world of SBGN in Systems Biology.
\item Observing how morphisms of process networks can be used as tools to zoom out details in SBGN-PD diagrams (Example \ref{Example:Existence-of-morphisms-of-process-networks}).
\item Observing how horizontal composition of 1-morphisms in ${\rm{Open}}_{\rm{Double}}({\rm{\textbf{Process}}}_n)$ model a compatibility condition between compositionality (via pushouts constructions) and zooming out process (via morphisms of process networks) (Example \ref{Example:Compatibility  between compositionality and zooming-out process}). 
\end{itemize}
Another important contribution of our framework lies in our notion of macroscope of a process network (section \ref{Subsection:Macroscope of a process network: A tool to study the influence of a process network on its environment and vice versa}). In the current era of data proliferation, biochemical pathways employed by biologists (for example, \cite{kanehisa_kegg_2024}) are often extremely large and structurally complex. The abundance of detailed information can obscure the key qualitative features and organizational principles underlying these pathways. This motivates the need for methodological tools that enable a meaningful “zooming out” of biochemical pathways, allowing essential structures and behaviors to be discerned without being overwhelmed by fine-grained detail. The notion of a macroscope introduced in this work contributes toward addressing this need. More concretely, let $\Gamma$ be an SBGN-PD visualization of a large biochemical reaction network. By applying our macroscope to a selected portion
$S$ of $\Gamma$, we can abstract away the internal reaction-level details of $S$ while preserving its functional interface with the remainder of the network. In particular, this construction captures how biochemical entities produced outside $S$ influence reactions within $S$, as well as how entities produced within $S$ affect reactions occurring outside $S$. From the nature of results obtained in this paper, we anticipate that our approach would cast a new light on the \textit{size issue} faced by biologists in studying the behavior of biochemical reaction networks. More elaborately, biologists may use our results 
\begin{itemize}
\item to study a large biochemical reaction network compositionally,
\item to zoom-out unnecessary details in large biochemical reaction pathways using our macroscope and our morphism of process networks.
\end{itemize}
However, in order to really implement our ideas in solving real biological problems, our results need to be translated in a computational environment. It would be interesting to translate our ideas into computational form
using the already existing category theory based computational platforms like AlgebraicJulia \cite{noauthor_algebraicjulia_nodate} and CatColab \cite{catcolab_help}. 

\subsection{Limitations}
Despite the achievements we discussed in the previous subsection, our present state of the work have some limitations which we are going to state next:
\begin{itemize}
\item[(i)] In Section \ref{Section5}, although we demonstrated conversion of SBGN-PD diagrams into our process networks via examples, a formal dictionary between an arbitrary SBGN-PD diagram and process networks is still missing.
\item[(ii)] In the conversion method discussed in Section \ref{Section5}, we could not capture the information of auxillary units, the component entity pool nodes of a complex and the `types' of process nodes (see Remark \ref{Remark:Limitationsoftranslations}). Also, we have not discussed about logic arcs and equivalence arcs.
\item[(iii)] As demonstrated in Example \ref{Example:musclesignalling} and Example \ref{Example:MAPK}, we can express a SBGN-PD as a process network such that we can capture the `type information' of entity pool nodes, the information of the compartment to which an entity pool node belongs,   the `type information' of connecting arcs and the overall `connectivity structure' of the underlying biochemical reaction network. However, the crucial information like compartment, type information of entity pool node and the component entity pool nodes of a complex are not captured in the definition of process network (Definition \ref{ARDefinition: Biochemical process networks}) itself.
\end{itemize}

In light of the above limitations, it is clear that further work is required to develop a fully effective ACT-based compositional theory for SBGN-PD. The contribution of this manuscript represents a step in this direction, demonstrating that ideas of symmetric monoidal double categories can nonetheless provide useful insights for the analysis of large-scale SBGN-PD diagrams. Our choice of using structured cospans based framework over other ACT based compositional framework is motivated by the existing results on Open Petri nets \cite{MR4085076}. 

\subsection{Future directions}

Among many possible  future directions, we mention a few here:
\begin{itemize}
\item To address the current limitations of the framework and to develop more comprehensive compositional tools for SBGN-PD. It may be be useful to see process networks as presheaves.
\end{itemize}

\begin{itemize}
\item Implement the ideas of this paper to develop category theory based computational tools using platforms like AlgebraicJulia \cite{noauthor_algebraicjulia_nodate} and CatColab \cite{catcolab_help} for real applications in studying biochemical reaction pathways compositionally. 
\item To develop formal translation methods (preferably functorial) of our process networks into exisitng formalisms for SBGN-PD like  asynchronous automata networks \cite{rougny2016qualitative}, Hybrid Functional Petri Net (HFPN) \cite{Loewe2011} and textual representations (SBGNtext) \cite{cherdal2018sbgn2hfpn}.
\item In standard practice, the modeling and simulation of biological networks are often treated as separate tasks. Since SBGN-PD provides a visual representation of biochemical reaction networks, the construction of a symmetric monoidal double functor from $${\rm{Open}}_{\rm{Double}}({\rm{\textbf{Process}}}_n)$$ (as in Theorem~\ref{ARMain Theorem1}), to a suitable symmetric monoidal double category whose horizontal 1-morphisms are `open' ODE-based dynamical systems (after assigning appropriate kinetic parameters) will provide a systematic link between modeling and simulation. This will be further strengthened by the fact there are already category theory based computational platforms like AlgebraicJulia \cite{noauthor_algebraicjulia_nodate} and CatColab \cite{catcolab_help} which can simulate structured cospans based models.



\item To relate our process network description of SBGN-PD with the  molecular systems biology inspired rule based-language for modeling interacting agents, called \textit{Kappa} \cite{kappalanguage_manualcite}.

\item Besides SBGN-PD, there are two other SBGN languages, viz. SBGN-AF and SBGN-ER. They are also commonly used for visualising biochemical networks at levels of granularity different from the one explored in this paper. It would be interesting to develop effective ACT-based compostional frameworks for  SBGN-AF and SBGN-ER, and investigate the existence of functorial interrelationships between three complementary SBGN languages viz. SBGN-PD, SBGN-AF and SBGN-ER. It is worth mentioning that recently, our first named author and John C. Baez developed a mathematical framework in \cite{baez2025graphspolarities}, which has been demonstrated in the same paper to be useful in studying SBGN-AF diagrams compositionally. More concretely, in \cite{baez2025graphspolarities},  a structured cospan-based compositional framework for monoid-labeled graphs was introduced and certain specific classes of SBGN-AF diagrams were realized as  special cases (See Example 3.5 in \cite{baez2025graphspolarities}).

\item In this paper, we have not explored how our macroscope behaves with the structured cospans-based compositional framework of process networks. It might be interesting to investigate in this direction.
\end{itemize}

We believe the future directions outlined above make it evident that the present work opens a broad and promising avenue for the development of deep and fruitful connections between SBGN and category theory, with substantial scope for further conceptual and technical advances.

\section{Acknowledgements}

The authors express their gratitude to Heike Siebert for discussions on a preliminary version of this manuscript. They also express their sincere thanks to John C. Baez for various helpful discussions on Applied Category Theory in Biology. They also gratefully acknowledge funding via the Pilot Study 12 — Networked Matter grant of the Life, Light \& Matter Interdisciplinary Faculty of the University of Rostock. The authors thank the anonymous referees for their careful reading of the manuscript and their constructive comments, which helped to improve the clarity and quality of this work.






\bibliographystyle{plainnat}
\bibliography{references}


\end{document}